\documentclass[11pt]{article}
\usepackage{fullpage}
\usepackage{fixltx2e}
\usepackage{amsmath,amsthm}
\usepackage{amssymb}
\usepackage{times}
\usepackage{mathrsfs}
\usepackage[latin1]{inputenc}
\usepackage{framed}
\usepackage[ruled,vlined]{algorithm2e}
\usepackage{multirow}
\usepackage{tikz}
\usetikzlibrary{arrows,automata,shapes,decorations,calc,matrix}
\sloppy
\newcommand{\val}{\mbox{\rm val}}
\newcommand{\cost}{\mbox{\rm r}}
\newcommand{\bb}{\mbox{\rm Num}}

\newcommand{\N}{\ensuremath{{\rm \mathbb N}}}

\newcommand{\Almost}{\mathsf{Almost}}

\newcommand{\Positive}{\mathsf{Positive}}

\newcommand{\Pred}{\mathsf{Pred}}
\newcommand{\Preprocessing}{\mbox{\rm Process}}

\newcommand{\Remove}{\mbox{\rm Remove}}

\newcommand{\act}{A}
\newcommand{\mov}{\Gamma}
\newcommand{\trans}{\delta}
\newcommand{\stra}{\sigma}
\newcommand{\bigstra}{\Sigma}
\newcommand\distr{{\mathcal D}}
\newcommand\pat{\omega}
\newcommand\pats{\Omega}
\newcommand{\game}{G}
\newcommand{\supp}{\mathrm{Supp}}
\newcommand{\dest}{\mathrm{Succ}}
\newcommand{\cala}{{\mathcal A}}
\newcommand{\LimInfAvg}{\mathsf{LimInfAvg}}
\newcommand{\LimSupAvg}{\mathsf{LimSupAvg}}
\def\set#1{\{ #1 \}}
\newcommand{\Safe}{\mathsf{Safe}}
\newcommand{\Reach}{\mathsf{Reach}}

\newcommand{\ov}{\overline}
\newcommand{\wh}{\widehat}
\newcommand{\Mem}{\mathsf{Mem}}
\newcommand{\mem}{\mathsf{m}}
\newcommand{\timedep}{\Theta}

\newcommand{\Allow}{\mbox{\rm Allow}_1}
\newcommand{\Bad}{\mbox{\rm Bad}_2}
\newcommand{\Good}{\mbox{\rm Good}_1}
\newcommand{\ASP}{\mbox{\rm ASP}}

\newcommand{\Aw}{\mathit{Aw}}
\newcommand{\Gd}{\mathit{Gd}}

\newcommand{\wt}{\widetilde}
\newcommand{\ial}{{\sc ImprovedAlgo}}





\makeatletter

\begingroup \catcode `|=0 \catcode `[= 1
\catcode`]=2 \catcode `\{=12 \catcode `\}=12
\catcode`\\=12 |gdef|@xcomment#1\end{comment}[|end[comment]]
|endgroup

\def\@comment{\let\do\@makeother \dospecials\catcode`\^^M=10\def\par{}}

\def\begincomment{\@comment\@xcomment}

\makeatother




 \newtheorem{theorem}{Theorem}

 \newtheorem{lemma}[theorem]{Lemma}
 \newtheorem{remark}[theorem]{Remark}




\begin{document}

\title{
Qualitative Analysis of Concurrent Mean-payoff Games\thanks{
The first author was supported by 
FWF Grant No P 23499-N23,  FWF NFN Grant No S11407-N23 (RiSE), ERC Start grant (279307: Graph Games), and Microsoft faculty fellows award. Work of the second author supported by the Sino-Danish Center for the Theory of Interactive Computation,
funded by the Danish National Research Foundation and the National
Science Foundation of China (under the grant 61061130540). The second author acknowledge support from the Center for research in
the Foundations of Electronic Markets (CFEM), supported by the Danish
Strategic Research Council.}}
\author{Krishnendu Chatterjee \thanks{IST Austria, \tt{krish.chat@ist.ac.at}}\and Rasmus Ibsen-Jensen \thanks{IST Austria, \tt{ribsen@ist.ac.at}}}
\date{}
\maketitle

\begin{abstract}
We consider concurrent games played by two-players on a finite-state 
graph, where in every round the players simultaneously choose a move,
and the current state along with the joint moves determine the successor state.
We study the most fundamental objective for concurrent games, namely, 
mean-payoff or limit-average objective, where a reward is associated to each transition,
and the goal of player~1 is to maximize the long-run average of the rewards,
and the objective of player~2 is strictly the opposite (i.e., the games are
zero-sum).
The path constraint for player~1 could be qualitative, i.e., the mean-payoff 
is the maximal reward, or arbitrarily close to it; or quantitative, i.e., 
a given threshold between the minimal and maximal reward.
We consider the computation of the almost-sure (resp. positive) winning sets,
where player~1 can ensure that the path constraint is satisfied with probability~1 
(resp. positive probability).
Almost-sure winning with qualitative constraint exactly corresponds
to the question of whether there exists a strategy to ensure that the payoff is
the maximal reward of the game.
Our main results for qualitative path constraints are as follows:
(1)~we establish qualitative determinacy results that show that for every state
either player~1 has a strategy to ensure almost-sure (resp. positive) winning 
against all player-2 strategies, or player~2 has a spoiling strategy to falsify 
almost-sure (resp. positive) winning against all player-1 strategies;
(2)~we present optimal strategy complexity results that precisely characterize
the classes of strategies required for almost-sure and positive winning for
both players; and
(3)~we present quadratic time algorithms to compute the almost-sure and the
positive winning sets, matching the best known bound of the algorithms for 
much simpler problems (such as reachability objectives).
For quantitative constraints we show that a polynomial time solution for the 
almost-sure or the positive winning set would imply a solution to a long-standing 
open problem (of solving the value problem of turn-based deterministic mean-payoff games) that is not known 
to be solvable in polynomial time.
\end{abstract}
\section{Introduction}

\noindent{\bf Concurrent games.}
Concurrent games are played by two players (player~1 and player~2) on 
finite-state graphs for an infinite number of rounds.
In every round, both players independently choose moves (or actions), and the 
current state along with the two chosen moves determine the successor state. 
In \emph{deterministic} concurrent games, the successor state is unique;
in \emph{stochastic} concurrent games, the successor state is given by 
a probability distribution.
The outcome of the game (or a \emph{play}) is an infinite sequence of states
and action pairs.
These games were introduced in a seminal work by Shapley~\cite{Sha53}, and 
have been one of the most fundamental and well-studied game models in 
stochastic graph games.
An important sub-class of concurrent games are \emph{turn-based} games,
where in each state at most one player can choose between multiple moves
(if the transition is stochastic we have turn-based stochastic games, 
and if the transition is deterministic we have turn-based deterministic games).

\smallskip\noindent{\bf Mean-payoff (limit-average) objectives.}
The most well-studied objective for concurrent games is the \emph{limit-average}
(or mean-payoff) objective, where a reward is associated to every transition 
and the payoff of a play is the limit-inferior (or limit-superior) average 
of the rewards of the play. 
The original work of Shapley~\cite{Sha53} considered \emph{discounted} sum 
objectives (or games that stop with probability~1); and concurrent stochastic 
games with limit-average objectives (or games that have zero stop 
probabilities) was introduced by Gillette in~\cite{Gil57}.
The player-1 \emph{value} $\val(s)$ of the game at a state $s$ is the 
supremum value of the expectation that player~1 can guarantee for the 
limit-average objective against all strategies of player~2.
The games are zero-sum where the objective of player~2 is the opposite.
Concurrent limit-average games and many important sub-classes have received 
huge attention over the last five decades. 
The prominent sub-classes are turn-based games as restrictions of the 
game graphs, and \emph{reachability} objectives as restrictions of the 
objectives.
A reachability objective consists of a set $U$ of \emph{terminal} states 
(absorbing or sink states that are states with only self-loops),
and the set $U$ is exactly the set of states where out-going transitions 
are assigned reward~1 and all other transitions are assigned reward~0.
Many celebrated results have been established for concurrent limit-average
games and its sub-classes:
(1)~the existence of values (or determinacy or equivalence of switching of 
strategy quantifiers for the players as in von-Neumann's min-max theorem) for 
concurrent discounted games was established in~\cite{Sha53};
(2)~the existence of values (or determinacy) for concurrent reachability games
was established in~\cite{Eve57};
(3)~the existence of values (or determinacy) for turn-based stochastic 
limit-average games was established in~\cite{LigLip69};
(4)~the result of Blackwell-Fergusson established existence of values for
the celebrated game of Big-Match~\cite{BF68}; and
(5)~developing on the results of~\cite{BF68} and Bewley-Kohlberg on Puisuex 
series~\cite{BK76} the existence of values for concurrent limit-average games was 
established in~\cite{MN81}.
The decision problem of whether the value $\val(s)$ is at least a rational 
constant $\lambda$ can be decided in PSPACE~\cite{CMH08,HKLMT11}; and is 
\emph{square-root sum} hard even for concurrent reachability 
games~\cite{EY06}.\footnote{The square-root sum problem is an important problem from computational
geometry, where given a set of natural numbers $n_1,n_2,\ldots,n_k$, 
the question is whether the sum of the square roots exceed an integer $b$. 
The square root sum problem is not known to be in NP.}
The algorithmic question of the value computation has also been studied in 
depth for special classes such as ergodic concurrent games~\cite{HK66} 
(where all states can be reached with probability~1 from all other states);
turn-based stochastic reachability games~\cite{Con92}; and turn-based
deterministic limit-average games~\cite{EM79,ZP96,GKK88,Brim11}.
The decision problem of whether the value $\val(s)$ is at least a rational 
constant $\lambda$ lie in NP $\cap$ coNP both for turn-based stochastic 
reachability games and turn-based deterministic limit-average games.
They are among the rare and intriguing combinatorial problems that lie 
in NP $\cap$ coNP, but are not known to be in PTIME.
The existence of polynomial time algorithms for the above decision questions 
are long-standing open problems.

\smallskip\noindent{\bf Qualitative winning modes.}
In another seminal work, the notion of \emph{qualitative winning} 
modes was introduced in~\cite{dAHK98} for concurrent reachability games.
In qualitative winning modes, instead of the exact value computation 
the question is whether the objective can be satisfied with 
probability~1 (\emph{almost-sure} winning) or with positive 
probability (\emph{positive} winning).
The qualitative analysis is of independent interest and importance in many 
applications (such as in system analysis) where we need to know whether the 
correct behaviour arises with probability~1.
For instance, when analysing a randomized embedded scheduler, we are
interested in whether every thread progresses with 
probability~1~\cite{EMSOFT05}.
Even in settings where it suffices to satisfy certain specifications with 
probability $p<1$, the correct choice of $p$ is a challenging problem, due 
to the simplifications introduced during modelling.
For example, in the analysis of randomized distributed algorithms it is 
quite common to require correctness with probability~1 
(see, e.g., \cite{PSL00,KNP_PRISM00,Sto02b}). 
More importantly it was shown in~\cite{dAHK98} that the qualitative analysis 
for concurrent reachability games can be solved in polynomial time 
(quadratic time for almost-sure winning, and linear time for positive
winning).
Moreover the algorithms were discrete graph theoretic algorithms, 
and the combinatorial algorithms were independent of the precise
transition probabilities. 
Since qualitative analysis is robust to numerical perturbations and modelling 
errors in the transition probabilities, and admits efficient combinatorial 
algorithms for the special case of concurrent reachability games, 
they have been studied in many different contexts such as Markov decision 
processes and turn-based stochastic games with $\omega$-regular objectives~\cite{CJH03,CH11,CH12}; 
pushdown stochastic games with reachability objectives~\cite{EY05,EY06,BBKO11}; 
and partial-observation games with $\omega$-regular objectives~\cite{CDHR07,BGG09,BGB12,CD12,CT12,CCT13,NV13,CDNV14}, 
to name a few. 
However, the qualitative analysis for the very important problem of 
concurrent limit-average games has not been studied before. 
In this work, we consider qualitative analysis of concurrent limit-average 
games. 
We show that the qualitative analysis of concurrent limit-average games 
is significantly different from and more involved than qualitative analysis of concurrent reachability 
games.

\smallskip\noindent{\bf Relevance of concurrent limit-average games.}
Besides the mathematical elegance of concurrent limit-average games, they also 
provide useful modeling framework for system analysis.
Concurrent games are relevant in modeling systems with synchronous interaction 
of components~\cite{AHM00a,AHM01a,AHK02}.
Mean-payoff objectives are widely used for performance measure of systems, 
such as in inventory control~\cite{FV97,Puterman}.
More recently, limit-average objectives have been used to ensure quality in 
synthesis of reactive systems~\cite{BCHJ09,CHJS10}, applied in synthesis of 
concurrent programs~\cite{CCHRS11}, and automata theoretic and 
temporal logic frameworks have been extended with such objectives to specify 
resource consumption requirements of systems~\cite{CDH10,BCHK11,DM12}.
Moreover, the LTL synthesis problem has also been extended with mean-payoff 
objectives~\cite{BBFR13}.
Thus the qualitative analysis problem for concurrent limit-average games is 
a relevant problem for formal analysis of systems.

\smallskip\noindent{\bf Classes of strategies.}
We first classify the various notion of strategies that are relevant for concurrent games.
In general a strategy in a concurrent game, considers the past 
history of the game (the finite sequence of states and actions played so 
far), and specifies a probability distribution over the next actions.
Thus a strategy requires memory to remember the past history of the 
game.
A strategy is \emph{stationary} if it is independent of the past history
and only depends on the current state; 
and a strategy is \emph{positional} if it is stationary and does not 
use randomization.
The complexity of a stationary strategy is described by its \emph{patience}
which is the inverse of the minimum non-zero probability assigned to a move.
The notion of patience was introduced in~\cite{Eve57} and also studied in the
context of concurrent reachability games~\cite{HKM09,HIM11}.
A strategy is \emph{Markov} if it only depends on the length of the play 
and the current state.
An infinite-memory strategy can be of different complexities, e.g., it could be 
implemented by a counter with increments (such as Markov strategies), or it could 
depend in a complicated way on the history such as strategies in Big-Match~\cite{BF68}.
To obtain a finer characterization of infinite-memory strategies we consider
the \emph{time-dependent} memory bound for them, which intuitively captures
the memory requirement as a function of the number of steps of the history.
For an infinite-memory strategy, the time-dependent memory needed is the amount 
of memory required for the first $T$ rounds of the game, for $T>0$.
For example, the time-dependent memory required by a Markov strategy is 
$T$, for all $T>0$.
We first show with an example the difference between concurrent reachability 
games and concurrent limit-average games for qualitative analysis.

\smallskip\noindent{\bf Example.}
Consider the classical game of \emph{matching penny} where in every 
round player~1 and player~2 choose independently between two moves, 
namely, heads and tails, and player~1 wins if the moves of both the
players match in any round. 
The game of matching penny is modelled as a concurrent reachability 
game with two states $s_0$ and $s_1$, where $s_1$ is the terminal 
state. 
In $s_0$, both players choose heads and tails, and if they match the 
successor state is $s_1$, otherwise $s_0$.
A stationary strategy for player~1 that chooses both moves
with equal probability is an almost-sure winning strategy.
Consider a variant of the matching penny game where player~1 wins 
immediately if the matching moves are tails, but if the matching moves
are heads, then player~1 gets a reward of~1 and the game continues.
The classical matching penny game and the variant matching penny game
are shown pictorially in Figure~\ref{fig:intro}.
For every $\epsilon>0$, the stationary strategy for player~1 that 
plays heads with probability $1-\epsilon$ and tails with 
probability~$\epsilon$ is an almost-sure winning strategy
for the objective to ensure that the limit-average payoff is
at least $1-\epsilon$.
For an almost-sure winning strategy for the objective to ensure
that the limit-average payoff is exactly~1, infinite-memory 
strategies are required, and a Markov strategy that in round~$j\geq 0$,
for $2^{j^2}$-steps plays tails with probability $\frac{1}{2^{j}}$ and 
heads with probability $1-\frac{1}{2^{j}}$, and then goes to round $j+1$, 
is an almost-sure winning strategy.
The variant matching penny game cannot be modeled as a 
concurrent reachability game.

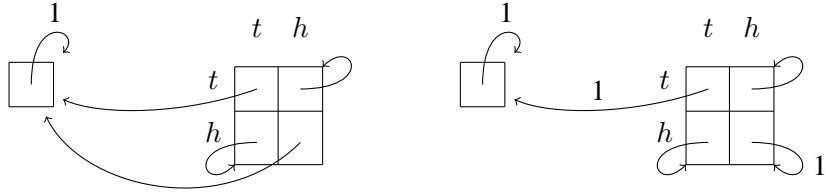
\begin{figure}
\centering
\begin{tikzpicture}[node distance=3cm]
\matrix (v2) [minimum height=1.5em,minimum width=1.5em,matrix of math nodes,nodes in empty cells, left delimiter={.},right delimiter={.}]
{
\\
};
\draw[black] (v2-1-1.north west) -- (v2-1-1.north east);
\draw[black] (v2-1-1.south west) -- (v2-1-1.south east);
\draw[black] (v2-1-1.north west) -- (v2-1-1.south west);
\draw[black] (v2-1-1.north east) -- (v2-1-1.south east);
\matrix (s2) [right of=v2,minimum height=1.5em,minimum width=1.5em,matrix of math nodes,nodes in empty cells, left delimiter={.},right delimiter={.}]
{
&t&h\\
t&&\\
h&&\\
};
\draw[black] (s2-2-2.north west) -- (s2-2-3.north east);
\draw[black] (s2-2-2.south west) -- (s2-2-3.south east);
\draw[black] (s2-3-2.south west) -- (s2-3-3.south east);
\draw[black] (s2-2-2.north west) -- (s2-3-2.south west);
\draw[black] (s2-2-3.north west) -- (s2-3-3.south west);
\draw[black] (s2-2-3.north east) -- (s2-3-3.south east);
\draw[->](v2-1-1.center) .. controls +(90:1) and +(45:1) ..  node[midway,above] (x) {1} (v2);
\draw[->](s2-2-2.center) .. controls +(200:1) and +(-25:1) .. (v2);
\draw[->](s2-2-3.center) .. controls +(0:1) and +(45:1) ..   (s2-2-3);
\draw[->](s2-3-2.center) .. controls +(180:1) and +(-135:1) ..   (s2-3-2);
\draw[->](s2-3-3.center) .. controls +(-135:1.5) and +(-65:1.5) ..   (v2);

\matrix (v1) [right of=s2,minimum height=1.5em,minimum width=1.5em,matrix of math nodes,nodes in empty cells, left delimiter={.},right delimiter={.}]
{
\\
};
\draw[black] (v1-1-1.north west) -- (v1-1-1.north east);
\draw[black] (v1-1-1.south west) -- (v1-1-1.south east);
\draw[black] (v1-1-1.north west) -- (v1-1-1.south west);
\draw[black] (v1-1-1.north east) -- (v1-1-1.south east);
\matrix (s) [right of=v1,minimum height=1.5em,minimum width=1.5em,matrix of math nodes,nodes in empty cells, left delimiter={.},right delimiter={.}]
{
&t&h\\
t&&\\
h&&\\
};
\draw[black] (s-2-2.north west) -- (s-2-3.north east);
\draw[black] (s-2-2.south west) -- (s-2-3.south east);
\draw[black] (s-3-2.south west) -- (s-3-3.south east);
\draw[black] (s-2-2.north west) -- (s-3-2.south west);
\draw[black] (s-2-3.north west) -- (s-3-3.south west);
\draw[black] (s-2-3.north east) -- (s-3-3.south east);
\draw[->](v1-1-1.center) .. controls +(90:1) and +(45:1) ..  node[midway,above] (x) {1} (v1);
\draw[->](s-2-2.center) .. controls +(200:1) and +(-25:1) .. node[midway,above,in=-45] (x) {1} (v1);
\draw[->](s-2-3.center) .. controls +(0:1) and +(45:1) ..   (s-2-3);
\draw[->](s-3-2.center) .. controls +(180:1) and +(-135:1) ..   (s-3-2);
\draw[->](s-3-3.center) .. controls +(0:1) and +(-45:1) ..  node[midway,below,right] (x) {1} (s-3-3);

\end{tikzpicture}
\caption{The classical matching pennies game (left) and our variant matching pennies game (right),
where rewards~$1$ are annotated with the transition and all other rewards are~0.}\label{fig:intro}

\end{figure}

\begin{table}[h]
\begin{center}
  \begin{tabular}{| c | c | c | c | c  | }
    \hline
     & & Exact Qual.& Limit Qual. & Reachability \\
      \hline
    \multirow{4}{40 pt}{\centering Ensuring \\strategy}  & \multirow{2}{40 pt}{\centering Sufficient} & {\bf Markov} & {\bf Stationary} & Stationary  \\ 
             &             & {\bf Time-dep: $T$}   & {\bf Patience $\left(\frac{n\cdot m}{\epsilon}\right)^{n^{O(n)}}$} & Patience $m$ \\ \cline{2-5}
 & \multirow{2}{40 pt}{\centering Necessary} & {\bf Infinite memory} & {\bf Stationary} & Stationary  \\ 
         &           & {\bf Time-dep: $T$} & {\bf Patience $\left(\frac{1}{\epsilon}\right)^{2^{\Omega(n)}}$} & Patience $m$ \\ \hline
    \multirow{4}{40 pt}{\centering Spoiling \\strategy}  &  \multirow{2}{40 pt}{\centering Sufficient} & {\bf Markov}  & {\bf Markov} & Markov  \\ 
              &            & {\bf Time-dep: $T$} & {\bf Time-dep: $T$} & Time-dep: $T$ \\ \cline{2-5}
 & \multirow{2}{40 pt}{\centering Necessary} & {\bf Infinite memory}  & {\bf Infinite memory}   & Infinite memory \\ 
              &            & {\bf Time-dep: $T$} & {\bf Time-dep: $T$} & Time-dep: $T$ \\ \hline
  \end{tabular}
\end{center}
\caption{Strategy complexity for almost-sure winning for exact qualitative, limit qualitative constraints, and reachability objectives in concurrent games,
where $m$ is the number of moves and $n$ is the number of states. The results in boldface are new results included in the present paper.}\label{tab:almost}
\end{table}

\begin{table}[h]
\begin{center}
  \begin{tabular}{| c | c | c | c | c  | }
    \hline
     & & Exact Qual.& Limit Qual. & Reachability \\
      \hline
    \multirow{4}{40 pt}{\centering Ensuring \\strategy}  & \multirow{2}{40 pt}{\centering Sufficient} & {\bf Markov} & {\bf Markov}  & Stationary  \\ 
             &             & {\bf Time-dep: $T$}   & {\bf Time-dep: $T$} & Patience $m$ \\ \cline{2-5}
 & \multirow{2}{40 pt}{\centering Necessary} & {\bf Infinite memory} & {\bf Infinite memory} & Stationary  \\ 
         &           & {\bf Time-dep: $T$} & {\bf Time-dep: $T$} & Patience $m$ \\ \hline
    \multirow{4}{40 pt}{\centering Spoiling \\strategy}  & \multirow{2}{40 pt}{\centering Sufficient} & {\bf Stationary} & {\bf Stationary} & Positional \\ 
              &            & {\bf Patience $m$} & {\bf Patience $m$} & \\ \cline{2-5}
 & \multirow{2}{40 pt}{\centering Necessary} & {\bf Stationary}  & {\bf Stationary}  & Positional \\ 
              &            & {\bf Patience $m$}  & {\bf Patience $m$} &  \\ \hline
  \end{tabular}
\end{center}
\caption{Strategy complexity for positive winning for exact qualitative, limit qualitative constraints, and reachability objectives in concurrent games, 
where $m$ is the number of moves. The results in boldface are new results included in the present paper.}\label{tab:positive}
\end{table}

\smallskip\noindent{\bf Our results.}
First,  note that for limit-average objectives the rewards can be scaled
and shifted, and hence without loss of generality we restrict ourselves to
the problem where the rewards are between~$0$ and~$1$.
We consider three kinds of path constraints (or objectives):
(i)~\emph{exact qualitative constraint} that consists of the set of paths where the 
limit-average payoff is~1;
(ii)~\emph{limit qualitative constraint} that consists of the set of paths with 
limit-average payoff at least $1-\epsilon$, for all $\epsilon>0$; and
(iii)~\emph{quantitative constraint} that consists of the set of paths where the 
limit-average payoff is at least~$\lambda$, for $\lambda \in (0,1)$.
The significance of qualitative constraint are as follows: 
first, they present the most strict guarantee as path constraint;
and second, almost-sure winning with qualitative constraint exactly corresponds
to the question whether there exists a strategy to ensure that the payoff is
the maximal reward of the transitions of the game. 
Our results are as follows:

\begin{enumerate}

\item \emph{Almost-sure winning.}
Our results for almost-sure winning are as follows:
\begin{enumerate} 
\item \emph{(Qualitative determinacy).} 
First we establish (in Section~\ref{subsec1}) 
\emph{qualitative} determinacy for concurrent limit-average games 
for almost-sure winning where we show that
for every state either player~1 has a strategy to ensure almost-sure
winning for both exact qualitative constraint and limit
qualitative constraint against all player-2 strategies; 
or player~2 has a spoiling strategy to ensure that both exact 
qualitative constraint and limit qualitative constraint are violated
with positive probability against all player-1 strategies.
The qualitative determinacy result is achieved by characterizing the 
almost-sure winning set with a discrete combinatorial cubic-time algorithm. 

\item \emph{(Strategy complexity).}
In case of concurrent reachability games, stationary almost-sure winning 
strategies with patience $m$ (where $m$ is the number of moves) 
exist for player~1; and spoiling strategies for player~2 require infinite memory
and Markov strategies are sufficient~\cite{dAHK98}.
In contrast, we show that for exact qualitative path constraint, 
almost-sure winning strategies require infinite memory for player~1 
and Markov strategies are sufficient; whereas the spoiling strategies
require infinite memory for player~2 and Markov strategies are sufficient.
For limit qualitative constraint, we show that for all $\epsilon>0$, 
stationary almost-sure winning strategies exist for player~1,
whereas spoiling strategies for player~2 require infinite memory and 
Markov strategies are sufficient.
We establish asymptotically matching double exponential upper and lower 
bound for the patience required by almost-sure winning strategies 
for limit qualitative constraints.
In all cases where infinite-memory strategies are required we establish
that the optimal (matching upper and lower bound) time-dependent memory bound is $T$, for all $T>0$.
Our results are summarized in Table~\ref{tab:almost} (and the results
are in Section~\ref{subsec2}).

\item \emph{(Improved algorithm).}
Finally we present an improved algorithm for the computation of
the almost-sure winning set of exact and limit qualitative constraint
that uses quadratic time (in Section~\ref{subsec3}). 
Our algorithm matches the bound of the currently best known algorithm 
for the computation of the almost-sure winning set of the special case of  
concurrent reachability games.
\end{enumerate}

\item \emph{Positive winning.} Our results for positive winning are as follows:
\begin{enumerate}
\item \emph{(Qualitative determinacy and algorithm).}
We establish the qualitative determinacy for positive winning;
and our qualitative determinacy characterization already presents a 
quadratic time algorithm to compute the positive winning sets for exact
and limit qualitative path constraints.
Moreover, also for positive winning the exact and limit qualitative path 
constraints winning sets coincide.
The results are presented in Section~\ref{subsec4}.

\item \emph{(Strategy complexity).}
In case of concurrent reachability games, stationary positive winning 
strategies with patience $m$ (where $m$ is the number of moves) 
exist for player~1; and positional (stationary and deterministic) spoiling strategies
exist for player~2~\cite{dAHK98}.
In contrast, we show that positive winning strategies for player~1 both for exact and limit 
qualitative path constraints require infinite memory and Markov strategies are
sufficient, and the optimal time-dependent memory bound is $T$, for all $T>0$.
We also show that (a)~stationary spoiling strategies exist for player~2, 
(b)~they require randomization, and (c)~the optimal bound for patience is $m$.
Our results are summarized in Table~\ref{tab:positive} (and the results
are in Section~\ref{subsec5}).

\end{enumerate}

\item \emph{(Hardness of polynomial computability for quantitative constraints).}
Finally we show (in Section~\ref{Sec:quan}) that for quantitative path 
constraints, both the almost-sure and the 
positive winning problems even for turn-based stochastic mean-payoff games with rewards 
only $\set{0,1}$ are at least as hard as value computation of turn-based 
deterministic mean-payoff games with arbitrary integer rewards.
Thus solving the almost-sure or the positive winning problem with quantitative 
path constraint with boolean rewards in polynomial time would imply the 
solution of a long-standing open problem.
Observe that we show hardness for the almost-sure and positive winning in 
turn-based stochastic boolean reward games with quantitative constraints. 
Note that
(i)~turn-based deterministic boolean reward games with quantitative constraints
can be solved in polynomial time (the pseudo-polynomial time algorithm of~\cite{ZP96}
is polynomial for boolean rewards);
(ii)~almost-sure and positive winning for both turn-based stochastic and concurrent 
reachability games can be solved in polynomial time~\cite{CJH03,dAHK98};
and
(iii)~almost-sure and positive winning with qualitative constraints can be solved
in polynomial time as shown by our results (even for concurrent games). 
Thus our hardness result is tight in the sense that the natural restrictions 
in terms of game graphs, objectives, or qualitative constraints yield polynomial 
time algorithms. 

\end{enumerate}

\smallskip\noindent{\bf Important remarks.} 
Observe that for positive winning our algorithm uses quadratic time, as compared to the
linear time algorithm for positive reachability in concurrent games.
However, for the special case of turn-based deterministic games,
the positive winning set for exact qualitative path constraints 
coincide with the winning set for coB\"uchi games (where the goal is to ensure
that eventually always a set $T$ of states are visited).
The long-standing best known algorithms for turn-based deterministic coB\"uchi games 
uses quadratic time. Turn-based deterministic coB\"uchi games is a special case of positive winning for concurrent 
limit-average games with qualitative path constraints. 
Therefore our algorithm matches the current best known quadratic bound of the 
simpler case.
Finally, our results that for qualitative analysis Markov strategies
are sufficient are in sharp contrast to general concurrent limit-average 
games where Markov strategies are not sufficient (for example in the celebrated
Big-Match game~\cite{BF68}).

\section{Definitions}

In this section we present the definitions of game structures, strategies, 
objectives, winning modes and other basic notions.

\smallskip\noindent{\bf Probability distributions.}
For a finite set~$A$, a {\em probability distribution\/} on $A$ is a
function $\trans\!:A\to[0,1]$ such that $\sum_{a \in A} \trans(a) = 1$.
We denote the set of probability distributions on $A$ by $\distr(A)$. 
Given a distribution $\trans \in \distr(A)$, we denote by $\supp(\trans) = 
\{x\in A \mid \trans(x) > 0\}$ the {\em support\/} of the distribution 
$\trans$.

\smallskip\noindent{\bf Concurrent game structures.} 
A (two-player) {\em concurrent stochastic game structure\/} 
$\game =  (S, \act,\mov_1, \mov_2, \trans)$ consists of the 
following components.

\begin{itemize}

\item A finite state space $S$ and a finite set $\act$ of actions (or moves).

\item Two move assignments $\mov_1, \mov_2 \!: S\to 2^{\act}
	\setminus \emptyset$.  For $i \in \{1,2\}$, assignment
	$\mov_i$ associates with each state $s \in S$ the non-empty
	set $\mov_i(s) \subseteq \act$ of moves available to player $i$
	at state $s$.  
	For technical convenience, we assume that $\mov_i(s) \cap \mov_j(t) = \emptyset$ 
	unless $i=j$ and $s=t$, for all $i,j \in \{1,2\}$ and $s,t \in S$. 
	If this assumption is not met, then the moves can be trivially renamed to satisfy the assumption.

\item A probabilistic transition function
	$\trans\!:S\times\act\times\act \to \distr(S)$, which
	associates with every state $s \in S$ and moves $a_1 \in
	\mov_1(s)$ and $a_2 \in \mov_2(s)$ a probability
	distribution $\trans(s,a_1,a_2) \in \distr(S)$ for the
	successor state.
\end{itemize}
We will denote by $\trans_{\min}$ the minimum non-zero transition 
probability, i.e., $\trans_{\min}=\min_{s,t \in S} \min_{a_1 \in \mov_1(s),a_2\in \mov_2(s)}
\set{\trans(s,a_1,a_2)(t) \mid \trans(s,a_1,a_2)(t)>0}$.
We will denote by $n$ the number of states (i.e., $n=|S|$), and by 
$m$ the maximal number of actions available for a player at a state 
(i.e., $m=\max_{s\in S} \max\set{|\mov_1(s)|,|\mov_2(s)|}$).
For all states $s \in S$, moves $a_1 \in
\mov_1(s)$ and $a_2 \in \mov_2(s)$, we indicate by 
$\dest(s,a_1,a_2) = \supp(\trans(s,a_1,a_2))$ 
the set of possible successors of $s$ when moves $a_1$ and $a_2$
are selected. 
The size of the transition relation of a game structure is defined as
$|\delta|=\sum_{s\in S}\sum_{a_1 \in \mov_1(s)} \sum_{a_2 \in \mov_2(s)} |\dest(s,a_1,a_2)|$.

\smallskip\noindent{\bf Turn-based stochastic games, turn-based deterministic games 
and MDPs.} 
A game structure $\game$ is {\em turn-based stochastic\/} if at every
state at most one player can choose among multiple moves; that is, for
every state $s \in S$ there exists at most one $i \in \{1,2\}$ with
$|\mov_i(s)| > 1$. 
A turn-based stochastic game with deterministic transition function is 
a turn-based deterministic game.
A game structure is a player-2 \emph{Markov decision process (MDP)} if for all 
$s \in S$ we have $|\mov_1(s)|=1$, i.e., only player~2 has choice of 
actions in the game, and player-1 MDPs are defined analogously.

\medskip\noindent{\bf Plays.}
At every state $s\in S$, player~1 chooses a move $a_1\in\mov_1(s)$,
and simultaneously and independently
player~2 chooses a move $a_2\in\mov_2(s)$.  
The game then proceeds to the successor state $t$ with probability
$\trans(s,a_1,a_2)(t)$, for all $t \in S$. 
A {\em path\/} or a {\em play\/} of $\game$ is an infinite sequence
$\pat =\big( (s_0,a^0_1, a^0_2), (s_1, a^1_1, a^1_2), (s_2,a_1^2,a_2^2)\ldots\big)$ of states and action pairs such that for all 
$k\ge 0$ we have (i)~$a^k_1 \in \mov_1(s_k)$ and $a^k_2 \in \mov_2(s_k)$; and (ii)~$s_{k+1} \in \supp(\trans(s_k,a^k_1,a^k_2))$.
We denote by $\pats$ the set of all paths.


\smallskip\noindent{\bf Strategies.}
A {\em strategy\/} for a player is a recipe that describes how to 
extend prefixes of a play.
Formally, a strategy for player $i\in\{1,2\}$ is a mapping 
$\stra_i\!:(S\times \act \times \act)^* \times S \to\distr(\act)$ that associates with every 
finite sequence $x \in (S\times \act \times \act)^*$  of state and action pairs, and the 
current state $s$ in $S$, representing the past history of the game, 
a probability distribution $\stra_i(x \cdot s)$ used to select
the next move. 
The strategy $\stra_i$ can prescribe only moves that are available to player~$i$;
that is, for all sequences $x\in (S \times \act \times \act)^*$ and states $s\in S$, we require that
$\supp(\stra_i(x\cdot s)) \subseteq \mov_i(s)$.  
We denote by $\bigstra_i$ the set of all strategies for player $i\in\{1,2\}$.
Once the starting state $s$ and the strategies $\stra_1$ and $\stra_2$
for the two players have been chosen, 
the probabilities of events are uniquely defined~\cite{VardiP85}, where an {\em
event\/} $\cala\subseteq\pats$ is a measurable set of
paths.
For an event $\cala\subseteq\pats$, we denote by $\Pr_s^{\stra_1,\stra_2}(\cala)$ 
the probability that a path belongs to $\cala$ when the game starts from $s$ and 
the players use the strategies $\stra_1$ and~$\stra_2$.
We will consider the following special classes of strategies:

\begin{enumerate}

\item \emph{Stationary (memoryless) and positional strategies.} 
A strategy $\sigma_i$ is \emph{stationary} (or memoryless) if it is independent of the history 
but only depends on the current state, i.e., for all $x,x'\in(S\times A\times A)^*$ and all $s\in S$, 
we have $\sigma_i(x\cdot s)=\sigma_i(x'\cdot s)$, and thus can be expressed as a function 
$\stra_i: S \to \distr(\act)$. 
For stationary strategies, the complexity of the strategy is described by the 
\emph{patience} of the strategy, which is the inverse of the minimum non-zero 
probability assigned to an action~\cite{Eve57}. 
Formally, for a stationary strategy  $\stra_i:S \to \distr(\act)$ for player~$i$, 
the patience is  
$\max_{s \in S} \max_{a \in \mov_i(s)} \set{ \frac{1}{\stra_i(s)(a)} \mid \stra_i(s)(a)>0}$.
A strategy is \emph{pure (deterministic)} if it does not use randomization, i.e., for any history there is always
some unique action $a$ that is played with probability~1.
A pure stationary strategy $\stra_i$ is also called a {\em positional} strategy, and 
represented as a function $\stra_i: S \to \act$.

\item \emph{Strategies with memory and finite-memory strategies.} 
A strategy $\stra_i$ can be equivalently defined as a pair of functions $(\stra_i^u,\stra_i^n)$,
along with a set $\Mem$ of memory states, such that 
(i)~the next move function $\stra_i^n: S \times \Mem \to \distr(\act)$ given the current state of 
the game and the current memory state specifies the probability distribution over the actions; and
(ii)~the memory update function $\stra_i^u: S \times \act \times \act \times \Mem \to \Mem$ given the current state of the game, the action pairs, 
and the current memory state updates the memory state. 
Any strategy can be expressed with an infinite set $\Mem$ of memory states, and a strategy is 
a {\em finite-memory} strategy if the set $\Mem$ of memory states is finite, otherwise it is an 
{\em infinite-memory} strategy.

\item \emph{Markov strategies.}
A strategy $\stra_i$ is a \emph{Markov strategy} if it only depends on the length of
the play and current state. 
Formally, for all finite prefixes $x, x' \in (S\times \act \times \act)^*$ such that $|x|=|x'|$ (i.e., the 
length of $x$ and $x'$ are the same, where the length of $x$ and $x'$ are the number of 
states that appear in $x$ and $x'$, respectively) and all $s \in S$ 
we have $\stra_i(x\cdot s)=\stra_i(x' \cdot s)$.

\item \emph{Time-dependent memory.} 
Consider a strategy $\stra_i$ with memory $\Mem$. 
For every finite sequence $x \in (S \times \act \times \act)^*$ 
there is a unique memory element $t(x)=\mem \in \Mem$ 
such that after the finite sequence $x$ the current memory state is $\mem$
(note that the memory update function is a deterministic function).
For a time bound $T$, the time-dependent memory of the strategy $\stra_i$, 
is the size of the set of memory elements used for histories upto length
$T$, i.e., 
$|\set{\mem \in \Mem \mid \exists x \in (S\times \act \times \act)^*, |x|\leq T, 
t(x) =\mem }|$.
Formally, the time-dependent memory for an infinite-memory strategy $\stra_i$ is a 
function $\timedep_{\stra_i}: \N \to \N$ such that 
$\timedep_{\stra_i}(T)= |\set{\mem \in \Mem \mid \exists x 
\in (S\times \act \times \act)^*, |x|\leq T, t(x) =\mem }|$.
Note that a Markov strategy can be played with time-dependent memory of 
size $T$, for all $T>0$, i.e., for a Markov strategy $\stra$ we have 
$\timedep_{\stra}(T)=T$ for all $T>0$.
A trivial upper bound on the time-dependent memory of a strategy is 
$(|S|\cdot |A|\cdot |A|)^T$, for all $T \geq 0$, 
i.e., for all strategies $\stra$ we have 
$\timedep_{\stra}(T) \leq (|S|\cdot |A|\cdot |A|)^T$, for all $T \geq 0$.

\end{enumerate}

\smallskip\noindent{\bf Repeated games with absorbing states.}
A state $s$ is {\em absorbing} if for all actions $a_1\in \Gamma_1(s)$ and all actions $a_2\in \Gamma_2(s)$ we have $\dest(s,a_1,a_2)=\{s\}$. A game is a {\em repeated game with absorbing states}, as defined by Kohlberg~\cite{kohlberg74}, if all states, other than one special state $s^*$, are absorbing. In the present paper all absorbing states will only have a single action for each player. Once an absorbing state is reached, no strategy will need memory. In a repeated game with absorbing states, updates of memory will therefore only happen in state $s^*$ and implies that the play has only been in state $s^*$ since the start of the play. We will therefore write $\stra_i^u(s^*,a_1,a_2,\mem)$ as $\stra_i^u(a_1,a_2,\mem)$.

\smallskip\noindent{\bf Objectives.} 
An objective $\Phi \subseteq \pats$ is a measurable subset of paths. In this 
work we will consider \emph{limit-average} (or mean-payoff) objectives.
We will consider concurrent games with a \emph{boolean} reward function $\cost: S \times \act \times 
\act \to \set{0,1}$ that assigns a reward value $\cost(s,a_1,a_2)$ for all 
$s\in S$, $a_1 \in \mov_1(s)$ and $a_2 \in \mov_2(s)$ (see Remark~\ref{rem:rew} for 
general real-valued reward functions\footnote{We consider boolean rewards for simplicity in presentation,
and in Remark~\ref{rem:rew} we argue how the results extend to general rewards.}).
For a path $\pat= \big((s_0, a^0_1, a^0_2), (s_1, a^1_1,a^1_2), \ldots\big)$, 
the limit-inferior average (resp. limit-superior average) is defined as 
follows:
\[
\LimInfAvg(\pat)= \lim\inf_{n \to \infty} \frac{1}{n} \cdot \sum_{i=0}^{n-1} \cost(s_i,a^i_1,a^i_2); 
\quad
\LimSupAvg(\pat)= \lim\sup_{n \to \infty} \frac{1}{n} \cdot \sum_{i=0}^{n-1} \cost(s_i,a^i_1,a^i_2).
\]
For a threshold $\lambda \in [0,1]$ we consider the following objectives:

\begin{align*}
&\LimInfAvg(\lambda) = \set{\pat \mid \LimInfAvg(\pat) \geq \lambda}; 
&\quad \LimSupAvg(\lambda) = \set{\pat \mid \LimSupAvg(\pat) \geq \lambda};\\
&\overline{\LimInfAvg}(\lambda) = \set{\pat \mid \LimInfAvg(\pat) < \lambda}; 
&\quad \overline{\LimSupAvg}(\lambda) = \set{\pat \mid \LimSupAvg(\pat) < \lambda};\\
&\overline{\LimInfAvg}_{\leq}(\lambda) = \set{\pat \mid \LimInfAvg(\pat) \leq \lambda}; 
&\quad \overline{\LimSupAvg}_{\leq}(\lambda) = \set{\pat \mid \LimSupAvg(\pat) \leq \lambda}.\\
\end{align*}

For the analysis of concurrent games with boolean limit-average objectives 
we will also need \emph{reachability} and \emph{safety} objectives.
Given a target set $U\subseteq S$, the reachability objective $\Reach(U)$ requires 
some state in $U$ be visited at least once, i.e., defines the set
$\Reach(U)= \set{\pat=\big((s_0, a^0_1, a^0_2), (s_1, a^1_1,a^1_2), \ldots\big) 
\mid \exists i \geq 0. s_i \in U}$ of paths.
The dual safety objective for a set $F\subseteq S$ of safe states requires that the 
set $F$ is never left, i.e., 
$\Safe(F)= \set{\pat=\big((s_0, a^0_1, a^0_2), (s_1, a^1_1,a^1_2), \ldots\big)
\mid \forall i \geq 0. s_i \in F}$.
Observe that reachability objectives are a very special case of 
boolean reward limit-average objectives where states in $U$ 
are absorbing and are exactly the states with reward~1, and similarly
for safety objectives.


\smallskip\noindent{\bf $\mu$-calculus, complementation, and levels.} 
Consider a $\mu$-calculus expression $\Psi = \mu X. \psi(X)$ over a
finite set $S$, where $\psi: 2^S \mapsto 2^S$ is monotonic.
The least fixpoint $\Psi = \mu X. \psi(X)$ is equal
to the limit $\lim_{k \to \infty} X_k$, where $X_0 = \emptyset$,
and $X_{k+1} = \psi(X_k)$.  
For every state $s \in \Psi$, we define the {\em level\/} $k \geq 0$
of $s$ to be the integer such that $s \not\in X_k$ and $s \in X_{k+1}$.  
The greatest fixpoint $\Psi = \nu X. \psi(X)$ is
equal to the limit $\lim_{k \to \infty} X_k$, where $X_0 = S$, and
$X_{k+1} = \psi(X_k)$.   
For every state $s \not\in \Psi$, we define the {\em level\/} $k \geq 0$ of
$s$ to be the integer such that $s \in X_k$ and $s \not\in X_{k+1}$.  
The {\em height\/} of a $\mu$-calculus expression 
$\gamma X. \psi(X)$, where $\gamma \in \set{\mu, \nu}$, 
is the least integer $h$ such that $X_h = \lim_{k \to \infty} X_k$.
An expression of height $h$ can be computed in $h+1$ iterations. 
Given a $\mu$-calculus expression $\Psi=\gamma X. \psi(X)$, where 
$\gamma \in \set{\mu, \nu}$, the complement $\neg \Psi = (S \setminus \Psi)$ of
$\gamma$ is given by 
$\overline{\gamma} X. \neg \psi (\neg X)$, 
where $\overline{\gamma} = \mu$ if $\gamma = \nu$, and 
$\overline{\gamma} = \nu$ if $\gamma= \mu$.

\smallskip\noindent{\bf Almost-sure and positive winning sets.} 
Given an objective $\Phi$, the {\em almost-sure  winning set} for player~1 for
the objective $\Phi$, denoted as $\Almost_1(\Phi)$, is the set of 
states such that there exists a strategy (referred to as almost-sure winning strategy) 
for player~1 to ensure that the objective is satisfied with probability~1 (almost-surely) 
against all strategies of the opponent.
The {\em positive winning set}, denoted $\Positive_1(\Phi)$, requires that
player~1 can ensure that the probability to satisfy $\Phi$ is positive.
Formally we have 
\begin{itemize}
\item $\Almost_1(\Phi)=\set{s \mid \exists \stra_1. \forall \stra_2. \Pr_s^{\stra_1,\stra_2}(\Phi)=1}$; and
\item $\Positive_1(\Phi)=\set{s \mid \exists \stra_1. \forall \stra_2. \Pr_s^{\stra_1,\stra_2}(\Phi)>0}$.
\end{itemize}
The almost-sure and positive winning sets $\Almost_2$ and $\Positive_2$ for player~2 are 
obtained analogously, as above, by switching the roles of player~1 and player~2, respectively.

\section{Almost-sure Winning}\label{sec:almost}
In this section we will present three results: (1)~establish 
\emph{qualitative} determinacy for almost-sure winning; 
(2)~establish the strategy complexity for almost-sure winning;
and (3)~finally present an improved algorithm to compute the almost-sure
winning set;
for exact and limit qualitative constraints in concurrent games.

\subsection{Qualitative determinacy}\label{subsec1}
We will establish the qualitative determinacy results through a polynomial time 
algorithm to compute the set 
$\Almost_1(\LimInfAvg(\lambda))$ and $\Almost_1(\LimSupAvg(\lambda))$ for $\lambda=1$
in concurrent games with boolean rewards. 
To present our algorithm we first define a three-argument predecessor operator 
$\ASP(X,Y,Z)$ and then give our algorithm as a $\mu$-calculus formula 
with the predecessor operator.

\smallskip\noindent{\bf Predecessor operator.}
Consider sets $X,Y,Z \subseteq S$ such that $Y \subseteq Z \subseteq X$; and 
we consider the following three sets of actions: 
\begin{enumerate}
\item{} $\Allow(s,X)=\set{a_1 \in \Gamma_1(s) \mid \forall a_2\in \Gamma_2(s) : \dest(s,a_1,a_2)\subseteq X}$;
\item{} $\Bad(s,X,Y)=\set{a_2 \in \Gamma_2(s) \mid \exists a_1 \in \Allow(s,X) : \dest(s,a_1,a_2)\cap Y \neq \emptyset}$; and
\item{} $\Good(s,X,Y,Z)=\set{a_1\in \Allow(s,X) \mid \forall a_2\in (\Gamma_2(s)\setminus\Bad(s,X,Y)): \dest(s,a_1,a_2)\subseteq Z\wedge \cost(s,a_1,a_2)=1}$.
\end{enumerate}
The intuitive description of the action sets are as follows: 
(i)~the set $\Allow(s,X)$ consists of all actions for player~1, such that against all 
actions of player~2 the set $X$ is not left; 
(ii)~the set $\Bad(s,X,Y)$ is the set of player~2 actions $a_2$, such that there 
is a player~1 action $a_1$ in $\Allow(s,X)$ such that given $a_1$ and $a_2$ 
the set $Y$ is reached in one-step from $s$ with positive probability; 
and 
(iii)~$\Good(s,X,Y,Z)$ is the set of actions for player~1 in $\Allow(s,X)$ such that for all
actions for player~2 that are not in $\Bad(s,X,Y)$ the next state is in $Z$
and the reward is~1. 
The set $\ASP(X,Y,Z)$ is the set of states where $\Good(s,X,Y,Z)$ is 
non-empty, i.e., $\ASP(X,Y,Z)=\set{s \mid \Good(s,X,Y,Z)\neq \emptyset}$
(the word $\ASP$ is an acronym for allow-stay-progress, i.e., (i)~it allows 
the play to remain in $X$ forever, and either (ii)~progress to $Y$ with positive probability 
or (iii)~stay in $Z$ and get reward~1 with high probability).
Let $X^*= \nu X. \mu Y. \nu Z. \ASP(X,Y,Z)$ be the fixpoint. 
We will show that 
\[
X^*=\Almost_1(\LimInfAvg(1)) = \Almost_1(\LimSupAvg(1)).
\]
Moreover we will show that 
$X^*= \bigcap_{\epsilon>0}  \Almost_1(\LimInfAvg(1-\epsilon)) 
    = \bigcap_{\epsilon>0}  \Almost_1(\LimSupAvg(1-\epsilon))$
(see Theorem~\ref{thm:almost}).
In the following two lemmas we establish that 
$X^*\subseteq \Almost_1(\LimInfAvg(1)) \subseteq  \Almost_1(\LimSupAvg(1))$ as follows:
(1)~in the first lemma we show that for all $\epsilon>0$ there 
is a stationary strategy to ensure that from all states in $X^*$ 
that the limit-inferior mean-payoff is at least $1-\epsilon$ 
and the set $X^*$ is never left; 
and (2)~in the second lemma we use the stationary strategies of the first
lemma repeatedly to construct a Markov almost-sure winning strategy.

\begin{lemma}\label{lemm:almost1}
For all $\epsilon>0$, there exists a stationary strategy 
$\stra_1^\epsilon$ with patience at most $(\frac{n \cdot m}{\delta_{\min}\cdot \epsilon})^{n^{n+2}}$, such that for all strategies $\stra_2$ and all 
$s \in X^*$ we have 
$\Pr_s^{\stra_1^\epsilon,\stra_2}(\LimInfAvg(1-\epsilon) \cap \Safe(X^*))=1$.
\end{lemma}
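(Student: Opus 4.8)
The plan is to read a ranking off the least fixpoint and play the corresponding ``good'' action with overwhelming probability, hedging only within $\Allow$. Since $X^*$ is a fixpoint of the outer $\nu X$, it equals $\mu Y.\, \nu Z.\, \ASP(X^*,Y,Z)$; unfolding the $\mu Y$ iteration decomposes $X^* = \bigcup_{k=0}^{r-1} L_k$ into levels, where $L_k$ is the set of states of level $k$, $Y_k = L_0 \cup \dots \cup L_{k-1}$ collects the strictly smaller levels, $Z_k = Y_{k+1}$ is the inner greatest fixpoint at level $k$, and $r \le n$. Each $s \in L_k$ satisfies $\Good(s,X^*,Y_k,Z_k) \ne \emptyset$, so I fix a good action $g(s)$ in this set. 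The strategy $\stra_1^\epsilon$ plays, at a state $s \in L_k$, the action $g(s)$ with probability $1-x_k$ and spreads the remaining mass $x_k$ uniformly over the other actions of $\Allow(s,X^*)$, for hedging parameters $x_0,\dots,x_{r-1}$ fixed later. Because every action in the support lies in $\Allow(s,X^*)$, every successor stays in $X^*$ regardless of player~2; hence $\Pr_s^{\stra_1^\epsilon,\stra_2}(\Safe(X^*))=1$ for all $\stra_2$, and only the reward needs control.

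For the reward I would first fix $\stra_1^\epsilon$, turning the arena into a player-2 MDP in which player~2 only wants to minimise the limit-average reward. By the end-component theory of mean-payoff MDPs, under any $\stra_2$ almost every play eventually stays inside an end-component whose limit-average is attained there by a positional strategy; thus it suffices to show that every bottom strongly connected component (BSCC) $C \subseteq X^*$ of the chain induced by an arbitrary positional $\stra_2$ has average reward at least $1-\epsilon$. Writing $\mu$ for the stationary distribution on $C$ and $B = \{s \in C : \stra_2(s) \in \Bad(s,X^*,Y_{\ell(s)})\}$ for the set of ``bad'' states, the whole task reduces to proving $\mu(B) \le \epsilon/2$: at a non-bad state $s$ the action $g(s)$ meets a non-bad response, so $\cost(s,g(s),\stra_2(s))=1$ and reward $0$ occurs only through the tiny hedging probability $x_{\ell(s)}$, while the bad states carry all the remaining reward loss.

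The bound on $\mu(B)$ comes from the level dynamics. By the definition of $\Good$ and $Z_k=Y_{k+1}$, the good action against a non-bad response never raises the level; conversely, at a bad state $s$ of level $k$ the defining witness of $\Bad$ is an action of $\Allow(s,X^*)$ whose play reaches $Y_k$ (a strictly smaller level) with probability at least $\delta_{\min}$, and $\stra_1^\epsilon$ plays it with probability at least $x_k/m$, so each bad state forces a strict level drop with probability at least $(x_k/m)\,\delta_{\min}$. The level can only rise through hedging (probability at most $x_j$ at level $j$) or through $g(s)$ meeting a bad response; since the level is bounded on $C$ and $\mu$ is invariant its expected one-step drift is $0$, which balances the upward transitions against the forced drops. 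A minimal-level state of $C$ can never be bad, since catching it would place a still smaller level inside the BSCC; propagating this level by level, the forced down-flow out of $B$ must be matched by an up-flow assembled from the hedging leakages, and choosing the $x_k$ so that the leakage entering each level is dwarfed by the catching at the level below pins $\mu(B)$ below $\epsilon/2$.

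The main obstacle is exactly this calibration. To make the geometric cascade of up-crossing contributions sum below $\epsilon/2$ the parameters must shrink extremely fast down the levels, roughly $x_{k-1} \approx \bigl(x_k\,\delta_{\min}/(nm)\bigr)^{n}$ starting from $x_{r-1}\approx \epsilon\,\delta_{\min}/(nm)$, so that the smallest probability used is of order $\bigl(\delta_{\min}\epsilon/(nm)\bigr)^{n^{\Theta(n)}}$. This doubly-exponential decay is what produces the claimed patience $\bigl(\frac{n\cdot m}{\delta_{\min}\cdot\epsilon}\bigr)^{n^{n+2}}$. With every BSCC shown to have average reward at least $1-\epsilon$, the end-component reduction gives $\LimInfAvg(1-\epsilon)$ almost surely against every $\stra_2$, which together with the safety guarantee yields $\Pr_s^{\stra_1^\epsilon,\stra_2}\bigl(\LimInfAvg(1-\epsilon)\cap\Safe(X^*)\bigr)=1$.
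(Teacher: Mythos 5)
Your proposal is correct, and it reaches the lemma by a genuinely different analysis from the paper's, even though the strategy you construct is essentially the same one (level-indexed hedging supported inside $\Allow(s,X^*)$ with doubly-exponentially shrinking parameters; playing one fixed good action $g(s)$ rather than uniformly over $\Good(s,X^*,Y_k,Y_{k+1})$ is immaterial). Both proofs get $\Safe(X^*)$ for free from the support condition and both first reduce to positional counter-strategies by viewing the fixed stationary strategy as inducing a player-2 MDP. From there the paper argues by induction down the levels of the $\mu Y$ iteration with an expected-hitting-time/renewal argument: against any positional $\sigma_2$, either the play keeps reaching states where $\sigma_2$ plays in $\Bad$, in which case $Y_{\ell-j}$ is hit within $\epsilon/\beta_{j+1}$ expected steps, or it cannot, in which case good actions give reward~1 with probability $1-\beta_j$ and a renewal ratio bounds the density of 0-rewards by $\epsilon$; since $Y_0=\emptyset$ the first alternative cannot persist. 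You instead pass to the BSCCs of the induced chain and run a stationary flow-balance argument: bad states at level $k$ are caught (forced strictly down) at rate at least $x_k\delta_{\min}/m$, upward threshold crossings come only from hedging at non-bad states or from bad states at lower levels, and the minimal level of a BSCC contains no bad state, so $\mu(B)$ can be pinned below $\epsilon/2$. This is sound, with one point your sketch compresses: the up-flow across a threshold contains the \emph{full} stationary mass of lower-level bad states (coefficient~1, not a hedging factor), so the inductive bound you maintain on the bad mass at a level must itself be of the order of the hedging parameter of the next level up (e.g.\ $\mu(B_{k_j})\le x_{k_j}$ under a cascade such as $x_{k-1}\le x_k^2\,\delta_{\min}/(2nm)$), not merely $\epsilon/(2n)$ --- a weaker inductive invariant does not close, and this is exactly what forces the doubly-exponential decay you describe; your stronger cascade $(x_k\delta_{\min}/(nm))^n$ certainly suffices and stays within the stated patience bound. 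As for what each route buys: yours is a stationary/ergodic argument that is conceptually cleaner and localizes all reward loss in $\mu(B)$, but it leans on end-component theory of mean-payoff MDPs (under any strategy the play a.s.\ settles in an end component whose minimal value is realized by a positional BSCC, and within such a component the limit average is a.s.\ at least that value); the paper's hitting-time induction is more elementary and self-contained, needing only positional optimality in MDPs and direct expectation bounds.
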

\begin{proof}
We first analyse the computation of $X^*$. We have 
\[
X^*= \mu Y. \nu Z. \ASP(X^*,Y,Z);
\]
this is achieved by simply replacing $X$ with $X^*$ in the $\mu$-calculus expression $\nu X. \mu Y. \nu Z. \ASP(X,Y,Z)$,
then getting rid of the outer-most $\nu$ quantifier, and evaluating the rest of the $\mu$-calculus expression.
Since $X^*$ is a fixpoint we have  $X^*= \mu Y. \nu Z. \ASP(X^*,Y,Z)$.
Thus the computation of $X^*$ is achieved as follows: 
we have $Y_0=\emptyset$ and $Y_{i+1}= \nu Z. \ASP(X^*,Y_i,Z)$. 
Let $\ell$ be the smallest number such that 
$Y_{\ell}= \nu Z. \ASP(X^*,Y_\ell,Z)$, and we have $Y_{\ell}=X^*$. 
For a state $s \in X^*$, let 
$\Aw(s)=|\Allow(s,X^*)|$ denote the size of the set of allowable actions;
and for $j\geq 0$, let $\Gd(s,j)=|\Good(s,X^*,Y_{\ell-j-1},Y_{\ell-j})|$
denote the size of the set of good actions for the triple $X^*,Y_{\ell-j-1},Y_{\ell-j}$.

Fix $\epsilon>0$. 
The desired strategy $\sigma_{1}^\epsilon$ will be constructed as a finite sequence of strategies 
$\sigma_1^{\epsilon,1},\sigma_1^{\epsilon,2},\dots\sigma_1^{\epsilon,\ell}$. 
We start with the definition of $\sigma_{1}^{\epsilon,1}$, and for a state $s$ and action $a \in \mov_1(s)$ 
we have the following:
if $s \in (Y_{\ell} \setminus Y_{\ell-1})$, then \begin{equation*}
\sigma_1^{\epsilon,1}(s)(a)=
\begin{cases}
\frac{1-\epsilon}{\Gd(s,0)}   & \text{$a\in \Good(s,X^*,Y_{\ell-1},Y_{\ell})$ } 
\text{and $\Aw(s) \neq \Gd(s,0) $}\\[2ex]
\frac{1}{\Gd(s,0)} & \text{$a\in \Good(s,X^*,Y_{\ell-1},Y_{\ell})$ } 
\text{and  $\Aw(s) = \Gd(s,0) $}\\[2ex]
\frac{ \epsilon}{\Aw(s) -\Gd(s,0)}  &  \text{$a\in (\Allow(s,X^*)\setminus \Good(s,X^*,Y_{\ell-1},Y_{\ell}))$}\\[2ex]
0 &  \text{$a\not \in \Allow(s,X^*)$} 
\end{cases}\end{equation*}
and if $s \not\in (Y_{\ell}\setminus Y_{\ell-1})$, then $\sigma_1^{\epsilon,1}(s)$ is an arbitrary probability distribution over $\mov_1(s)$.
For $j> 1$, let $\beta_{j}= n^{-\frac{n^{j-1}-1}{n-1}}\cdot (\frac{m}{\delta_{\min}})^{-\frac{n^{j}-1}{n-1}+1}\cdot \epsilon^{\frac{n^{j}-1}{n-1}}$.
We now define $\sigma_1^{\epsilon,j}$, for $j>1$.
For a state $s$ and action $a \in \mov_1(s)$, we have the following:
if $s \in (Y_{\ell-j} \setminus Y_{\ell-j-1})$,
\begin{equation*}
\sigma_1^{\epsilon,j}(s)(a)=
\begin{cases}
\frac{1 -\beta_j}{\Gd(s,j)}  & \text{$a\in \Good(s,X^*,Y_{\ell-j-1},Y_{\ell-j})$ } 
\text{and $\Aw(s)\neq \Gd(s,j) $}\\[2ex]
\frac{1}{\Gd(s,j)} & \text{$a\in \Good(s,X^*,Y_{\ell-j-1},Y_{\ell-j})$ } 
\text{and $\Aw(s) = \Gd(s,j)$}\\[2ex]
\frac{ \beta_j}{\Aw(s) -\Gd(s,j)}  &  \text{$a\in (\Allow(s,X^*)\setminus \Good(s,X^*,Y_{\ell-j-1},Y_{\ell-j}))$}\\[2ex]
0 &  \text{$a\not \in \Allow(s,X^*)$} 
\end{cases}\end{equation*}
and if $s \not\in (Y_{\ell-j}\setminus Y_{\ell-j-1})$, then 
$\sigma_1^{\epsilon,j}(s)(a) = \sigma_1^{\epsilon,j-1}(s)(a)$. 
The strategy $\sigma^\epsilon_1$ is then $\sigma^{\epsilon,\ell}_1$.

\smallskip\noindent{\em Bounds on patience.}
Observe that the patience is at most $(\frac{n\cdot m}{\delta_{\min}\cdot \epsilon})^{n^{n+2}}$, because that is a bound on the inverse of $\beta_\ell$, by definition (note that $\ell$ is at most $n$).

We will now show that $\sigma^\epsilon_1$ has the desired properties to ensure
the safety and limit-average objectives.

\smallskip\noindent{\em Ensuring safety.}
First observe that the strategy $\sigma_1^\epsilon$ never plays actions not in $\Allow(s,X^*)$, for 
states $s \in X^*$. For all actions $a_1\in \Gamma_1(s)$, if there is an action $a_2\in \Gamma_2(s)$ such that $\dest(s,a_1,a_2) \cap (S\setminus X^*)
\neq \emptyset$, then $a_1$ does not belong to $\Allow(s,X^*)$ and hence 
is played with probability~0 (at $s$ for all $s \in X^*$). This implies that for all $s' \in X^*$ and for all 
strategies $\sigma_2$ we have that 
$\Pr_{s'}^{\sigma_1^\epsilon,\sigma_2}(\Safe(X^*))=1$.
Hence the safety property is guaranteed.

\smallskip\noindent{\em Ensuring $\LimInfAvg(1-\epsilon)$.}
We now focus on the mean-payoff objective. Since the strategy $\sigma_1^\epsilon$ is a stationary 
strategy, fixing the strategy $\sigma_1^\epsilon$ for player~1, we obtain an 
MDP for player~2. 
In MDPs, there exist optimal positional strategies for the player 
to minimize mean-payoff objectives~\cite{LigLip69}. 
Hence we only focus on positional strategies as counter 
strategies for player~2 against $\sigma_1^\epsilon$.

We will show the following by induction on $j$: for all positional strategies $\sigma_2$ 
for player~2, for all $s \in (X^*\setminus Y_{\ell-j})$ one of the following two properties hold: 
either (1)~the set $Y_{\ell-j}$ is reached within at most 
$\frac{\epsilon}{\beta_{j+1}}$ steps in expectation;  
or 
(2)~we have 
\begin{equation}\label{eqt: get more than epsilon}
\textstyle{\Pr_s^{\sigma_1^{\epsilon,j},\sigma_2}}(\LimInfAvg(1-\epsilon))=1.
\end{equation}
We present the inductive proof now.

\smallskip\noindent{\em Base case.}
First the base case, $j=1$. Let $s\in (Y_{\ell}\setminus Y_{\ell-1})=(X^*\setminus Y_{\ell-1})$. 
Consider $\sigma_1^{\epsilon,1}$ and a positional strategy $\sigma_2$ for player~2.
After fixing both the chosen strategies, since both the strategies are stationary we obtain a Markov chain.
Let the random variable indicating the play from $s$ in the Markov chain be denoted as $P$.
There are now two cases. 
\begin{itemize}
\item 
We consider the case when the play $P$ enters a state $s_1$ from which no state $s_2$ can be reached, where $\sigma_2$ plays an action in 
$\Bad(s_2,X^*,Y_{\ell-1})$.
Then once $s_1$ is reached, for any state $s_3$ that appears after $s_1$ we have that 
$\sigma_1^{\epsilon,1}$ plays some action in $\Good(s_3,X^*,Y_{\ell-1},Y_{\ell})$ 
with probability $1-\epsilon$ (and hence get a payoff of~1). Hence $P$ satisfies Equation~\ref{eqt: get more than epsilon} in this case. 

\item In the other case the play $P$ can always reach a state $s_2$ such that $\sigma_2$ plays an action in 
$\Bad(s_2,Y_{\ell},Y_{\ell-1})=\Bad(s_2,X^*,Y_{\ell-1})$ 
and we can therefore enter a state in $s_3\in Y_{\ell-1}$ with probability at least $\frac{\epsilon\cdot \delta_{\min}}{m}$ from $s_2$.
Since $\frac{\epsilon\cdot \delta_{\min}}{m}$ is a lower bound on the smallest positive probability in the Markov chain, 
any state that can be reached is actually reached within at most $n$ steps with  probability at least $(\frac{\epsilon\cdot \delta_{\min}}{m})^n$. 
Hence the probability to reach a state in $Y_{\ell-1}$ within at most $n$ steps  is at least $(\frac{\epsilon\cdot \delta_{\min}}{m})^n$. 
Therefore we need at most $n\cdot (\frac{m}{\epsilon\cdot \delta_{\min}})^{n}=\frac{\epsilon}{\beta_2}$ 
steps in expectation to reach $Y_{\ell-1}$.

\end{itemize}

\smallskip\noindent{\em Inductive case.} 
We now consider the inductive case for $j>1$, and the argument is similar to the base case. 
Let $s\in (X^*\setminus Y_{\ell-j})$. As above we fix a positional strategy $\sigma_2$ for 
player~2 and consider the Markov chain induced by $\sigma_1^\epsilon$ and $\sigma_2$, and 
denote the random variable for a play from $s$ in the Markov chain as $P$.
We have two cases.
\begin{itemize}
\item 
We consider the case when the play $P$ enters a state $s_1$ from which no state $s_2$ can be reached, where $\sigma_2$ plays an action 
in $\Bad(s_1,X^*,Y_{\ell-j})$.
Hence once $s_1$ is reached along $P$, no state in $Y_{\ell-j}$ can be reached along the play.
We can view the states in $(Y_{\ell}\setminus Y_{\ell-j+1})$ as either 
(i)~already satisfying the desired Equation~\ref{eqt: get more than epsilon} by the inductive hypothesis, 
or (ii)~else entering a state in $(Y_{\ell-j+1}\setminus Y_{\ell-j})$ (no state in $Y_{\ell-j}$ can be reached) 
after having giving payoff at least~0 for at most $\frac{\epsilon}{\beta_j}$ time 
steps in expectation (by the inductive hypothesis). 
But in all states $s_3$ in $(Y_{\ell-j+1}\setminus Y_{\ell-j})$ that the play $P$ visits 
after entering the state $s_1$, the strategy $\sigma_1^{\epsilon,0}$ chooses an 
action in $\Good(s_3,X^*,Y_{\ell-j},Y_{\ell-j+1})$ with probability  
$1-\beta_j$; 
and hence from $s_3$ the play $P$ enters another state in $(Y_{\ell-j+1}\setminus Y_{\ell-j})$ 
and get payoff~1. 
If we do not get payoff~1 in any state in $(Y_{\ell-j+1}\setminus Y_{\ell-j})$, 
we expect to get at most $\frac{\epsilon}{\beta_j}$ times payoff~0 
and then again enter some state in $(Y_{\ell-j+1}\setminus Y_{\ell-j})$. 
Hence the probability that any given payoff is~0 is at most 
\[\frac{\beta_j\cdot \frac{\epsilon}{\beta_j}}{1-\beta_j+\beta_j\cdot \frac{\epsilon}{\beta_j}}\leq \epsilon\] 
and hence the play satisfies Equation~\ref{eqt: get more than epsilon}.

\item In the other case the play $P$ can always reach a state $s_2$ 
such that $\sigma_2$ plays an action in $\Bad(s_2,X^*,Y_{\ell-j})$ 
and we can therefore enter a state in $s_3\in Y_{\ell-j}$ with probability at least 
$\kappa_j=\frac{\beta_j\cdot \delta_{\min}}{m}$ from $s_2$.
Since $\kappa_j$ is a lower bound on the smallest positive probability in the Markov chain, 
any state that can be reached is actually reached within at most $n$ steps with  probability at least $\kappa_j^n$.
In expectation we will need $p^{-1}$ trials before an event that happens with probability $p>0$ happens. We therefore needs $\kappa_j^{-n}$ trials, each using $n$ steps for a total of 
\[n\cdot \kappa_j^{-n}\] steps. 
Therefore we need at most 
\[
\begin{split}
n\cdot \kappa_j^{-n} &= n\cdot\frac{m^n}{\beta_j^n\cdot \delta_{\min}^n}\\
&= n\cdot m^n \cdot \delta_{\min}^{-n}\cdot n^{n\cdot\frac{n^{j-1}-1}{n-1}}\cdot \left(\frac{m}{\delta_{\min}}\right)^{n\cdot(\frac{n^{j}-1}{n-1}-1)}\cdot \epsilon^{-n\cdot\frac{n^{j}-1}{n-1}}\\
&=\frac{\epsilon}{\beta_{j+1}}
\end{split}\] 
steps in expectation to reach $Y_{\ell-j}$.

\end{itemize}
Note that if Equation~\ref{eqt: get more than epsilon} is not satisfied and the condition~(1) 
(that the set $Y_{\ell-j}$ is reached after at most $\frac{\epsilon}{\beta_{j+1}}$ steps in expectation) is satisfied,
then it implies that $Y_{\ell-j}$ is reached eventually with probability~1.
Hence by induction it follows that either Equation~\ref{eqt: get more than epsilon} is satisfied by $\sigma^{\epsilon,\ell}_1$ 
or $Y_0$ is reached eventually with probability~1.
Since $Y_0$ is the empty set, if player~1 plays  $\sigma^{\epsilon,\ell}_1=\sigma^\epsilon_1$ and 
player~2 plays any positional strategy $\sigma_2$, then Equation~\ref{eqt: get more than epsilon}
must be satisfied, i.e., for all $s \in X^*$, for all positional strategies of player~2 we have 
$\Pr_s^{\sigma^\epsilon_1,\sigma_2}(\LimInfAvg(1-\epsilon))=1$.
But since  $\sigma^\epsilon_1$ is stationary, as already mentioned, the mean-payoff objective is minimized by a positional 
strategy for player~2. 
Hence, it follows that for all $s \in X^*$ and all strategies for player~2 (not necessarily 
positional) we have 
$\Pr_s^{\sigma^\epsilon_1,\sigma_2}(\LimInfAvg(1-\epsilon))=1$.
Since safety is already ensured by $\sigma^\epsilon_1$, it follows that for all 
$s \in X^*$ and all strategies for player~2 we have $\Pr_s^{\sigma^\epsilon_1,\sigma_2}(\LimInfAvg(1-\epsilon) \cap \Safe(X^*))=1$.
\end{proof}

\begin{lemma}\label{lemm:almost2}
Let $U$ be a set of states such that for all $\epsilon>0$  
there exists a stationary strategy $\stra_1^\epsilon$ that against all strategies 
$\stra_2$ and all $s \in U$, ensures \[\Pr\nolimits_s^{\stra_1^\epsilon,\stra_2}(\LimInfAvg(1-\epsilon) \cap \Safe(U))=1.\]
Then there exists a Markov strategy $\stra_1^*$ for player~1, such that for all strategies 
$\stra_2$ and all $s\in U$ we have \[\Pr\nolimits_s^{\stra_1^*,\stra_2}(\LimInfAvg(1))=1.\]
\end{lemma}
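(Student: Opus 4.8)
The plan is to patch together the given stationary strategies $\stra_1^\epsilon$ over longer and longer epochs, with $\epsilon\to 0$, exactly in the spirit of the Markov strategy sketched for the variant matching-penny game in the introduction. I would fix a sequence $\epsilon_j\downarrow 0$ and cut time into consecutive blocks, where during block $j$ (occupying steps $[\tau_j,\tau_{j+1})$, with $\tau_j=\sum_{k<j}T_k$ and $T_k$ the block lengths) the strategy $\stra_1^*$ plays, in the current state $s$, the distribution $\stra_1^{\epsilon_j}(s)$. Since this prescription depends only on the current state and on the elapsed number of steps (which determines the block index), $\stra_1^*$ is a Markov strategy. Safety is immediate: each $\stra_1^\epsilon$ ensures $\Safe(U)$ almost surely, hence plays only actions $a_1$ with $\dest(s,a_1,a_2)\subseteq U$ for every $a_2$; thus $\stra_1^*$ never leaves $U$ and every block starts in a state of $U$. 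It then remains to force $\LimInfAvg(1)$ almost surely, which, since rewards lie in $[0,1]$, means forcing $\liminf_n \tfrac1n\sum_{i<n}\cost_i=1$.

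First I would pass from the infinite-horizon guarantee to a finite-horizon one. Fixing the stationary $\stra_1^\epsilon$ turns the game into a player-2 MDP on $U$, where for the limit-average objective positional strategies are optimal for player~2~\cite{LigLip69}. The hypothesis that $\LimInfAvg(1-\epsilon)$ holds almost surely against all player-2 strategies forces every bottom strongly connected component reachable under any positional $\stra_2$ to have mean reward at least $1-\epsilon$, so the player-2 optimal (minimal) long-run average value from every $s\in U$ is at least $1-\epsilon$. As the finite-horizon average value converges to this gain, there is a threshold $N(\epsilon)$ such that for every $T\ge N(\epsilon)$, every $s\in U$, and every $\stra_2$,
\[
\mathbb{E}_s^{\stra_1^\epsilon,\stra_2}\Big[\textstyle\sum_{i=0}^{T-1}\big(1-\cost_i\big)\Big]\le 2\epsilon\,T .
\]
Because player~1 plays the stationary $\stra_1^\epsilon$ throughout a block, this bound holds conditionally on the history up to the start of block $j$: given any such history ending in $U$, the expected reward deficit accumulated over the next $T\ge N(\epsilon_j)$ steps of block $j$ is at most $2\epsilon_j T$, whatever player~2 does.

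The core step is a within-block maximal inequality. The block-internal deficit $D^{(j)}_s=\sum_{i=\tau_j}^{\tau_j+s-1}(1-\cost_i)$ is a nonnegative submartingale (its increments $1-\cost_i\ge 0$ have nonnegative conditional mean), so Doob's inequality, combined with the conditional bound above, controls $\max_{s\le S}D^{(j)}_s$. Summing a dyadic family of such bounds over the scales between $N(\epsilon_j)$ and $T_j$ shows that, except with probability $O\!\big(\epsilon_j\log T_j\big)$, one has $D^{(j)}_t\le \gamma t$ for every $t\in[N(\epsilon_j),T_j]$; that is, every sufficiently long prefix of block $j$ has average reward at least $1-\gamma$. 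I would then choose $\epsilon_j$ (say $\epsilon_j=2^{-j}$) and the block lengths $T_j$ so that (i)~$T_j\ge N(\epsilon_j)$, (ii)~$N(\epsilon_j)=o(\tau_j)$ — arrangeable since we may make earlier blocks arbitrarily long — and (iii)~$\sum_j \epsilon_j\log T_j<\infty$, which still permits enormous block lengths. By Borel--Cantelli, for every fixed $\gamma>0$ almost surely all but finitely many blocks are good in the above sense.

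Finally I would assemble the liminf: on the good event, fix $\gamma>0$ and let $n=\tau_j+t$ lie in a late block $j$. The completed blocks contribute total reward at least $(1-\gamma)\tau_j$ (the finitely many initial bad blocks have total length $o(\tau_j)$, and the rest are good). If $t\le N(\epsilon_j)$ the current partial block is negligible because $N(\epsilon_j)=o(\tau_j)$, so the running average is at least $(1-\gamma)\tau_j/(\tau_j+t)\to 1-\gamma$; if $t\ge N(\epsilon_j)$ the prefix bound gives current-block reward at least $(1-\gamma)t$, and again the running average is at least $1-\gamma$. Hence $\liminf_n\tfrac1n\sum_{i<n}\cost_i\ge 1-\gamma$ almost surely, and intersecting over $\gamma=1/m$ gives $\liminf=1$, i.e.\ $\Pr_s^{\stra_1^*,\stra_2}(\LimInfAvg(1))=1$ for all $s\in U$ and all $\stra_2$. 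I expect the main obstacle to be precisely this within-block control against an adaptive adversary: converting the expected-deficit bound into an almost-sure statement uniform over all prefix lengths, with failure probabilities small enough for Borel--Cantelli while the blocks stay long enough to make the transient horizon $N(\epsilon_j)$ negligible. Making the maximal inequality, the dyadic union bound, and the deterministic block-length bookkeeping fit together is the technical heart of the argument.
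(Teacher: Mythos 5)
Your overall architecture is the same as the paper's: concatenate the stationary strategies $\stra_1^{\epsilon_j}$ over pre-computed time blocks (so the result is Markov), pass to the player-2 MDP induced by fixing the stationary strategy, extract finite-horizon deficit bounds, and lift them to an almost-sure $\LimInfAvg(1)$ statement. The safety argument, the restriction to positional counter-strategies, the uniform finite-horizon bound $\mathbb{E}\big[\sum_{i<T}(1-\cost_i)\big]\le 2\epsilon T$ for $T\ge N(\epsilon)$ (conditionally on any history ending in $U$), and the dyadic/Markov maximal bound inside one block are all sound, and in fact more careful than the paper's own proof, which is informal at exactly this point (its assertion that after $J_i$ steps the average payoff is at least $1-2\epsilon_i$ ``with probability~1'' is not literally true).

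The genuine gap is the claim that the parameters can be chosen to satisfy simultaneously (i)~$T_j\ge N(\epsilon_j)$, (ii)~$N(\epsilon_j)=o(\tau_j)$, and (iii)~$\sum_j\epsilon_j\log T_j<\infty$ (``which still permits enormous block lengths''). The lemma's hypothesis gives \emph{no} control on the horizon function $N(\epsilon)$: the strategies $\stra_1^\epsilon$ are black boxes and may assign probabilities as small as $\exp(-\exp(1/\epsilon))$, so the induced MDP's finite-horizon value can approach its gain arbitrarily slowly; nothing excludes $\log N(\epsilon)\ge e^{1/\epsilon}$. In that case (i) alone forces $\epsilon_j\log T_j\ge\epsilon_j\log N(\epsilon_j)\ge\epsilon_j e^{1/\epsilon_j}\to\infty$ for any $\epsilon_j\downarrow 0$, so the terms of the series in (iii) do not even tend to zero and Borel--Cantelli is unavailable; no choice of $(\epsilon_j,T_j)$ repairs this. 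Weakening (iii) to the condition you actually need, $\sum_j\epsilon_j\log\bigl(T_j/N(\epsilon_j)\bigr)<\infty$, does not help either: condition (ii) forces some earlier block to have length on the order of $N(\epsilon_j)$, and a summation-by-parts estimate shows the total failure probability still diverges. Nor is the $\epsilon\log(T/N)$ per-block failure rate an artifact of your union bound --- an adversary constrained only by window-expectation bounds can realize it. To close the gap you must change the probabilistic engine, e.g.: (a)~extract from the almost-sure hypothesis a horizon-uniform tail bound $\sup_{\stra_2}\Pr\bigl(\exists t\ge N:\ D_t>2\epsilon t\bigr)\to 0$ as $N\to\infty$ (this needs a compactness/uniformity argument over player-2 strategies); or (b)~run each $\epsilon$-level as many blocks of length exactly $N(\epsilon)$ and replace per-block Borel--Cantelli by a conditional law of large numbers (Azuma) for the indicators of ``bad block'': the fraction of bad blocks is then $O(\sqrt{\epsilon})$ except with probability exponentially small in $1/\epsilon$, \emph{independently of how long the phase is}, so phases can be stretched to meet the dilution requirement (ii) at no probabilistic cost. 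As written, your bookkeeping cannot be completed for the lemma as stated (it does go through in the paper's application, where the strategies of Lemma~\ref{lemm:almost1} have explicit patience bounds giving $\log N(\epsilon)=O(\log(1/\epsilon))$, but that information is not part of this lemma's hypothesis).
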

\begin{proof}
The construction of the desired strategy $\sigma_1^*$ is as follows: 
consider the sequence $\epsilon_1,\epsilon_2,\ldots$ such that 
$\epsilon_1=\frac{1}{4}$ and $\epsilon_{i+1}=\frac{\epsilon_i}{2}$.
At any point, the strategy $\sigma_1^*$ will play according to $\sigma_1^{\epsilon_i}$ 
for some $i \geq 1$. 
Initially the strategy plays as $\sigma_1^{\epsilon_1}$.
The strategy $\sigma_1^{\epsilon_i}$ ensures that against any strategy $\sigma_2$ 
and starting in any state $s \in U$ we get that 
$\Pr_s^{\sigma_1^{\epsilon_i},\sigma_2}(\LimInfAvg(1-\epsilon_i)) =1$. 
Hence after a finite number of steps (that can be upper bounded with a bound $J_i$) 
with probability~1, the average-payoff 
is at least $1-2\cdot \epsilon_i$ against any counter-strategy of player~2; 
and the safety objective ensures that the set $U$ is never left. 
The bound $J_i$ can be pre-computed: an easy description of the computation of the bound 
$J_i$ is through value-iteration (on the player-2 MDP obtained by fixing the stationary 
strategy $\sigma_1^{\epsilon_i}$ as the strategy for player~1), 
and playing the game for a finite number of steps that 
ensure limit-average $1-2\cdot\epsilon_i$ with probability~1, and then use the finite 
number as the bound $J_i$.
Once the payoff is at least  $1-2\cdot \epsilon_i$ the strategy switches 
to the strategy $\sigma_1^{\epsilon_{i+1}}$ for $J_{i+1}$ steps.
As the length of the play goes to $\infty$, for all $\epsilon>0$, 
for all $s \in U$ and all strategies $\sigma_2$ we have 
$\Pr_s^{\sigma_1^*,\sigma_2}(\LimInfAvg(1-\epsilon))=1$, 
and since this holds for all $\epsilon>0$, we have 
$\Pr_s^{\sigma_1^*,\sigma_2}(\LimInfAvg(1))=1$. 
Using the bounds on the sequence $(J_i)_{i\geq 1}$ for the number of 
steps required before we switch strategies in the sequence of strategies 
$(\sigma_1^{\epsilon_i})_{i\geq 1}$,
we obtain that the strategy $\sigma_1^*$ is a Markov strategy.
The desired result follows.
\end{proof}

Lemma~\ref{lemm:almost1} and Lemma~\ref{lemm:almost2} establishes one required inclusion (Lemma \ref{lemm:incl1}),
and we establish the other inclusion in Lemma~\ref{lemm:not-almost}.

\begin{lemma}\label{lemm:incl1}
We have $X^* \subseteq \Almost_1(\LimInfAvg(1))$.
\end{lemma}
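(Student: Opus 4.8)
The plan is to derive this inclusion as an immediate consequence of the two preceding lemmas, by instantiating the generic set $U$ of Lemma~\ref{lemm:almost2} with the fixpoint set $X^*$.

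First I would observe that Lemma~\ref{lemm:almost1} provides exactly the hypothesis that Lemma~\ref{lemm:almost2} demands for the choice $U := X^*$. Indeed, Lemma~\ref{lemm:almost1} asserts that for every $\epsilon>0$ there is a stationary strategy $\stra_1^\epsilon$ with $\Pr_s^{\stra_1^\epsilon,\stra_2}(\LimInfAvg(1-\epsilon)\cap \Safe(X^*))=1$ for all $s\in X^*$ and all $\stra_2$; this is precisely the family of near-optimal, $X^*$-safe stationary strategies required by the premise of Lemma~\ref{lemm:almost2}. The one point to verify is that the safety component there is with respect to $X^*$ itself, which is exactly what is guaranteed, so the substitution $U = X^*$ is legitimate.

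Next I would apply Lemma~\ref{lemm:almost2} with $U = X^*$ to obtain a single Markov strategy $\stra_1^*$ such that $\Pr_s^{\stra_1^*,\stra_2}(\LimInfAvg(1))=1$ for every $s\in X^*$ and every player-2 strategy $\stra_2$. By the definition of the almost-sure winning set $\Almost_1(\LimInfAvg(1))$, the existence of such a $\stra_1^*$ witnesses $s\in \Almost_1(\LimInfAvg(1))$ for each $s\in X^*$, which yields the desired inclusion $X^*\subseteq \Almost_1(\LimInfAvg(1))$.

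At this stage there is essentially no substantive obstacle remaining: all the analytic work — constructing the $\epsilon$-dependent stationary strategies, bounding their patience, and showing they simultaneously guarantee $\Safe(X^*)$ and limit-average payoff at least $1-\epsilon$ — was carried out in Lemma~\ref{lemm:almost1}, and the infinite-memory stitching of these strategies into a single Markov strategy that forces payoff exactly $1$ was carried out in Lemma~\ref{lemm:almost2}. Thus the proof of Lemma~\ref{lemm:incl1} reduces to checking that these two lemmas compose under the instantiation $U = X^*$, which the argument above does.
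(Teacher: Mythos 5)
Your proof is correct and is exactly the paper's argument: the paper derives Lemma~\ref{lemm:incl1} by combining Lemma~\ref{lemm:almost1} and Lemma~\ref{lemm:almost2} with the instantiation $U = X^*$, precisely as you do. No gaps; the verification that $\Safe(X^*)$ matches the safety requirement of Lemma~\ref{lemm:almost2} is the only point needing care, and you address it.
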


\begin{lemma}\label{lemm:not-almost}
 We have
\[\ov{X}^*= (S\setminus X^*) 
\subseteq 
\set{s \in S \mid \exists \sigma_2^* \forall \sigma_1. 
\Pr\nolimits_s^{\sigma_1,\sigma_2^*}(\overline{\LimSupAvg}_{\leq}(1-c))>0},\] 
where $c= (\frac{\delta_{\min}}{m})^{n-1}\cdot \frac{1}{m}$. 
Moreover, there exist witness Markov strategies $\sigma_2^*$ for player~2 to ensure $\overline{\LimSupAvg}_{\leq}( 1-c)>0$ from 
$\ov{X}^*$.
\end{lemma}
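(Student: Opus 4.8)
The plan is to prove the statement by complementing the fixpoint and inducting on the level structure it induces. Writing $f(X)=\mu Y.\nu Z.\ASP(X,Y,Z)$, so that $X^*=\nu X. f(X)$, I would consider the outer greatest-fixpoint iteration $B_0=S$ and $B_{k+1}=f(B_k)$, with $X^*=\bigcap_k B_k$; by the $\mu$-calculus complementation rule this is exactly the iteration whose complements build $\ov{X}^*$ as a least fixpoint, and for $s\in\ov{X}^*$ I define its \emph{level} to be the unique $k$ with $s\in B_k\setminus B_{k+1}$. The whole argument is an induction on this level, with a secondary induction inside each level on the level of the innermost $\nu Z$ evaluated at $Y=B_{k+1}$. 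Dropping to a strictly smaller outer level is always favourable to player~2 and is handled by the induction hypothesis; the base case is level $0$, where $S\setminus B_0=\emptyset$ so no downward escape exists and the only risk is the play drifting upward toward $X^*$.

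The combinatorial heart is the extraction of player~2's witness. Fix $s\in B_k\setminus B_{k+1}$. Since $s\notin B_{k+1}=\mu Y.\nu Z.\ASP(B_k,Y,Z)$, in particular $s\notin\nu Z.\ASP(B_k,B_{k+1},Z)$; unfolding this greatest fixpoint ($Z_0=S$, $Z_{j+1}=\ASP(B_k,B_{k+1},Z_j)$) yields a finite $Z$-level $j\leq n-1$ with $s\in Z_j\setminus Z_{j+1}$, i.e. $\Good(s,B_k,B_{k+1},Z_j)=\emptyset$. Unwinding the definition, this means: for every $a_1\in\Allow(s,B_k)$ there is a response $a_2\notin\Bad(s,B_k,B_{k+1})$ with either $\cost(s,a_1,a_2)=0$ or $\dest(s,a_1,a_2)\not\subseteq Z_j$; and for every $a_1\notin\Allow(s,B_k)$ there is a response whose support leaves $B_k$ (reaching the strictly lower level set $S\setminus B_k$). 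The key point is that because $a_2\notin\Bad(s,B_k,B_{k+1})$, no allowable $a_1$ can reach $B_{k+1}$ against it, so these witness responses never allow player~1 to move toward $X^*$.

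I would then let $\sigma_2^*$ assign probability at least $1/m$ to every available action (mixing, say, uniformly). Whatever action $a_1$ player~1 chooses simultaneously, the witness response for $a_1$ is then played with probability at least $1/m$, and conditioned on it the play, with probability at least $\delta_{\min}$, either (i) takes a reward-$0$ transition, or (ii) moves to a strictly smaller $Z$-level while remaining in $B_k\setminus B_{k+1}$, or (iii) drops to a strictly smaller outer level. Iterating the $Z$-countdown shows that from any level-$k$ state, within at most $n$ steps a reward-$0$ transition is forced with probability at least $\big(\tfrac{\delta_{\min}}{m}\big)^{n-1}\cdot\tfrac{1}{m}=c$ while the play is kept inside $\ov{X}^*$ (at $Z$-level $0$ the witness must give reward $0$, since $\dest\subseteq S$ always holds, so the escape-from-$Z_j$ alternative is impossible there). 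Thus, conditioned on the play never leaving $\ov{X}^*$, reward-$0$ transitions recur, and a routine recurrence/martingale estimate on this one-shot bound converts it into the density condition required by $\overline{\LimSupAvg}_{\leq}(1-c)$.

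It remains to establish $\Pr_s^{\sigma_1,\sigma_2^*}(\Safe(\ov{X}^*))>0$ for \emph{every} player-1 strategy, and this is where the Markov (time-dependent) nature of $\sigma_2^*$ is forced and where I expect the main obstacle to lie. The difficulty is genuinely concurrent: the witnessing action depends on player~1's simultaneous and unobserved choice, so player~2 cannot commit to a single action, and with a fixed per-step mixing weight the event ``player~2 witnesses forever'' has probability $0$ while player~1 can exploit the non-witness steps to drift up into $B_{k+1}$ and ultimately into $X^*$ (where player~1 wins by Lemma~\ref{lemm:almost2}). I would resolve this exactly as in the dual construction of Lemma~\ref{lemm:almost2}: organise $\sigma_2^*$ into phases and, over successive phases, raise the weight placed on the witness responses (equivalently, lengthen the phases) so that the $i$-th phase fails to keep the play in $\ov{X}^*$ with probability at most $\eta_i$, where $\sum_i\eta_i<\infty$. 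An infinite-product (Borel--Cantelli) argument then gives positive probability that the play stays in $\ov{X}^*$ from some phase onward, and on that event the frequency bound of the previous paragraph yields $\overline{\LimSupAvg}_{\leq}(1-c)$. Since the phase schedule depends only on the number of elapsed steps, $\sigma_2^*$ is a Markov strategy, and the induction on the outer level closes the proof.
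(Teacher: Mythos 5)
Your proposal is correct and follows essentially the same route as the paper's own proof: the same complemented fixpoint and level structure (outer levels plus inner $Z$-levels), the same extraction of witness responses from emptiness of $\Good$, the same key observation that non-$\Bad$ responses block ascent against allowable actions, and the same resolution of the concurrency obstacle by letting the probability mass on non-witness ($\Bad$) actions decay summably over rounds (the paper's coins $B^i$ with probability $\epsilon/2^i$), followed by an infinite-product argument and induction on the outer level, with descent handled by the inductive hypothesis. The only differences are cosmetic: the paper builds the strategy level-by-level as $\sigma_2^1,\dots,\sigma_2^k$ and, at the bottom level, plays uniformly over the non-$\Bad$ actions alone (yielding probability~1 there, which it records separately as Remark~\ref{rem:proof}), whereas you apply the decaying mixture at every level, which still yields the positive probability that the lemma requires.
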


\begin{proof}
We will construct a Markov strategy $\sigma_2$ for player~2, such that the limit supremum average reward is at most $1-c$ with positive probability for plays that start in a state in $\ov{X}^*$. This implies that we have $\ov{X}^* 
\subseteq 
\set{s \in S \mid \exists \sigma_2^* \forall \sigma_1. 
\Pr\nolimits_s^{\sigma_1,\sigma_2^*}(\overline{\LimSupAvg}_{\leq}( 1-c))>0}$.
We first consider the computation of $X^*$.
Let $X_0=S$ and $X_{i}=\mu Y. \nu Z.\ASP(X_{i-1},Y,Z)$, for $i\geq 1.$ 
Thus we will obtain a sequence $X_0 \supset X_1 \supset X_2 \cdots \supset X_{k-1} \supset X_{k} = X_{k+1} =X^*$.
For a set $U$ of states, let us denote by $\ov{U}=(S\setminus U)$ the complement of the set $U$.
We will construct a spoiling strategy $\sigma_2^*$ for player~2 as the end of a sequence of strategies, $\sigma^1_2,\sigma^2_2,\dots,\sigma^k_2=\sigma_2^*$, where $k\leq n$.
The strategy $\sigma^j_2$ will be constructed such that for all strategies $\sigma_1$ for player~1, 
for all $0\leq j \leq k$, and for all $s \in \ov{X}_j$, we have 
\begin{enumerate}
\item \emph{(Property~1).} Either $\Pr_s^{\sigma_1,\sigma_2^j}(\overline{\LimSupAvg}_{\leq} (1-c)) >0$; or 
\item \emph{(Property~2).} $\Pr_s^{\sigma_1,\sigma_2^j}(\Reach(\ov{X}_{j-1})) >0$ 
(recall that $\ov{X}_{j-1}=(S \setminus X_{j-1})$).
\end{enumerate}
The proof of the result will be by induction on $j$, and 
intuitively the correctness of the strategy construction of $\sigma_2^j$ will use 
the nested iteration of the $\mu$-calculus formula for $X^*$.

We assume that $X^*\neq S$, because if $S=X^* \subseteq \Almost_1(\LimSupAvg(1))$,
then $(S\setminus X^*)$ is the empty set and we are trivially done.

\smallskip\noindent{\em Construction of $\sigma_2^1$.} We first describe the details 
of $\sigma_2^1$ as the later strategies will be constructed similarly.
Since $X^*\neq S$, we have that $\ov{X}_1=(S\setminus X_1)$ is non-empty.
We first show that for all $s \in \ov{X}_1$ we have that 
$(\mov_2(s) \setminus \Bad(s,S,X_1))$ is non-empty;
otherwise if $(\mov_2(s) \setminus \Bad(s,S,X_1))$ is empty, then $\Good(s,S,X_1,X_1)$ is the whole 
set $\mov_1(s)$ of actions, which implies $s \in \ASP(S,X_1,X_1)=X_1$ 
(contradicting that $s \in \ov{X}_1$).
The description of the strategy $\sigma_2^1$ is as follows: for all $s \in \ov{X}_1$ 
the strategy plays all actions in $(\mov_2(s) \setminus \Bad(s,S,X_1))$ 
uniformly at random; and  for $s$ not in $\ov{X}_1$, the strategy $\sigma_2^1(s)$ is 
an arbitrary probability distribution over $\mov_2(s)$.
To prove the correctness of the construction of $\sigma_2^1$ we analyse the 
computation of the set $X_1$ as follows:
the set $X_1$ is obtained as a sequence 
$Z_1^0 \supset Z_1^1  \supset Z_1^2 \supset \cdots \supset Z_1^\ell = Z_1^{\ell+1} =X_1$
where $Z_1^0=S$ and $Z_1^{i+1} = \ASP(S,X_1,Z_1^i)$.

\begin{enumerate}
\item We first show that for all states $s$ in $\ov{Z}_1^1=(S \setminus Z_1^1)$, we have:
(1)~the next state is in the set $\ov{X}_1$ with probability~1, and 
(2)~the probability that the reward is~0 in one step from $s$ is at least $\frac{1}{m}$.
We know that the set $\Good(s,S,X_1,Z_1^0)=\Good(s,S,X_1,S)=\emptyset$.
Moreover, $\Allow(s,S)$ is the set of all player~1 actions $\mov_1(s)$. 
First, for every action $a$ of player~1, for all actions $b \in (\mov_2(s) \setminus \Bad(s,S,X_1))$ 
we have $\dest(s,a,b) \subseteq \ov{X}_1$, and hence it follows 
that the set $\ov{X}_1$ is never left (i.e., the next state is always in $X_1$ 
with probability~1).
Second, since $\Good(s,S,X_1,S)=\emptyset$, for every action $a$ of player~1,
there exists an action $b \in (\mov_2(s) \setminus \Bad(s,S,X_1))$ such that 
$\cost(s,a,b)=0$ (note that for all actions $a$ and $b$ the condition 
$\dest(s,a,b) \subseteq Z_1^0=S$ is trivially satisfied, and hence the reward 
must be~0 to show that the action does not belong to the good set of actions).
Since all actions in  $(\mov_2(s) \setminus \Bad(s,S,X_1))$ are played uniformly 
at random for every action $a$ for player~1 the probability that the reward
is~0 in one step is at least $\frac{1}{m}$.

\item For $i >1$ we show that for all states $s$ in $\ov{Z}_1^i=(S \setminus Z_1^i)$, we have:
(1)~the next state is in the set $\ov{X}_1$ with probability~1; and 
(2)~either (i)~the probability to reach the set $\ov{Z}_1^{i-1}$ in one step from $s$ is at least 
$\frac{\delta_{\min}}{m}$ or (ii)~the probability to get reward~0 in one step from $s$ is at least $\frac{1}{m}$.
As in the previous case since $\Allow(s,S)$ is the set of all actions $\mov_1(s)$,
it follows from the same argument as above that the next state is in $\ov{X}_1$ 
with probability~1.
We now focus on the second part of the claim.
We know that $\Good(s,S,X_1,Z_1^{i-1})$ is empty. 
Hence for all actions $a$ for player~1 
there exists an action $b \in (\mov_2(s) \setminus \Bad(s,S,X_1))$
such that either (i)~$\cost(s,a,b)=0$, or (ii)~$\dest(s,a,b) \cap \ov{Z}_1^{i-1} 
\neq \emptyset$ (i.e., $\dest(s,a,b) \not\subseteq Z_1^{i-1}$).
It follows that either the reward is~0 with probability at least $\frac{1}{m}$ 
in one step from $s$ or the set $\ov{Z}_1^{i-1}$ is reached with probability at 
least $\frac{\delta_{\min}}{m}$ in one step from $s$.

\end{enumerate}

It follows from above that from any state in $(S\setminus X_1)$, there is a path of length at 
most $n$ such that each step occurs with probability atleast $\frac{\delta_{\min}}{m}$ (except for the last which occurs with probability at least $\frac{1}{m}$)
and the last reward is~0, and the path always stays in $(S\setminus X_1)$,
given player~2 plays the strategy $\sigma^1_2$, irrespective of the strategy of player~1.
Hence for all $s \in \ov{X}_1=(S \setminus X_1)$ and for all strategies 
$\sigma_1$ we have $\Pr_s^{\sigma_1,\sigma_2^1}(\overline{\LimSupAvg}_{\leq}( 1-c))=1$.
We present a remark about the above construction as it will be used later.

\begin{remark}\label{rem:proof}
Let $X_1=\mu Y. \nu Z. \ASP(S,Y,Z)$, and $\ov{X}_1=(S\setminus X_1)$.
Then there exists a stationary strategy $\sigma_2^1$ with patience at most $m$ for 
player~2 such that for all strategies $\sigma_1$ for player~1 we have 
$\Pr_s^{\sigma_1,\sigma_2^1}(\overline{\LimSupAvg}_{\leq}( 1-c))=1$,
for all $s \in \ov{X}_1$.
\end{remark}

We now describe the inductive construction of the strategy $\sigma_2^j$ from $\sigma_2^{j-1}$, for $j\geq 2$. 
Let $0<\epsilon<1$ be given.
For plays which are in state $s\not \in (X_{j-1}\setminus X_{j})$, the strategy $\sigma_2^{j}$ follows $\sigma_2^{j-1}$. 
If the play is in state $s\in (X_{j-1}\setminus X_{j})$ in round $i$ the strategy $\sigma_2^{j}$ uses a binary random variable $B^i$ (where $B^i$ is independent of $B^\ell$ for $\ell\neq i$) 
which is~1 with probability  $\frac{\epsilon}{2^i}$ and~0 otherwise. 
If $B^i$ is~1, then $\sigma_2^{j}$ chooses an action uniformly at random from $\Gamma_2(s)$, otherwise it chooses an action uniformly at random from 
$(\Gamma_2(s)\setminus\Bad(s,X_{j-1},X_j))$. Notice the fact that $B^i$ is independent of 
$B^\ell$, for $\ell\neq i$, ensures that $\sigma_2^j$ is a Markov strategy. 
We show that $(\Gamma_2(s)\setminus\Bad(s,X_{j-1},X_j))$ is non-empty. 
If $\Allow(s,X_{j-1})$ is empty, then $\Bad(s,X_{j-1},X_j)$ is empty and 
therefore $(\Gamma_2(s)\setminus\Bad(s,X_{j-1},X_j))$ is non-empty. 
Otherwise, if $\Allow(s,X_{j-1})$ is non-empty, 
we can use that $\Good(s,X_{j-1},X_j,X_j)$ is empty, 
because $s\in (X_{j-1}\setminus X_{j})$. 
But by definition of $\Good(s,X_{j-1},X_j,X_j)$ this implies that $(\Gamma_2(s)\setminus\Bad(s,X_{j-1},X_j))$ is non-empty. 

Consider a counter-strategy $\sigma_1$ for player~1.
In round $\ell$, if the play is in a state $s$ in $(X_{j-1}\setminus X_j)$ and the conditional probability 
that  $\sigma_1$ chooses an action $a$ with positive probability which is not in $\Allow(s,X_{j-1})$, 
then there is an action $b$ such that $\dest(s,a,b)\cap \ov{X}_{j-1} \neq \emptyset$. 
Hence since $\sigma_2^j$ plays all actions with positive probability we see that such plays reaches 
$(S\setminus X_{j-1})=\ov{X}_{j-1}$ with positive probability (in this case the 
desired Property~2 holds). 
Therefore, we consider the case such that $\sigma_1$ only plays action with positive probability that are in 
$\Allow(s,X_{j-1})$. 
With probability at least $1-\sum_{i=1}^\infty \frac{\epsilon}{2^i}=1-\epsilon>0$, we have that $B^i=0$ for all $i$. 
If $B^i$ is~0 for all $i$, the proof proceeds like in the base case (correctness proof for $\sigma_2^1$), 
except that we view $X_{j-1}$ as the set of all states (note that no state outside $X_{j-1}$ can be reached 
because $\sigma_1$ only plays actions in $\Allow(s,X_{j-1})$ and that  $B^i=0$ for all $i$).
In this scenario, as in the proof of the base case, we have that the strategy $\sigma_2^j$ 
ensures that all plays starting in states $s \in (\ov{X}_j \setminus \ov{X}_{j-1})$ do not 
leave the set $(\ov{X}_j \setminus \ov{X}_{j-1})$, and the probability that the objective $\overline{\LimSupAvg}_{\leq}( 1-c)$ 
is satisfied is strictly greater than~0 for all strategies of player~1.
This establishes by induction the desired Properties~1 and~2.
Since $\ov{X}_0$ is empty, it follows that for all states $s \in \ov{X}^*$ and 
any strategy $\sigma_1$ for player~1, we have 
$\Pr_s^{\sigma_1,\sigma_2^*}(\overline{\LimSupAvg}_{\leq}(1-c)) >0$ 
(as in the proof of Lemma~\ref{lemm:almost1}). 
Notice that since $\sigma_2^j$  is a Markov strategy for all $j$, it follows that $\sigma_2^*$ is also a Markov strategy.
The desired result is established.
\end{proof}

\begin{theorem}[Qualitative determinacy and polynomial-time computability]\label{thm:almost}
The following assertions hold for all concurrent game structures with boolean rewards:
\begin{enumerate}

\item We have 
\[
\begin{array}{rcl}
X^* & = & \Almost_1(\LimInfAvg(1)) = \Almost_1(\LimSupAvg(1)) \\
& = & 
\displaystyle 
\bigcap_{\varepsilon >0} \Almost_1(\LimInfAvg(1-\varepsilon)) =
\bigcap_{\varepsilon >0} \Almost_1(\LimSupAvg(1-\varepsilon));
\end{array}
\] 
and 
\[
\begin{array}{rcl}
(S\setminus X^*) & = & 
\Positive_2(\ov{\LimInfAvg}(1)) = \Positive_2(\ov{\LimSupAvg}(1)) \\
& = & 
\displaystyle
\bigcup_{c>0} \Positive_2(\ov{\LimInfAvg}_{\leq}(1-c)) 
=\bigcup_{c>0} \Positive_2(\ov{\LimSupAvg}_{\leq}(1-c));
\end{array}
\]
where $X^*=\nu X. \mu Y. \nu Z. \ASP(X,Y,Z)$.

\item The set $X^*$ can be computed in cubic time 
(in time $O(n^2 \cdot |\delta|)$, where $|\delta|=\sum_{s\in S}\sum_{a\in \Gamma_1(s)}\sum_{b\in \Gamma_2(s)}\left| \dest(s,a,b)\right|$)
by straight-forward  computation of the $\mu$-calculus formula
$\nu X. \mu Y. \nu Z. \ASP(X,Y,Z)$.

\end{enumerate}
\end{theorem}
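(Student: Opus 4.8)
The plan is to assemble the theorem from the two inclusions already isolated, namely Lemma~\ref{lemm:incl1} (which gives $X^*\subseteq \Almost_1(\LimInfAvg(1))$, witnessed by a Markov strategy $\sigma_1^*$ that forces $\LimInfAvg(1)$ with probability~$1$ from every state of $X^*$) and Lemma~\ref{lemm:not-almost} (from every state of $\ov{X}^*=S\setminus X^*$ a single Markov strategy $\sigma_2^*$ forces $\overline{\LimSupAvg}_{\leq}(1-c)$ with positive probability against every player-1 strategy), together with the elementary inclusions among the six path objectives that follow from $\LimInfAvg(\pat)\leq \LimSupAvg(\pat)\leq 1$ for boolean rewards. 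Part~(1) is then bookkeeping on these inclusions, and Part~(2) is a fixpoint-iteration count; the only genuinely technical point is the iteration bound in Part~(2).

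For the first chain of equalities I would prove the two directions separately. For ``$\subseteq$'': Lemma~\ref{lemm:incl1} gives $X^*\subseteq \Almost_1(\LimInfAvg(1))$, and since $\LimInfAvg(1)\subseteq\LimSupAvg(1)$, $\LimInfAvg(1)\subseteq\LimInfAvg(1-\varepsilon)$ and $\LimSupAvg(1)\subseteq\LimSupAvg(1-\varepsilon)$ for every $\varepsilon>0$, monotonicity of $\Almost_1(\cdot)$ places $X^*$ inside each of the four sets and hence inside both intersections. For ``$\supseteq$'' I take $s\in\ov{X}^*$ and invoke Lemma~\ref{lemm:not-almost}: the spoiling strategy $\sigma_2^*$ guarantees $\LimSupAvg\leq 1-c$, hence also $\LimInfAvg\leq 1-c$, with positive probability, so for any fixed $\varepsilon<c$ player~1 cannot achieve any of $\LimInfAvg(1)$, $\LimSupAvg(1)$, $\LimInfAvg(1-\varepsilon)$ or $\LimSupAvg(1-\varepsilon)$ with probability~$1$ from $s$; thus $s$ lies in none of the four almost-sure sets, nor in either intersection. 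Combining the two directions yields the first displayed equation.

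The complementary chain is the positive-winning (determinacy) statement, obtained from the same two lemmas read dually, keeping the single constant $c=(\delta_{\min}/m)^{n-1}\cdot\frac1m$ fixed throughout. Lemma~\ref{lemm:not-almost} states precisely $\ov{X}^*\subseteq\Positive_2(\overline{\LimSupAvg}_{\leq}(1-c))$, while on $X^*$ the Markov strategy $\sigma_1^*$ of Lemma~\ref{lemm:incl1} forces $\LimInfAvg(1)$ almost surely, so $\Pr_s^{\sigma_1^*,\sigma_2}(\overline{\LimInfAvg}(1))=0$ for every $\sigma_2$, whence no state of $X^*$ is positively winning for player~2 even for the weakest objective $\overline{\LimInfAvg}(1)$, giving $\Positive_2(\overline{\LimInfAvg}(1))\subseteq\ov{X}^*$. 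Now each $\Psi\in\set{\overline{\LimInfAvg}(1),\overline{\LimSupAvg}(1)}$ satisfies $\overline{\LimSupAvg}_{\leq}(1-c)\subseteq\Psi\subseteq\overline{\LimInfAvg}(1)$ (both inclusions immediate from $\LimInfAvg(\pat)\leq\LimSupAvg(\pat)$ and $1-c<1$), so $\ov{X}^*\subseteq\Positive_2(\Psi)\subseteq\ov{X}^*$; and for the two unions the fixed-$c$ term already gives $\ov{X}^*\subseteq\Positive_2(\overline{\LimInfAvg}_{\leq}(1-c))$ (similarly for $\LimSupAvg$), while $\overline{\LimInfAvg}_{\leq}(1-c')\subseteq\overline{\LimInfAvg}(1)$ for every $c'>0$ forces each term of the union into $\ov{X}^*$. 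Hence all four sets equal $\ov{X}^*$.

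For Part~(2) the plan is the standard nested-fixpoint evaluation. A single application of $\ASP(X,Y,Z)$ inspects, for each state and each available action pair, the successor set once to compute $\Allow$, $\Bad$ and $\Good$, and so costs $O(|\delta|)$. The outer $\nu X$ has height at most $n$; for each fixed $X$ the doubly-nested expression $\mu Y.\nu Z.\ASP(X,Y,Z)$ is a coB\"uchi-style fixpoint, which I would evaluate in $O(n\cdot|\delta|)$ time by iterating $\mu Y$ at most $n$ times and computing each innermost greatest fixpoint $\nu Z$ in linear amortized time, processing each transition a constant number of times as $Z$ contracts. Multiplying the at most $n$ outer iterations by this $O(n\cdot|\delta|)$ bound gives $O(n^2\cdot|\delta|)$. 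The main obstacle I anticipate is precisely this counting: a completely naive re-evaluation of the innermost fixpoint from scratch at every $\mu Y$- and $\nu X$-step would give only $O(n^3\cdot|\delta|)$, so one must argue that the innermost safety-type fixpoint is evaluated in $O(|\delta|)$ (equivalently, that only the two outer heights, not all three, contribute factors of $n$) in order to land the claimed cubic $O(n^2\cdot|\delta|)$ bound.
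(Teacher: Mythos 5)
Your Part~(1) is correct and follows essentially the same route as the paper: the paper's proof also combines Lemma~\ref{lemm:incl1}, Lemma~\ref{lemm:not-almost}, and the trivial monotonicity inclusions among the six objectives, and then concludes all equalities at once; you merely make explicit the disjointness bookkeeping (that $\Almost_1(\Phi)$ and $\Positive_2(\pats\setminus\Phi)$ cannot intersect, so the two inclusions covering $S$ force every set in each chain to collapse to $X^*$ resp.\ $S\setminus X^*$), which the paper leaves implicit in ``thus we obtain all the desired equalities.''

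For Part~(2) the comparison is more interesting, because the paper gives no argument at all (``the second item trivially follows as the $\mu$-calculus formula defines a nested iterative algorithm''), and your accounting is both correct and genuinely needed. You are right that the completely naive evaluation --- one full $O(|\delta|)$ evaluation of $\ASP$ per $Z$-step, per $Y$-step, per $X$-step --- only gives $O(n^3\cdot|\delta|)$, so the claimed $O(n^2\cdot|\delta|)$ requires that exactly one of the three nesting levels be amortized away. Your mechanism is sound: within a single computation of $\nu Z.\ASP(X,Y,Z)$ the sets $\Allow(s,X)$ and $\Bad(s,X,Y)$ are frozen, so only the condition $\dest(s,a_1,a_2)\subseteq Z$ evolves, and since $Z$ only contracts, a counter-and-worklist implementation touches each transition $O(1)$ times, giving $O(|\delta|)$ per inner fixpoint, hence $O(n\cdot|\delta|)$ per fixed $X$ and $O(n^2\cdot|\delta|)$ overall. (Note the direction of your amortization matters: it is valid only \emph{inside} one $\nu Z$ computation; it could not be carried across successive $\mu Y$-steps for fixed $X$, since the inner fixpoints $\nu Z.\ASP(X,Y_j,Z)$ grow with $j$ and $Z$ must be re-inflated each time. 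An alternative standard route to the same $O(n^2)$ count of $\ASP$-evaluations is the Long--Browne--Clarke--Jha--Marrero memoization across the two same-polarity $\nu$ levels.) It is also worth observing that your counter/worklist idea is precisely the technique the paper itself deploys later, in the $\bb(s,b)$ counters and $\Remove$ procedure of the improved algorithm of Section~\ref{subsec3}, to shave the remaining factor of $n$ down to $O(n\cdot|\delta|)$.
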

\begin{proof}
Trivially we have 
$\Almost_1(\LimInfAvg(1)) \subseteq \bigcap_{\varepsilon >0} \Almost_1(\LimInfAvg(1-\varepsilon))$ and
$\bigcup_{c>0} \Positive_2(\ov{\LimInfAvg}(1-c)) \subseteq \Positive_2(\ov{\LimInfAvg}(1)) $
(also similarly for $\LimSupAvg$).
By Lemma~\ref{lemm:incl1} we have $X^* \subseteq \Almost_1(\LimInfAvg(1))$ and 
by Lemma~\ref{lemm:not-almost} we have $(S\setminus X^*) \subseteq  \bigcup_{c>0} \Positive_2(\ov{\LimSupAvg}_{\leq}(1-c))$.
Also observe that trivially we have 
$\bigcup_{c>0} \Positive_2(\ov{\LimSupAvg}_{\leq}(1-c))=
\bigcup_{c>0} \Positive_2(\ov{\LimSupAvg}(1-c))$
(and similarly for $\LimInfAvg$).
Thus we obtain all the desired equalities.
The second item trivially follows as the $\mu$-calculus formula defines a nested 
iterative algorithm.
\end{proof}

\subsection{Strategy Complexity}\label{subsec2}
In this section we will establish the complexities of the witness 
almost-sure and positive winning strategies for player~1 and player~2,
from their respective winning sets.
We start with a lemma that shows a lower bound on the time-dependent 
memory of infinite-memory strategies.
The authors would like to thank Kristoffer Arnsfelt Hansen for the proof of the following lemma.

\begin{lemma}\label{lemm:log T}
In a repeated game with absorbing states, if a strategy $\stra$ requires infinite memory, 
then more than $T$ memory states are required by the strategy for the first $T$ rounds, 
for all $T>0$, i.e., $\timedep_{\stra}(T) \geq T$,  for all $T>0$.
\end{lemma}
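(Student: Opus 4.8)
The plan is to exploit the special structure of repeated games with absorbing states: since every state other than $s^*$ is absorbing and carries no memory update, each update of the memory happens in $s^*$ along a history that has remained in $s^*$ since the start. Consequently the update reduces to the deterministic map $\stra^u(a_1,a_2,\cdot)\colon \Mem \to \Mem$, and the memory element $t(x)$ reached after a history $x$ of length at most $T$ is exactly one reachable from the initial memory $\mem_0$ by at most $T$ applications of these maps. I would first record that once a history leaves $s^*$ the memory stays frozen, so the set of memory elements used for histories of length at most $T$ coincides with the set $R_T \subseteq \Mem$ of memory elements reachable within $T$ update steps from $\mem_0$; thus $\timedep_{\stra}(T) = |R_T|$.

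Next I would set up the chain $R_0 \subseteq R_1 \subseteq R_2 \subseteq \cdots$, where $R_0 = \set{\mem_0}$ and $R_{T+1}$ adjoins to $R_T$ all one-step successors of its elements under the maps $\stra^u(a_1,a_2,\cdot)$. Without loss of generality every memory element of $\Mem$ is reachable and induces a distinct continuation, since unreachable elements can be deleted and behaviourally equivalent ones merged; under this normalization the hypothesis that $\stra$ requires infinite memory is precisely that $\Mem = \bigcup_{T\geq 0} R_T$ is infinite.

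The key step is a stabilization argument. Suppose $R_T = R_{T+1}$ for some $T$. Then $R_T$ is closed under every map $\stra^u(a_1,a_2,\cdot)$, so no memory element outside $R_T$ is ever reachable and $\bigcup_{T'} R_{T'} = R_T$ is finite, contradicting that $\stra$ requires infinite memory. Hence $R_T \subsetneq R_{T+1}$ for all $T$, giving $|R_{T+1}| \geq |R_T| + 1$. Since $|R_0| = 1$, induction yields $|R_T| \geq T+1 > T$, and therefore $\timedep_{\stra}(T) = |R_T| \geq T$ for all $T>0$, as claimed.

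The main obstacle I anticipate is the bookkeeping in the first paragraph: making precise that in a repeated game with absorbing states $\timedep_{\stra}(T)$ genuinely equals $|R_T|$ — that leaving $s^*$ freezes the memory and that only histories confined to $s^*$ drive the update — together with the reduction ``$\stra$ requires infinite memory $\iff$ infinitely many reachable memory elements,'' which relies on the standard normalization removing unreachable elements and identifying behaviourally equivalent ones. Once that identification is in place the counting argument is immediate, and no finiteness of the action sets or out-degree is needed.
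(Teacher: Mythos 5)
Your proof takes essentially the same route as the paper's. The paper considers the directed graph on memory states whose edges are given by the update maps $\stra^u(a_1,a_2,\cdot)$, observes that it is finitely branching, and invokes the dichotomy that a finitely-branching graph with infinitely many reachable vertices must reach a new vertex at every depth; your chain $R_0\subseteq R_1\subseteq\cdots$ together with the stabilization argument is the same dichotomy written additively. Your preliminary bookkeeping (memory freezes once the play leaves $s^*$, so $\timedep_{\stra}(T)=|R_T|$) makes explicit what the paper leaves implicit, and is fine; note also that your normalization is heavier than needed, since only the direction ``$\stra$ requires infinite memory $\Rightarrow$ infinitely many memory elements are reachable'' is used, and that direction needs no merging of equivalent states: if only finitely many elements were reachable, restricting $\Mem$ to the reachable part would give a finite-memory implementation.

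The one genuine error is your closing claim that ``no finiteness of the action sets or out-degree is needed.'' Finiteness is used, at precisely the step where you assert that upon stabilization $\bigcup_{T'}R_{T'}=R_T$ \emph{is finite}. That assertion holds only because each memory element has at most $|\mov_1(s^*)|\cdot|\mov_2(s^*)|$ successors under the update maps, so every $R_{T'}$ is finite by induction on $T'$ --- exactly the finite out-degree observation that the paper makes the crux of its proof. If the out-degree were infinite, $R_1$ could already be infinite and the chain could stabilize immediately ($R_1=R_2=\cdots$), so the strict-growth claim you are proving by contradiction would simply be false. The conclusion of the lemma can be rescued without finiteness, but by a different case analysis than the one you wrote: if stabilization occurs at some $T_0\leq T$, then $R_T=R_{T_0}$ equals the full reachable set, which is infinite since $\stra$ requires infinite memory, so $|R_T|\geq T$ holds trivially in that case; otherwise strict growth up to step $T$ gives $|R_T|\geq T+1$. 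As written, however, your argument does rely on the finiteness of the action sets; with that one observation added (it is available by definition of a concurrent game structure), the proof is complete and coincides with the paper's.
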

\begin{proof}
Let $\sigma$ be a strategy that requires infinite memory. 
Consider the directed graph where the states are the memory states of $\sigma$ and where there is an edge from state $\mem$ to state $\mem'$, 
if there exists an action $a$ consistent with $\sigma$ and an action $b$ for the other player, such that $\sigma^u(a,b,\mem)=\mem'$. Since the set of actions for each player is finite, the out-degree of all states are finite.

We have by definition of $\sigma$ that the graph is infinite  and we can reach infinitely many memory states from the start state. In a graph where each state has finite out-degree there are two possibilities. 
Either it is possible to reach a state from the start state in $T$ steps that is not reachable in $T-1$ steps, for each $T>0$; 
or only a finite number of states can be reached from the start state. Since we can reach an infinite number of states from the start state we must be in the first case. 
Therefore at least $T$ memory states can be reached from the start state in $T$ steps, for all $T>0$.
\end{proof}

Recall that a Markov strategy is an infinite-memory strategy with 
time-dependent memory of size $T$, for all $T>0$.
In view of Lemma~\ref{lemm:log T} it follows that if infinite-memory 
requirement is established for repeated games with absorbing states, 
then time-dependent memory bound of Markov strategies match the lower bound 
of the time-dependent memory.

\smallskip\noindent{\bf Infinite-memory for almost-sure winning strategies.}
In case of concurrent reachability games, stationary almost-sure winning strategies
exist.
In contrast we show that for concurrent games with boolean reward functions,
almost-sure winning strategies for exact qualitative constraint require infinite memory.

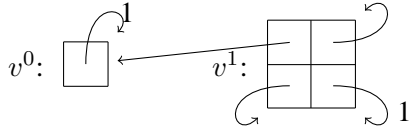
\begin{figure}
\centering
\begin{tikzpicture}[node distance=3cm]
\matrix (1) [label=left:$v^0$:,minimum height=1.5em,minimum width=1.5em,matrix of math nodes,nodes in empty cells, left delimiter={.},right delimiter={.}]
{
\\
};
\draw[black] (1-1-1.north west) -- (1-1-1.north east);
\draw[black] (1-1-1.south west) -- (1-1-1.south east);
\draw[black] (1-1-1.north west) -- (1-1-1.south west);
\draw[black] (1-1-1.north east) -- (1-1-1.south east);
\matrix (v1) [label=left:$v^1$:,right of=1,minimum height=1.5em,minimum width=1.5em,matrix of math nodes,nodes in empty cells, left delimiter={.},right delimiter={.}]
{
&\\
&\\
};
\draw[black] (v1-1-1.north west) -- (v1-1-2.north east);
\draw[black] (v1-1-1.south west) -- (v1-1-2.south east);
\draw[black] (v1-2-1.south west) -- (v1-2-2.south east);
\draw[black] (v1-1-1.north west) -- (v1-2-1.south west);
\draw[black] (v1-1-2.north west) -- (v1-2-2.south west);
\draw[black] (v1-1-2.north east) -- (v1-2-2.south east);
\draw[->](1-1-1.center) .. controls +(90:1) and +(45:1) ..  node[midway,above,right] (x) {1} (1);
\draw[->](v1-1-1.center) to  (1);
\draw[->](v1-1-2.center) .. controls +(0:1) and +(45:1.5) ..   (v1);
\draw[->](v1-2-1.center) .. controls +(180:1) and +(-135:1.5) ..   (v1);
\draw[->](v1-2-2.center) .. controls +(0:1) and +(-45:1.5) ..  node[midway,below,right] (x) {1} (v1);
\end{tikzpicture}

\caption{The example illustrates $G^1$ where all states are in $\Almost_1(\LimInfAvg(1))$, but no finite-memory almost-sure winning strategy 
exists for player~1 for the objective $\LimInfAvg(1)$.
All transitions with reward different from~0 (i.e., reward~1) have the reward annotated on the transition. \label{fig:G1}}
\end{figure}

\begin{figure}
\centering
\begin{tikzpicture}[node distance=3cm]
\matrix (v0) [label=left:$v^0$:,minimum height=1.5em,minimum width=1.5em,matrix of math nodes,nodes in empty cells, left delimiter={.},right delimiter={.}]
{
\\
};
\draw[black] (v0-1-1.north west) -- (v0-1-1.north east);
\draw[black] (v0-1-1.south west) -- (v0-1-1.south east);
\draw[black] (v0-1-1.north west) -- (v0-1-1.south west);
\draw[black] (v0-1-1.north east) -- (v0-1-1.south east);
\matrix (v1) [label=left:$v^1$:,right of=v0,minimum height=1.5em,minimum width=1.5em,matrix of math nodes,nodes in empty cells, left delimiter={.},right delimiter={.}]
{
&\\
&\\
};
\draw[black] (v1-1-1.north west) -- (v1-1-2.north east);
\draw[black] (v1-1-1.south west) -- (v1-1-2.south east);
\draw[black] (v1-2-1.south west) -- (v1-2-2.south east);
\draw[black] (v1-1-1.north west) -- (v1-2-1.south west);
\draw[black] (v1-1-2.north west) -- (v1-2-2.south west);
\draw[black] (v1-1-2.north east) -- (v1-2-2.south east);
\matrix (v2) [label=left:$v^2$:,right of=v1,minimum height=1.5em,minimum width=1.5em,matrix of math nodes,nodes in empty cells, left delimiter={.},right delimiter={.}]
{
&\\
&\\
};
\draw[black] (v2-1-1.north west) -- (v2-1-2.north east);
\draw[black] (v2-1-1.south west) -- (v2-1-2.south east);
\draw[black] (v2-2-1.south west) -- (v2-2-2.south east);
\draw[black] (v2-1-1.north west) -- (v2-2-1.south west);
\draw[black] (v2-1-2.north west) -- (v2-2-2.south west);
\draw[black] (v2-1-2.north east) -- (v2-2-2.south east);
\matrix (v3) [label=left:$v^3$:,right of=v2,minimum height=1.5em,minimum width=1.5em,matrix of math nodes,nodes in empty cells, left delimiter={.},right delimiter={.}]
{
&\\
&\\
};
\draw[black] (v3-1-1.north west) -- (v3-1-2.north east);
\draw[black] (v3-1-1.south west) -- (v3-1-2.south east);
\draw[black] (v3-2-1.south west) -- (v3-2-2.south east);
\draw[black] (v3-1-1.north west) -- (v3-2-1.south west);
\draw[black] (v3-1-2.north west) -- (v3-2-2.south west);
\draw[black] (v3-1-2.north east) -- (v3-2-2.south east);
\draw[->](v0-1-1.center) .. controls +(90:1) and +(45:1) ..  node[midway,above,right] (x) {1} (v0);
\draw[->](v1-1-1.center) to  (v0);
\draw[->](v1-1-2.center) to[bend left=75]  (v3);
\draw[->](v1-2-1.center) to[bend right=75]  (v3);
\draw[->](v1-2-2.center) .. controls +(0:1) and +(-45:1.5) ..  node[midway,below,right] (x) {1} (v1);
\draw[->](v2-1-1.center) to  (v1);
\draw[->](v2-1-2.center) to[bend left=75]  (v3);
\draw[->](v2-2-1.center) to[bend right=75]  (v3);
\draw[->](v2-2-2.center) .. controls +(0:1) and +(-45:1.5) ..  node[midway,below,right] (x) {1} (v2);
\draw[->](v3-1-1.center) to  (v2);
\draw[->](v3-1-2.center) .. controls +(0:1) and +(45:1.5) ..   (v3);
\draw[->](v3-2-1.center) .. controls +(180:1) and +(-135:1.5) ..   (v3);
\draw[->](v3-2-2.center) .. controls +(0:1) and +(-45:1.5) ..  node[midway,below,right] (x) {1} (v3);
\end{tikzpicture}

\caption{The example illustrates $G^3$ of the family $\{G^n\}$ where all states are in
$\bigcap_{\epsilon>0}\Almost_1(\LimInfAvg(1-\epsilon))$ but all witness stationary strategies that ensure 
so for player~1 require patience at least double exponential in $n$. 
All transitions with reward different from~0 (i.e., reward~1) have the reward annotated on the transition. \label{fig:G3}}
\end{figure}
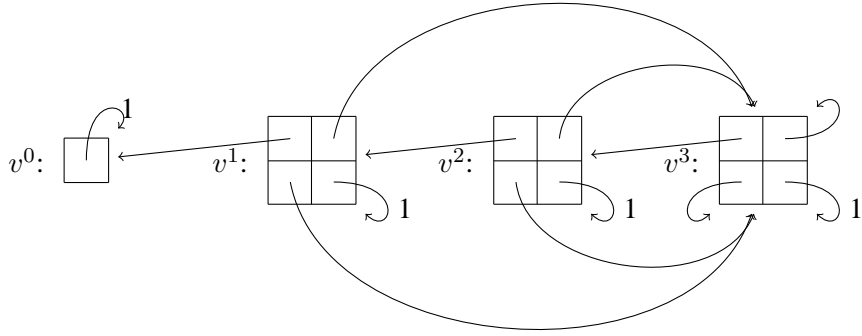

\smallskip\noindent{\em Game family.}
Let $G^n$ be the following game. The game $G^n$ has $n+1$ states, namely, $v^0$, $v^1$, $\dots$ $v^n$. The state $v^0$ is absorbing with reward~1.
For $\ell \geq 1$, the state $v^\ell$ has two actions for both players. The actions are $a^\ell_1$ and $a^\ell_2$ for player~1 and $b^\ell_1$ and $b^\ell_2$ for player~2,
respectively. 
Also $\cost(v^\ell,a^\ell_i,b^\ell_j)=0$ except for $i=j=2$, for which  $\cost(v^\ell,a^\ell_2,b^\ell_2)=1$. 
Furthermore, (i)~$\delta(v^\ell,a^\ell_i,b^\ell_j)=v^n$ for $i\neq j$; (ii)~$\delta(v^\ell,a^\ell_1,b^\ell_1)=v^{\ell-1}$; and (iii)~$\delta(v^\ell,a^\ell_2,b^\ell_2)=v^{\ell}$. 
There is an illustration of $G^1$ in Figure \ref{fig:G1} and an illustration of $G^3$ in Figure \ref{fig:G3}. 
We first show that all states are in $X^*$: in $G^n$, if we consider $X^*$ to be the set of all states
and evaluate $\mu Y. \nu Z. \ASP(X^*,Y,Z)$, then we obtain that 
$Y_0=\emptyset$, and for $i\geq 0$ we have $Y_{i+1}=\set{v^0,v^1,\ldots,v^{i}}$
because $\Allow(v^j,X^*)$ is the set of all actions available for player~1 at state $v^j$, for all $1\leq j \leq n$, 
and $\Bad(v^{i},X^*,Y_{i})=b^{i}_1$ and $\Good(v^{i},X^*,Y_{i},Y_{i+1})=a^{i}_2$.
Thus it follows that $X^*$ is the set of all states, i.e., all states belong to $\Almost_1(\LimInfAvg(1))$.
In this specific example, a Markov strategy that in round~$j\geq 0$,
for $2^{j^2}$-steps plays $a_1^1$ with probability $\frac{1}{2^{j}}$ and 
$a_2^1$ with probability $1-\frac{1}{2^{j}}$, and then goes to round $j+1$, 
is an almost-sure winning strategy for the objective $\LimInfAvg(1)$.
Note that the strategy construction described in Lemma~\ref{lemm:almost1}
and Lemma~\ref{lemm:almost2} would yield a different Markov strategy as a 
witness almost-sure winning strategy.

\begin{lemma}\label{lemm:almost super constant}
All almost-sure winning strategies for player~1 in the game $G^1$ require infinite memory for the objective $\LimInfAvg(1)$.
\end{lemma}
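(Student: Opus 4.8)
The plan is to prove the contrapositive: every \emph{finite}-memory strategy of player~1 in $G^1$ fails to be almost-sure winning for $\LimInfAvg(1)$. Since almost-sure winning strategies do exist (the computation just above the lemma shows all states lie in $X^*=\Almost_1(\LimInfAvg(1))$), this forces every almost-sure winning strategy to use infinite memory. So I would fix an arbitrary finite-memory $\sigma_1=(\sigma_1^n,\sigma_1^u)$ with finite memory set $\Mem$. Recall the four outcomes at $v^1$: the pair $(a_1,b_1)$ is the only one reaching the absorbing reward-$1$ state $v^0$; the pair $(a_2,b_2)$ gives reward~$1$ and stays in $v^1$; and the two mismatches give reward~$0$ and stay in $v^1$. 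For each reachable $m\in\Mem$ set $p_m=\sigma_1^n(v^1,m)(a_1)$ and partition $\Mem=M_0\cup M_+$, where $M_0=\set{m\mid p_m=0}$ and $M_+=\set{m\mid p_m>0}$.

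The core of the argument is a spoiling strategy $\tau$ for player~2. Since $\sigma_1^u$ is deterministic and player~2 observes the entire history of action pairs, player~2 can reconstruct player~1's current memory state $m_t$ at every round \emph{before} choosing its own move. I let $\tau$ play $b_2$ whenever $m_t\in M_+$ and play $b_1$ whenever $m_t\in M_0$. I would first verify that under $(\sigma_1,\tau)$ the absorbing state $v^0$ is never reached: from $M_+$ player~2 plays $b_2$, so the absorbing pair $(a_1,b_1)$ cannot occur, and from $M_0$ player~1 plays $a_2$ with probability~$1$, so again $(a_1,b_1)$ cannot occur. Hence the play stays in $v^1$ forever, and reward~$1$ is earned in a round exactly when $m_t\in M_+$ and player~1 happens to play $a_2$; every other round gives reward~$0$.

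To finish I would analyse the finite Markov chain that $(\sigma_1,\tau)$ induces on $\Mem$. Almost surely this chain is absorbed into some bottom SCC $B$, and I would bound the long-run frequency of reward-$0$ rounds below by a positive constant in either case: if $B\cap M_0\neq\emptyset$ then the stationary frequency of visiting an $M_0$-state (all giving reward~$0$) is positive; and if $B\subseteq M_+$ then $\sum_{m\in B}\pi_B(m)\,p_m>0$ since every $p_m$ in $B$ is strictly positive, which lower-bounds the frequency of the reward-$0$ rounds where player~1 plays $a_1$. Either way there is $c'>0$ with $\LimInfAvg(\pat)\le 1-c'$ almost surely, giving $\Pr_{v^1}^{\sigma_1,\tau}(\LimInfAvg(1))=0$, so $\sigma_1$ is not almost-sure winning.

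The main obstacle, and the reason infinite memory is genuinely required, is the simultaneous incompatibility of the two demands on any winning strategy: to beat the ``always $b_2$'' behaviour the $a_1$-frequency must vanish, yet to escape to $v^0$ under ``always $b_1$'' player~1 must keep playing $a_1$. The strategy $\tau$ exploits exactly this tension by steering player~2's move off the finite partition $\set{M_0,M_+}$ of player~1's memory. The delicate points to check are that this steering never accidentally enables absorption (dispatched by the two cases above) and that the reward-$0$ frequency stays bounded away from~$0$ on every bottom SCC, which is precisely where the finiteness of $\Mem$ is used; an infinite-memory strategy (such as the one playing $a_1$ with probability $2^{-j}$ in round blocks) escapes the argument because the induced chain has no finite recurrent structure to exploit.
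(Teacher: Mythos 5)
Your proposal is correct and takes essentially the same route as the paper: the identical spoiling strategy (player~2 reconstructs player~1's memory state and plays $b_1$ exactly when that state plays $a_2$ with probability~1, otherwise $b_2$), the same verification that $v^0$ is never reached, and the same conclusion that reward~$0$ occurs with long-run frequency bounded away from~$0$. The only difference is in the finishing step -- the paper uses the uniform bound $p$ (the least non-zero probability of $a_1$ over the finitely many memory states) to cap the per-round reward-$1$ probability at $1-p$, whereas you run a bottom-SCC/ergodic analysis of the induced finite Markov chain on $\Mem$ -- but this is a cosmetic variation, not a different argument.
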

\begin{proof}
The proof will be by contradiction. Assume towards contradiction that there is a strategy $\sigma_1$ that uses only a finite number of memory states and is almost-sure winning for the
objective $\LimInfAvg(1)$.  
Let the smallest non-zero probability the strategy $\sigma_1$ plays $a^1_1$ in any memory state be $p$. 
We will show that there exists a strategy $\sigma_2$ for player~2 that ensures 
\[
\Pr\nolimits_{v^1}^{\sigma_1,\sigma_2}(\ov{\LimSupAvg}_{\leq}(1-p))=1.
\]
The strategy $\sigma_2$ for player~2 is to play $b^1_1$ (in $v^1$) if given the play so far, the strategy $\sigma_1$ is in a memory state where $a^1_2$ is played with probability~1. 
Otherwise player~2 plays $b^1_2$ (in $v^1$). Hence, the probability to reach $v^0$ from $v^1$ is~0. But the probability that $a^1_2$ is played at the same time as $b^1_1$ in $v^1$ 
is then at most $1-p$ in any round.
Thus we have 
$\Pr\nolimits_{v^1}^{\sigma_1,\sigma_2}(\ov{\LimSupAvg}_{\leq}(1-p))=1$ contradicting that $\sigma_1$ is an 
almost-sure winning strategy for the objective $\LimInfAvg(1)$.
It follows that every almost-sure winning strategy for player~1 requires infinite memory for the objective $\LimInfAvg(1)$ 
(note that since all states are in $X^*$ in $G^1$ it follows that almost-sure winning strategies exist for player~1).
\end{proof}

\smallskip\noindent{\bf Double exponential lower bound for patience.}
We have already established in the previous section (Lemma~\ref{lemm:almost1} and Theorem~\ref{thm:almost}) that for all $\epsilon>0$
stationary almost-sure winning strategies exist with at most double exponential patience 
for objectives $\LimInfAvg(1-\epsilon)$, for all states in $X^*$.
We now establish a double exponential lower bound on patience.

\begin{lemma}\label{lemm:almost patience lower bound}
Let $n$ be given. Given $0<\epsilon\leq \frac{1}{3}$, let $\sigma_1$ be a stationary strategy  
for player~1 that achieves 
\begin{eqnarray}\label{eq:better than 1-epsilon}
\forall v\in X^* \forall \sigma_2:\Pr\nolimits_{v}^{\sigma_1,\sigma_2}(\LimInfAvg(1-\epsilon))=1,
\end{eqnarray} 
in $G^n$. Then  $\sigma_1$ has patience at least $\epsilon^{-1.5^{n-1}}$.  
\end{lemma}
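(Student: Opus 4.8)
The plan is to read off from $\sigma_1$ the $n$ numbers $p_\ell=\sigma_1(v^\ell)(a^\ell_1)$ (the probability of the ``descend'' action at $v^\ell$), so that $\sigma_1$ is completely described by $(p_1,\dots,p_n)$ and its patience is $\max_\ell\max\{1/p_\ell,\,1/(1-p_\ell)\}$. First I would rule out degeneracies: if some $p_\ell=0$ then player~1 can never descend through level $\ell$, so against a suitable positional spoiler (see below) the unique reward-bearing state becomes unreachable, the limit-average is $0$, contradicting~\eqref{eq:better than 1-epsilon}; hence every $p_\ell\in(0,1)$. The key idea is then to instantiate the universal quantifier ``$\forall\sigma_2$'' in~\eqref{eq:better than 1-epsilon} with a family of positional spoilers $\tau_j$ ($1\le j\le n$): $\tau_j$ plays $b^j_2$ at $v^j$ and $b^\ell_1$ at every $v^\ell$ with $\ell>j$ (the actions below level $j$ are irrelevant, since those states become unreachable). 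Intuitively $\tau_j$ blocks all descent below $v^j$ while handing out reward only at $v^j$, thereby forcing player~1 to spend almost all of its time sitting at $v^j$.

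For each $j$ I would analyse the finite Markov chain induced by $\sigma_1$ and $\tau_j$. It is irreducible on $\{v^j,\dots,v^n\}$, so from $v^n$ the limit-average reward is almost surely the constant $\pi_j\,(1-p_j)$, where $\pi$ is the stationary distribution and reward~$1$ is earned only at $v^j$ (when player~1 plays $a^j_2$). The decisive structural feature is that \emph{every mismatch resets the play all the way to $v^n$} while a successful descent moves exactly one level; consequently the balance equations collapse to $p_j\pi_j=p_{j+1}\pi_{j+1}$ and $\pi_k=p_{k+1}\pi_{k+1}$ for $j<k<n$, which give
\[
\frac{\pi_j}{\pi_n}=\frac{1}{p_j}\prod_{\ell=j+1}^{n}p_\ell .
\]
Now~\eqref{eq:better than 1-epsilon} forces $\pi_j(1-p_j)\ge 1-\epsilon$, hence $\pi_j\ge 1-\epsilon$ and $\pi_n\le 1-\pi_j\le\epsilon$, so $\pi_j/\pi_n\ge(1-\epsilon)/\epsilon$. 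Combining this with the displayed identity yields the recursion
\[
p_j\ \le\ \frac{\epsilon}{1-\epsilon}\prod_{\ell=j+1}^{n}p_\ell ,
\]
whose base case $j=n$ (the spoiler $\tau_n$ keeps the play at $v^n$ forever with reward $1-p_n$) reads $p_n\le\epsilon$. I expect this stationary-distribution computation to be the main obstacle: it is exactly the reset-to-$v^n$ geometry that turns a per-level penalty into the \emph{product} $\prod_{\ell>j}p_\ell$, and this is the source of the doubly-exponential blow-up.

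Finally I would solve the recursion. Setting $e_\ell=\log_{1/\epsilon}(1/p_\ell)\ge 0$, it becomes $e_j\ge 1+\sum_{\ell>j}e_\ell-\log_{1/\epsilon}\tfrac{1}{1-\epsilon}$, and since $\epsilon\le\tfrac13$ gives $\log_{1/\epsilon}\tfrac{1}{1-\epsilon}\le\tfrac12$, we obtain $e_j\ge\tfrac12+\sum_{\ell>j}e_\ell$ together with $e_n\ge 1$. A routine induction on $k$ then shows $e_{n-k}\ge 1.5^{\,k}$ (the step uses $\tfrac12+\sum_{i=0}^{k-1}1.5^{\,i}=2\cdot 1.5^{\,k}-\tfrac32\ge 1.5^{\,k}$ for $k\ge1$), whence $e_1\ge 1.5^{\,n-1}$, i.e.\ $p_1\le\epsilon^{1.5^{n-1}}$. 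As $p_1\le\epsilon\le\tfrac13<\tfrac12$, the smallest non-zero probability used by $\sigma_1$ is at most $p_1$, so its patience is at least $1/p_1\ge\epsilon^{-1.5^{n-1}}$, as claimed. (The same recursion in fact yields the stronger exponent $2^{n-1}$; I would keep the weaker $1.5^{n-1}$ only to make the inductive constants trivial.)
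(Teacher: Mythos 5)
Your proof is correct and takes essentially the same route as the paper: the identical family of positional spoilers (play $b^j_2$ at the blocked level $v^j$ and $b^\ell_1$ at every level above it), the same per-level inequality $p_j \le \frac{\epsilon}{1-\epsilon}\prod_{\ell>j}p_\ell$ with base case $p_n\le\epsilon$, and the same induction yielding the exponent $1.5^{n-1}$. The only difference is bookkeeping: you derive that inequality exactly from the stationary distribution of the induced irreducible Markov chain and then solve the recursion in a separate induction after a logarithmic substitution, whereas the paper obtains the same inequality from an expected-hitting-time/renewal estimate and interleaves it with the induction on the exponent.
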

\begin{proof}
Let $n$ be given. Let $\sigma_1$ be any stationary strategy that satisfies the condition of the lemma 
(i.e., Equation~\ref{eq:better than 1-epsilon}).
Let $x_i=\sigma_1(v^{n-i})(a^{n-i}_1)$. 
First notice that $x_i>0$, otherwise, consider a stationary strategy $\sigma_2$ for player~2 such that $\sigma_2(v^{n-i})(b^{n-i}_1)=1$, 
which ensures that all payoffs of any play starting in $v^{n-i}$ would be~0.
We will now show that $x_i\leq \epsilon^{1.5^i}$ for $i<n$. The proof will be by induction on $i$. 
The proof will use two base cases $i=0$ and $i=1$, because the inductive proof then becomes simpler.

\smallskip\noindent{\em First base case.}  First the base case $i=0$. We have that $x_0\leq \epsilon$, because if $\sigma_2$ is a stationary strategy such that $\sigma_2(v^n)(b^n_2)=1$, 
then $\sigma_1(v^n)(a^n_2)\geq 1- \epsilon$ because it must satisfy the Equation~\ref{eq:better than 1-epsilon}. 
Since Equation~\ref{eq:better than 1-epsilon} is satisfied for all $\sigma_2$ we have that $0<x_0\leq \epsilon$ as desired.

\smallskip\noindent{\em Second base case.}  The second base case is for $i=1$. Let $P$ be a play starting in $v^{n-1}$. 
If $\sigma_2$ is a stationary strategy such that $\sigma_2(v^{n-1})(b^n_2)=1$ and $\sigma_2(v^{n})(b^n_1)=1$, 
then any time there is a reward of~1, the play must be in state $v^{n-1}$. But whenever $v^n$ is reached we expect at least $\epsilon^{-1}$ time steps with reward~0, before the play reaches $v^{n-1}$. 
Therefore $x_1$ must be such that $\frac{\epsilon^{-1}\cdot x_1}{1-x_1+\epsilon^{-1}\cdot x_1}\leq \epsilon$ because it must satisfy the Equation~\ref{eq:better than 1-epsilon}. 
Hence, we have that\[\begin{split}
\frac{\epsilon^{-1}\cdot x_1}{1-x_1+\epsilon^{-1}\cdot x_1}\leq \epsilon \Rightarrow\\
\frac{\epsilon^{-1}}{x_1^{-1}-1+\epsilon^{-1}}\leq \epsilon \Rightarrow\\
\epsilon^{-1}\leq \epsilon\cdot (x_1^{-1}-1+\epsilon^{-1}) \Rightarrow\\
\epsilon^{-2}\leq x_1^{-1}-1+\epsilon^{-1} \Rightarrow\\
\epsilon^{-1.5}\leq x_1^{-1} \enspace ,
\end{split}
\]
where the last implication is because $\epsilon\leq \frac{1}{3}$. 
Since Equation \ref{eq:better than 1-epsilon} is satisfied for all $\sigma_2$ we have that $0<x_1\leq \epsilon^{1.5}$ as desired.

\smallskip\noindent{\em Inductive case.}  We now consider the inductive case for $i>1$. The proof is similar to the base cases, especially the second. If $\sigma_2$ is a stationary strategy such that both $\sigma_2(v^n)(b^{n-i}_2)=1$ and $\sigma_2(v^n)(b^{n-j}_1)=1$ for $j<i$, then for any play starting in $v^{n-i}$ can only get a reward of 1 in $v^{n-i}$. But by induction $\prod_{j=0}^{i-1}x_j\leq \prod_{j=0}^{i-1}\epsilon^{1.5^j}=\epsilon^{\sum_{j=0}^{i-1} 1.5^j}=\alpha_i$. This implies that more than $\alpha_i^{-1}$ steps are needed to reach $v^{n-i}$ from $v^n$ (because clearly the play must pass through state $v^{n-j}$ for $j\leq i$). Hence whenever the play is in $v^{n-i}$, there is a reward of 1 with probability $1-x_i$ and a reward of $0$ for more than $\alpha_i^{-1}$ time steps with probability $x_i$. Hence $x_i$ must be such that \[\begin{split}
\frac{\alpha_i^{-1}\cdot x_i}{1-x_i+\alpha_i^{-1}\cdot x_i}& \leq \epsilon \Rightarrow\\
\frac{\alpha_i^{-1}}{x_i^{-1}-1+\alpha_i^{-1}}& \leq \epsilon \Rightarrow\\
\alpha_i^{-1}& \leq \epsilon\cdot (x_i^{-1}-1+\alpha_i^{-1}) \Rightarrow\\
\frac{\alpha_i^{-1}}{\epsilon}& \leq x_i^{-1}-1+\alpha_i^{-1} \Rightarrow\\
\frac{\alpha_i^{-1}}{\epsilon}+1-\alpha_i^{-1}& \leq x_i^{-1} \Rightarrow\\
\left(\frac{1}{\epsilon}-1\right)\cdot \alpha_i^{-1}& \leq x_i^{-1}\Rightarrow \\
 \alpha_i^{-1}& \leq x_i^{-1} \enspace ,\\
\end{split}
\]
where the last implication comes from the fact that $\epsilon\leq\frac{1}{3}\leq \frac{1}{2}$. 
But since $\sum_{j=0}^{i-1}1.5^j>1.5^i$ for $i>1$, the result follows.
\end{proof}

\begin{theorem}[Strategy complexity]
For concurrent games with boolean reward functions the 
following assertions hold:
\begin{enumerate}

\item Almost-sure winning strategies for objectives $\LimInfAvg(1)$ 
(and $\LimSupAvg(1)$) for player~1 require infinite memory in general;
whenever there exists an almost-sure winning strategy for objectives 
$\LimInfAvg(1)$ (and $\LimSupAvg(1)$), then a Markov almost-sure winning 
strategy exists; and the optimal bound for time-dependent memory is $T$, 
for all rounds $T>0$.

\item For all $\epsilon>0$, stationary almost-sure winning strategies 
exist for player~1 for objectives $\LimInfAvg(1-\epsilon)$ (and $\LimSupAvg(1-\epsilon)$);
and the asymptotically optimal bound for patience for such stationary 
almost-sure winning strategies is double exponential in the size of the state space.

\item Positive winning strategies for player~2 for objectives 
$\ov{\LimInfAvg}(1)$ and $\ov{\LimInfAvg}_{\leq}(1-c)$, for some constant $c>0$, 
(also $\ov{\LimSupAvg}(1)$ and $\ov{\LimSupAvg}_{\leq}(1-c)$, for some constant $c>0$)
require infinite-memory in general; whenever such positive winning strategies exist, 
Markov strategies are sufficient and the optimal bound for time-dependent memory is $T$, 
for all rounds $T>0$.

\end{enumerate}
\end{theorem}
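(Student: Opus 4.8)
The plan is to assemble the three items from the lemmas already established, supplying new games only for the lower bounds. All of the \emph{sufficiency} (upper bound) statements come for free. For item~1, almost-sure winning of $\LimInfAvg(1)$ is witnessed by the Markov strategy constructed in Lemma~\ref{lemm:almost2} out of the stationary strategies of Lemma~\ref{lemm:almost1}; by Lemma~\ref{lemm:incl1} and the set-equalities of Theorem~\ref{thm:almost} this strategy is available on all of $X^*=\Almost_1(\LimInfAvg(1))$, which is exactly the set where an almost-sure winning strategy exists. For item~2, the existence of stationary $(1-\epsilon)$-winning strategies is precisely Lemma~\ref{lemm:almost1}, and its patience bound $\left(\frac{n m}{\delta_{\min}\epsilon}\right)^{n^{n+2}}$ is double-exponential in $n$, giving the patience upper bound. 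For item~3, Markov sufficiency of player-2 spoiling strategies is the ``moreover'' clause of Lemma~\ref{lemm:not-almost}. The $\LimSupAvg$ versions in each item follow from the inclusions $\LimInfAvg(1-\epsilon)\subseteq\LimSupAvg(1-\epsilon)$ and $\ov{\LimSupAvg}_{\leq}(1-c)\subseteq\ov{\LimInfAvg}_{\leq}(1-c)$ together with the equalities of Theorem~\ref{thm:almost}.

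What remains are the lower bounds. For item~1, the infinite-memory requirement is Lemma~\ref{lemm:almost super constant} applied to $G^1$, which has a single live state $v^1$ and the absorbing state $v^0$, hence is a repeated game with absorbing states; Lemma~\ref{lemm:log T} therefore upgrades ``infinite memory is required'' to $\timedep_{\sigma}(T)\geq T$ for every almost-sure winning $\sigma$, while a Markov strategy meets this with $\timedep(T)=T$, pinning the optimal time-dependent memory at exactly $T$. For item~2, the matching lower bound on patience is Lemma~\ref{lemm:almost patience lower bound}, which shows that any stationary $(1-\epsilon)$-winning strategy in $G^n$ has patience at least $\epsilon^{-1.5^{n-1}}$; combined with the upper bound this fixes the asymptotically optimal patience as double-exponential in the number of states.

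For item~3 the remaining obligation is to exhibit a game where player-2 positive winning genuinely needs infinite memory. I would use the canonical value-one concurrent reachability gadget: a single live state $s^*$ with absorbing states $w$ (reward~$1$) and $\ell$ (reward~$0$), whose one-shot matrix, writing $v$ for the continuation value in the ``stay'' entry, is $\left(\begin{smallmatrix} v & 1 \\ 1 & 0\end{smallmatrix}\right)$. Its reachability value satisfies $v=\frac{1}{2-v}$, whose unique fixpoint is $v=1$, approached but never attained (value iteration yields $v_k=\frac{k}{k+1}$). Hence $s^*\notin\Almost_1(\Reach(w))=\Almost_1(\LimInfAvg(1))=X^*$, so $s^*\in S\setminus X^*=\Positive_2(\ov{\LimInfAvg}(1))$ by Theorem~\ref{thm:almost}, and Lemma~\ref{lemm:not-almost} supplies a \emph{Markov} spoiler driving $\LimSupAvg$ to $0\le 1-c$. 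Since this gadget is again a repeated game with absorbing states, once infinite memory is shown necessary, Lemma~\ref{lemm:log T} gives $\timedep_{\sigma_2}(T)\geq T$, matched by the Markov spoiler.

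The genuinely delicate step, and the main obstacle, is proving that no finite-memory player-2 strategy spoils, i.e.\ that against every finite-memory $\sigma_2$ player~1 has a response reaching $w$ with probability~$1$. The plan is: against any fixed distribution with $\Pr(b_1)=q<1$, player~1 playing $a_1$ forever reaches $w$ with probability~$1$ (each round it wins with probability $1-q>0$ and otherwise stays, never entering $\ell$), and $q$ cannot be held at $1$ permanently since then player~1 switching to $a_2$ reaches $w$ in one step. I would make this uniform by fixing $\sigma_2$, taking the product with its finite memory, and arguing that in any recurrent behaviour the probability player~2 puts on $b_2$ is bounded below infinitely often, so player~1's $a_1$-attempts give independent, lower-bounded chances at $w$; a Borel--Cantelli argument then forces reachability almost surely, contradicting positive spoiling. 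Since concurrent reachability is exactly the special case of boolean limit-average with qualitative constraint in which the target is absorbing with reward~$1$, one may alternatively inherit this infinite-memory lower bound directly from the reachability result of de Alfaro--Henzinger--Kupferman.
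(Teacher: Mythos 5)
Your proposal is correct, and for items~1 and~2 it coincides with the paper's proof: the same lemmas are invoked for the same purposes (Lemmas~\ref{lemm:almost1}, \ref{lemm:almost2}, \ref{lemm:incl1} and Theorem~\ref{thm:almost} for Markov/stationary sufficiency; Lemma~\ref{lemm:almost super constant} together with Lemma~\ref{lemm:log T} for the time-dependent memory lower bound, using that $G^1$ is a repeated game with absorbing states; Lemma~\ref{lemm:almost patience lower bound} against Lemma~\ref{lemm:almost1} for the double-exponential patience bounds). The genuine difference is item~3. The paper constructs no new game there: it observes that for concurrent reachability/safety games the objectives $\ov{\LimInfAvg}(1)$ and $\ov{\LimInfAvg}_{\leq}(1-c)$ coincide, imports the infinite-memory requirement for player-2 spoiling strategies directly from~\cite{dAHK98}, notes that the witness game of~\cite{dAHK98} is a repeated game with absorbing states, and then applies Lemma~\ref{lemm:log T}; Markov sufficiency comes from Lemma~\ref{lemm:not-almost}, and your attribution to its ``moreover'' clause is indeed the right source. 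You instead rebuild the~\cite{dAHK98} witness explicitly (the hide-or-run gadget) and sketch a direct proof that no finite-memory player-2 strategy spoils: since the memory update is deterministic and actions are observable, player~1 can track player~2's memory state, and in each round either wins surely (when $b_1$ has conditional probability~$1$) or, playing $a_1$, wins with probability bounded below by the minimum of $1-q_{\mem}$ over the finitely many memory states, while never risking $\ell$. That argument is sound --- it is the same product-tracking argument the paper itself uses in Lemmas~\ref{lemm:almost super constant} and~\ref{lemm:positive super constant} --- and it buys self-containedness at the cost of redoing the reachability lower bound that the paper simply cites.

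One step of your item~3 would fail as written. You infer $s^*\notin\Almost_1(\Reach(w))$ from the facts that the value fixpoint $v=\frac{1}{2-v}$ has unique solution $1$ and that value iteration only approaches it ($v_k=\frac{k}{k+1}$). That inference is invalid: in classical matching pennies, value iteration likewise approaches $1$ from below without ever reaching it ($v_k=1-2^{-k}$), yet the value $1$ \emph{is} attained by an almost-sure winning stationary strategy. Non-attainment is the conclusion you need, not something the fixpoint computation provides. The fix is cheap: either run the $\ASP$ fixpoint on the gadget --- one finds the first outer iterate is $\set{w,s^*}$, and then $\Allow(s^*,\set{w,s^*})=\set{a_1}$, $\Bad(s^*,\set{w,s^*},\set{w})=\set{b_2}$, and $\Good(s^*,\set{w,s^*},\set{w},Z)=\emptyset$ because $\cost(s^*,a_1,b_1)=0$, so $X^*=\set{w}$ --- or cite~\cite{dAHK98} for the gadget's properties, as you already suggest as an alternative. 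With that repair, $s^*\in S\setminus X^*$, so spoiling strategies (and the Markov spoiler) exist by Theorem~\ref{thm:almost} and Lemma~\ref{lemm:not-almost}, and the rest of your argument goes through. A final small caution: the $\LimSupAvg$ lower bounds do not follow from the inclusion $\LimInfAvg(\lambda)\subseteq\LimSupAvg(\lambda)$ (that direction only transfers sufficiency); they hold because the spoiling in Lemma~\ref{lemm:almost super constant} and in your gadget is established in the $\ov{\LimSupAvg}_{\leq}$ sense, which is the strongest of the four objectives.
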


\begin{proof}
The proofs are as follows:
\begin{enumerate}

\item Lemma~\ref{lemm:almost super constant} shows that infinite-memory is required, and 
Lemma~\ref{lemm:almost1}, Lemma~\ref{lemm:almost2}, and Theorem~\ref{thm:almost} 
show that Markov strategies are sufficient for almost-sure winning.
The sufficiency of Markov strategies establishes the $T$ upper bound 
for time-dependent memory; and Lemma~\ref{lemm:almost super constant} 
(along with the fact that the game in the lemma is a repeated game 
with absorbing states) and Lemma~\ref{lemm:log T} establishes
the $T$ lower bound for time-dependent memory.

\item The existence of stationary almost-sure winning strategies with double 
exponential patience for objectives 
$\LimInfAvg(1-\epsilon)$, for all $\epsilon>0$ follows from Lemma~\ref{lemm:almost1} and Theorem~\ref{thm:almost}.
The double exponential lower bound for patience follows from Lemma~\ref{lemm:almost patience lower bound}.

\item For the special case of concurrent reachability and safety games,
$\ov{\LimInfAvg}(1)$ and $\ov{\LimInfAvg}_{\leq}(1-c)$, for some constant $c>0$,
coincide, and the infinite-memory requirement for player~2 for positive 
winning strategies follows from~\cite{dAHK98}.
Moreover the example to show the infinite-memory requirement (from~\cite{dAHK98}) 
is a repeated game with absorbing states.
The sufficiency of Markov strategies follows from Lemma~\ref{lemm:log T};
and the optimal time-dependent memory bound of $T$ follows from 
the sufficiency of Markov strategies (upper bound) and 
Lemma~\ref{lemm:log T} and the infinite-memory requirement (lower bound).

\end{enumerate}
The desired result follows.
\end{proof}

\subsection{Improved Algorithm}\label{subsec3}
In this section we will present an improved algorithm for the 
computation of the almost-sure winning set $\Almost_1(\LimInfAvg(1))$.
The naive computation using the $\mu$-calculus formula gives a cubic
time complexity, and we will present an alternative quadratic time 
algorithm.
The key idea is to generalize the small-progress measure algorithm of~\cite{Jur00} 
with the more involved predecessor operator.

\smallskip\noindent{\bf The key intuition.}
The key intuition of the algorithm is to assign to each state $s$ a {\em level}, denoted $\ell(s)$, 
which range in the set $\{0,1,\dots,n\}$.
The level is like a ranking function and the algorithm iteratively updates the level of every state.
The initial level of each state is $n$, and the level of any state can only decrease during the execution of the algorithm. 
At the end of the execution of the algorithm, the set $X^*$ will be exactly the set of states which  have a 
strictly positive level.
The total change of levels is at most quadratic and by charging the work done to the change of the levels 
we show that the work done is also at most quadratic.

\smallskip\noindent{\bf Basic procedures.}
The algorithm will consist of two procedures, namely,  $\Preprocessing(s)$ and $\Remove(s,b)$, 
for $s\in S$ and $b\in \Gamma_2(s)$.  
To describe the procedures we first define three action sets as follows:
$\Allow(s)\subseteq \Gamma_1(s)$, $\Bad(s)\subseteq \Gamma_2(s)$ and $\Good(s)\subseteq \Gamma_1(s)$. 
The sets will have similar intuitive meaning as the corresponding set in the $\mu$-calculus expression. 
In the algorithm, whenever the set $\Good(s)$ becomes empty, the level $\ell(s)$ of $s$ will be decreased 
by one. 
For a fixed level of all the states, the sets are as follows: 
\begin{itemize}
\item $\Allow(s)$ is the set of all actions $a\in \Gamma_1(s)$ such that for all actions $b\in \Gamma_2(s)$ we have $\dest(s,a,b) \cap L_0 =\emptyset$, 
where $L_0$ is the set of states with level~0.
\item $\Bad(s)$ is the set of all actions $b\in \mov_2(s)$ such that there exists $a\in \Allow(s)$ and $t\in S$ such that $t\in \dest(s,a,b)$ and $\ell(t)>\ell(s)$.
\item $\Good(s)$ is the set of all actions $a \in \Allow(s)$ such that for all $b\in (\Gamma_2(s)\setminus \Bad(s))$ we have $\cost(s,a,b)=1$ and for all $t\in \dest(s,a,b)$ we have 
$\ell(t)\geq \ell(s)$.
\end{itemize}
For all $b\in \Gamma_2(s)$, the algorithm keeps track of the number of actions $a$ in $\Allow(s)$ and $t$ in $S$, 
such that $t\in \dest(s,a,b)$ and $\ell(t)>\ell(s)$. 
We denote this number by $\bb(s,b)$. 
Observe that an action $b\in \Gamma_2(s)$ is in $\Bad(s)$ if and only if $\bb(s,b)>0$.
We are now ready to describe the two basic procedures.
\begin{enumerate}

\item The procedure $\Preprocessing(s)$ recalculates the actions in $\Allow(s)$, $\Bad(s)$ and $\Good(s)$, 
based on the current level of all states. 
It also recalculates $\bb(s,b)$. 
The running time of the procedure is $O(\sum_{a\in \Gamma_1(s)}\sum_{b\in \Gamma_2(s)}\left| \dest(s,a,b)\right|)$,
by simple enumeration over the actions of both players and the possible successor given the actions. 
The procedure $\Preprocessing(s)$ will run (i)~once for each time state $s$ changes level;
(ii)~each time some state changes to level 0; 
and (iii)~once during the initialization of the algorithm.

\item The procedure $\Remove(s,b)$ is run only when $\bb(s,b)$ is zero.
The procedure $\Remove(s,b)$ removes $b$ from $\Bad(s)$ and for each action $a\in \Good(s)$  
checks if $\cost(s,a,b)=0$. If so, it removes such $a$'s from $\Good(s)$. 
The running time of the procedure is $O(\sum_{a\in \Gamma_1(s)}\left| \dest(s,a,b)\right|)$ (again by simple enumeration).
It follows from the description of $\bb(s,b)$ that as long as the level of the state $s$ is fixed we only decrease 
$\bb(s,b)$. 
Hence we will run $\Remove(s,b)$ at most $n$ times, once for each level of $s$.

\end{enumerate}

\smallskip\noindent{\bf The informal description of the algorithm.}
The informal description of the algorithm is as follows. 
In the initialization phase first all states $s$ are assigned level $\ell(s)=n$,
and then every state is processed using the procedure $\Preprocessing(s)$.
The algorithm is an iterative one and in every iteration executes the following steps 
(unless a fixpoint is reached).
It first considers the set of states $s$ such that $\Good(s)$ is empty and 
decrements the level of $s$. 
If the level of a state reaches~0, then a flag $z$ is assigned to true.
If $z$ is true, then we process every state using the procedure $\Preprocessing$.
Otherwise, for every state $s$ such that $\Good(s)$ is empty, 
the algorithm processes $s$ using $\Preprocessing(s)$; updates $\bb(t,b)$ for all 
predecessors $t$ of $s$ and removes an action when the $\bb(t,b)$ count reaches zero.
The algorithm reaches a fixpoint when the level of no state has changed (the algorithm
keeps track of this with a flag $c$).
The algorithm outputs $\wt{X}^*$ which is the set of states $s$ with strictly positive
level (i.e., $\ell(s)>0$ at the end of the execution).
The formal description of the algorithm is presented in Figure~\ref{fig:solve},
and we refer to the algorithm as \ial.
We first present the runtime analysis and then present the correctness argument.

\smallskip\noindent{\bf Runtime analysis.}
As described above other than the initialization phase, whenever the procedure 
$\Preprocessing(s)$ is run, the level of the state $s$ has decreased or the level of some  other
state has reached~0.
Hence for every state $s$, the procedure can run at most $2\cdot n$ times.
Therefore the total running time for all $\Preprocessing$ operations over all iterations 
is $O(n\cdot |\delta|)$, where $|\delta|=\sum_{s\in S}\sum_{a\in \Gamma_1(s)}\sum_{b\in \Gamma_2(s)}\left|\dest(s,a,b)\right|$.
The procedure $\Remove(s,b)$ is invoked when $\bb(s,b)$ reaches zero, 
and as long as the level of $s$ is fixed the count $\bb(s,b)$ can only decrease.
This implies that we run $\Remove(s,b)$ at most $n$ times, once for each level of $s$.
Hence $O(n\cdot |\delta|)$ is the total running time of operation $\Remove$ over all iterations.\footnote{The running time assumes a data structure that for a given $s$ can find the set 
$\Pred(s)=\{(t,a,b)\mid s\in \dest(t,a,b)\}$ of predecessors in time $O(\left|\Pred(s)\right|)$, which can be easily achieved with a linked list data structure.}

\smallskip\noindent{\bf Correctness analysis.}
We will now present the correctness analysis in the following lemma.

\begin{figure}
\begin{center}

\begin{algorithm}[H]
\lFor{$s\in S$} {$\ell(s)\leftarrow n$\;}
\lFor{$s\in S$} {
$\Preprocessing(s)$\;
}
$c\leftarrow true$\;

\While{$c=true$} {
$c\leftarrow false$; $z\leftarrow false$\;
\For{$s\in S$ st. $\ell(s)>0$ and $\Good(s)=\emptyset$} {
$c\leftarrow true$; $\ell(s)\leftarrow\ell(s)-1$;
\lIf{$\ell(s)=0$}{$z\leftarrow true$\;}
}
\If {$z=true$} {
\lFor{$s\in S$} {
$\Preprocessing(s)$\;
}
}
\Else {
\For{$s\in S$ st. $\ell(s)>0$ and $\Good(s)=\emptyset$} {
$\Preprocessing(s)$\;
\For{$t,a,b$ st. $s\in \dest(t,a,b)$ and $\ell(t)=\ell(s)$} {
$\bb(t,b)\leftarrow \bb(t,b)-1$;
\lIf{$\bb(t,b)=0$} {
$\Remove(t,b)$\;
}}
}
}
}

\Return{$\wt{X}^*=\set{s \mid \ell(s)>0}$}\;
\caption{\ial: Input: Concurrent game structure $G$ with boolean reward function.}
\end{algorithm}

\end{center}
\caption{Improved Algorithm}
\label{fig:solve}
\end{figure}

\begin{lemma}\label{lemm:alg}
Given a concurrent game structure with a boolean reward function as input, 
let $\wt{X}^*$ be the output of algorithm \ial\ (Figure~\ref{fig:solve}).
Then we have $\wt{X}^*=X^*=\Almost_1(\LimInfAvg(1))$.
\end{lemma}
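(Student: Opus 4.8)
The plan is to prove $\wt{X}^* = X^*$; the remaining equality $X^* = \Almost_1(\LimInfAvg(1))$ is exactly Theorem~\ref{thm:almost}. Everything rests on one correspondence between the algorithm's level bookkeeping and the $\mu$-calculus operators. For the current level assignment $\ell$, write $L_0 = \set{s \mid \ell(s)=0}$, $L_{>0} = S \setminus L_0$, $L_{>j} = \set{s \mid \ell(s) > j}$ and $L_{\geq j} = \set{s \mid \ell(s) \geq j}$. Then by construction $\Allow(s) = \Allow(s, L_{>0})$, $\Bad(s) = \Bad(s, L_{>0}, L_{>\ell(s)})$ and $\Good(s) = \Good(s, L_{>0}, L_{>\ell(s)}, L_{\geq \ell(s)})$, so a state with $\ell(s)>0$ has $\Good(s)\neq\emptyset$ iff $s\in\ASP(L_{>0}, L_{>\ell(s)}, L_{\geq\ell(s)})$. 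I would first record two routine facts: (i)~$\ASP$ is non-decreasing in each of its three arguments separately (enlarging $X$ enlarges $\Allow$ and $\Bad$, enlarging $Y$ enlarges $\Bad$, enlarging $Z$ relaxes the containment test, and each effect can only enlarge $\Good$); and (ii)~since levels start at $n$ and only decrease, the algorithm terminates, and the loop guard guarantees that on termination \emph{every} state $s$ with $\ell(s)>0$ satisfies $\Good(s)\neq\emptyset$, i.e. $s\in\ASP(L_{>0}, L_{>\ell(s)}, L_{\geq\ell(s)})$.

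\emph{Direction $\wt{X}^*\subseteq X^*$.} Let $\ell$ be the final assignment and $W = L_{>0} = \wt{X}^*$. Using fact~(ii) and monotonicity, for every $v \geq 1$ each state of $L_{\geq v}$ lies in $\ASP(W, L_{>v}, L_{\geq v})$: a state at level $v' \geq v$ lies in $\ASP(W, L_{>v'}, L_{\geq v'}) \subseteq \ASP(W, L_{>v}, L_{\geq v})$. Hence $L_{\geq v}$ is a post-fixpoint of $Z \mapsto \ASP(W, L_{>v}, Z)$, so $L_{\geq v} \subseteq \nu Z.\, \ASP(W, L_{>v}, Z)$ by Knaster--Tarski. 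Writing $\wh{Y}_0 = \emptyset$ and $\wh{Y}_j = \nu Z.\,\ASP(W, \wh{Y}_{j-1}, Z)$ for the inner least-fixpoint iterates at $X = W$, an induction on $j$ (using $L_{>v} = L_{\geq v+1}$ and monotonicity) gives $L_{\geq n+1-j} \subseteq \wh{Y}_j$ for $1 \leq j \leq n$; at $j = n$ this yields $W = L_{\geq 1} \subseteq \wh{Y}_n \subseteq \mu Y.\,\nu Z.\,\ASP(W, Y, Z)$. Thus $W$ is a post-fixpoint of the outer operator $X \mapsto \mu Y.\,\nu Z.\,\ASP(X, Y, Z)$, and a second application of Knaster--Tarski gives $W \subseteq \nu X.\,\mu Y.\,\nu Z.\,\ASP(X, Y, Z) = X^*$.

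\emph{Direction $X^* \subseteq \wt{X}^*$.} Let $Y^*_0 = \emptyset$, $Y^*_i = \nu Z.\,\ASP(X^*, Y^*_{i-1}, Z)$ be the inner iterates for the \emph{true} set $X = X^*$, stabilising at $Y^*_{\ell^*} = X^*$ with $\ell^* \leq n$, and for $s \in X^*$ let $r(s)$ be its rank, the index with $s \in Y^*_{r(s)} \setminus Y^*_{r(s)-1}$ (so $1 \leq r(s) \leq n$). I would prove by induction over the execution the invariant $\ell(s) \geq n+1-r(s)$ for all $s \in X^*$; since $r(s) \leq n$ this forces $\ell(s) \geq 1$, hence $X^* \subseteq L_{>0} = \wt{X}^*$. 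The invariant holds initially ($\ell \equiv n$), and the only events that change levels are single decrements triggered by $\Good(s) = \emptyset$, so it suffices to show that a state $s \in X^*$ sitting at the threshold $\ell(s) = n+1-r(s)$ has $\Good(s) \neq \emptyset$. At that level the invariant gives $X^* \subseteq L_{>0}$, $Y^*_{r(s)-1} \subseteq L_{>\ell(s)}$ (earlier-rank states have strictly larger level), and $Y^*_{r(s)} \subseteq L_{\geq\ell(s)}$; since $s \in Y^*_{r(s)} = \nu Z.\,\ASP(X^*, Y^*_{r(s)-1}, Z)$ we get $s \in \ASP(X^*, Y^*_{r(s)-1}, Y^*_{r(s)})$, and monotonicity of $\ASP$ in all three arguments transports this to $s \in \ASP(L_{>0}, L_{>\ell(s)}, L_{\geq \ell(s)})$, i.e. $\Good(s) \neq \emptyset$. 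So no threshold decrement ever occurs; reprocessing steps leave levels unchanged and preserve the invariant trivially.

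Combining the two inclusions gives $\wt{X}^* = X^*$, and Theorem~\ref{thm:almost} finishes the proof. \emph{Main obstacle.} The conceptual core is the two Knaster--Tarski arguments above, which are clean once the operator correspondence is in place; the delicate part is justifying that correspondence, namely that the incrementally maintained sets $\Allow(s), \Bad(s), \Good(s)$ and the counters $\bb(s,b)$ really equal the level-set operators at \emph{every} step. This requires checking that $\Preprocessing$ recomputes them correctly after any change to $L_0$ (which shrinks $\Allow$ and may re-enlarge the other sets), that $\Remove(s,b)$ correctly maintains $\Good(s)$ while the level of $s$ is held fixed and $\bb(s,b)$ counts down to $0$, and that the branching on the flag $z$ schedules a full $\Preprocessing$ pass exactly when some state reaches level $0$, so that the $L_0$-dependence of $\Allow$ is always current. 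I expect this synchronisation bookkeeping, rather than the fixpoint reasoning, to be where the real care is needed.
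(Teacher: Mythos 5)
Your proof is correct and follows essentially the same route as the paper's: both directions rest on the correspondence between the algorithm's level sets and the $\ASP$ operator, with $\wt{X}^*\subseteq X^*$ obtained by exhibiting $\wt{X}^*$ as a (post-)fixpoint of $X\mapsto\mu Y.\,\nu Z.\,\ASP(X,Y,Z)$ plus Knaster--Tarski, and $X^*\subseteq\wt{X}^*$ obtained from the threshold invariant $\ell(s)\geq n+1-r(s)$, which is exactly the paper's ``first iteration in which a state drops below its level'' contradiction argument recast as an induction over the execution (using the same monotonicity containments $\Allow(s,X^*)\subseteq\Allow(s)$, $\Bad(s)\subseteq\Bad(s,X^*,Y_{i-1})$, $\Good(s,X^*,Y_{i-1},Y_i)\subseteq\Good(s)$). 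The paper likewise dispatches the data-structure synchronisation (that $\Allow(s)$, $\Bad(s)$, $\Good(s)$, $\bb(s,b)$ always reflect the current levels) with a one-line observation, so your flagged ``main obstacle'' is treated at the same level of rigor there.
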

\begin{proof}
We first observe that for the algorithm \ial\ at the end of any iteration, for all $s$ 
the sets $\Allow(s)$, $\Bad(s)$ and $\Good(s)$ are correctly calculated based on the current level 
of all states as defined by the description. 
Let us denote by $\ell^*(s)$ the level of a state $s$ at the end of the execution of the algorithm. 
Recall that $\widetilde{X}^*$ is the set of states $s$ with $\ell^*(s)>0$.
Also recall that $X^*=\nu X. \mu Y. \nu Z. \ASP(X,Y,Z)$.
The correctness proof will show two inclusions. We present them below.

\begin{itemize}

\item \emph{First inclusion: $\wt{X}^* \subseteq X^*$.}
We will show that $\wt{X}^*$ is a fixpoint of the function $f(X)=\mu Y. \nu Z. \ASP(X,Y,Z)$.
Let $\wt{Y}_0=\emptyset$, and $\wt{Y}_{i}=\set{s \mid \ell^*(s) > n-i}$ for $0 < i < n$.
Then for all $0 < i <n$ and for all $s \in (\wt{Y}_i\setminus \wt{Y}_{i-1})$ we have 
$s\in \nu Z. \ASP(\wt{X}^*,\wt{Y}_{i-1},Z)$. 
The fact that $s\in \ASP(\wt{X}^*,\wt{Y}_{i-1},\wt{Y}_i)$ follows since: 
(i)~$\Allow(s)$ as computed by the algorithm is $\Allow(s,\wt{X}^*)$; 
(ii)~$\Bad(s)$ as computed the algorithm is $\Bad(s,\wt{X}^*,\wt{Y}_{i-1})$; and 
(iii)~$\Good(s)$ as computed by the algorithm is $\Good(s,\wt{X}^*,\wt{Y}_{i-1},\wt{Y}_{i})$.
Hence it follows that $\wt{X}^*$ is a fixpoint of $f(X)=\mu Y. \nu Z. \ASP(X,Y,Z)$.
Since $X^*$ is the greatest fixpoint of $f(X)$ we have that $\wt{X}^* \subseteq X^*$.



\item \emph{Second inclusion: $X^* \subseteq \wt{X}^*$.}
Let $i$ and $s$ be such that $s\in (Y_i\setminus Y_{i-1})$, where $Y_0=\emptyset$, and 
for $i >0$ we have $Y_i=\nu Z. \ASP(X^*,Y_{i-1},Z)$. We will show that $i=n+1-\ell^*(s)$. That implies that $\ell^*(s)>0$, because of the following: We have that $s$ can be in $(Y_j\setminus Y_{j-1})$ for at most one value of $j$, because $Y_{k-1}\subseteq Y_k$ for all $k$. Since $(Y_j\setminus Y_{j-1})$ is non-empty for all $j>0$ till the fixpoint is reached we have  $X^* =Y_n$. Together that gives us that $\ell^*(s)>0$. 

We will first show that $i\geq n+1-\ell^*(s)$. Assume towards contradiction that $\ell^*(s)<n+1-i$. Let $k$ be the first iteration of the algorithm in which some state $t\in (Y_j\setminus Y_{j-1})$ goes from level $n+1-j$ to level $n-j$ (this is well-defined because $s$ must do so in some iteration by assumption). We can WLOG assume that $s$ changes from level $n+1-i$ to $n-i$ in iteration $k$. But at the end of iteration $k-1$, we then have that $\Allow(s,X^*)\subseteq \Allow(s)$ and therefore $\Bad(s)\subseteq \Bad(s,X^*,Y_{i-1})$ and therefore $\Good(s,X^*,Y_{i-1},Y_i) \subseteq \Good(s)$, implying that $\Good(s)$ cannot be empty. Hence $s$ does not change level in iteration $k$. That is a contradiction. 

We will next show that $i\leq n+1-\ell^*(s)$. Assume towards contradiction that $\ell^*(s)>n+1-i$. Let $\ell$ be the highest level for which there is a state $t\in (Y_j\setminus Y_{j-1})$ such that $\ell=\ell^*(t)$ and  $\ell^*(t)>n+1-j$ (since $\ell^*(s)>n+1-i$ this is well defined). We can WLOG assume that $\ell^*(s)=\ell$. By the first part of this proof we have that $\widetilde{X}^*\subseteq X^*$, implying that $\Allow(s)\subseteq \Allow(s,X^*)$. By definition of $\ell$, we then get that $\Bad(s,X^*,Y_{\ell-1})\subseteq \Bad(s)$. Let $U$ be the set of states, such that for all $t\in U$ we have that $\ell^*(t)=\ell$. Since for all $t\in (Y_j\setminus Y_{j-1})$ we have that $j\geq n+1-\ell^*(t)$, we get that $(Y_\ell\setminus Y_{\ell-1})\subset U$ (they are not equal since $(Y_\ell \setminus Y_{\ell-1})$ does not contain $s$). We have that $\Good(t)$ is non-empty for all $t\in U$. This implies that $U\subseteq T$, where $T$ is a fixpoint of $\ASP(X^*,Y_{\ell-1},T)$. But $Y_\ell \subset T$ is the largest such fixpoint by definition. That is a contradiction. 
\end{itemize}
The desired result follows.
\end{proof}

\begin{theorem}
The algorithm \ial\ correctly computes the set $\Almost_1(\LimInfAvg(1))$ for a concurrent game 
structure with boolean reward function in quadratic time (in time $O(n\cdot \left| \trans \right|)$,
where $\left|\trans \right|= \sum_{s\in S}\sum_{a\in \Gamma_1(s)}\sum_{b\in \Gamma_2(s)}\left| \dest(s,a,b)\right|$). 
\end{theorem}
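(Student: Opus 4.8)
The plan is to prove the theorem's two assertions -- correctness of the output and the $O(n\cdot|\trans|)$ running time -- separately, since the substantive work for correctness has already been done. For correctness I would simply invoke the preceding analysis: by Lemma~\ref{lemm:alg} the output $\wt{X}^*$ of \ial\ satisfies $\wt{X}^*=X^*$, and by Theorem~\ref{thm:almost} we have $X^*=\Almost_1(\LimInfAvg(1))$. Chaining these two equalities gives $\wt{X}^*=\Almost_1(\LimInfAvg(1))$, which is exactly the correctness claim. No new argument is needed here beyond citing these results, so the entire burden of the theorem falls on the time bound.

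For the running time I would carry out an amortized analysis that charges all work to the monotone decrease of the level function $\ell(\cdot)$. The key structural facts are: (i)~every state starts at level $n$ and its level only decreases, bottoming out at $0$, so each state changes level at most $n$ times; and (ii)~the flag $z$, which triggers a global re-$\Preprocessing$ pass over all states, is set only when some state first reaches level $0$, and each state reaches level $0$ at most once, so there are at most $n$ such global passes. Combining (i) and (ii) with the initialization call, $\Preprocessing(s)$ is invoked at most $2n+1$ times for each fixed $s$. Since one invocation costs $O\!\big(\sum_{a\in\Gamma_1(s)}\sum_{b\in\Gamma_2(s)}|\dest(s,a,b)|\big)$, summing over all states and all invocations yields $O(n\cdot|\trans|)$ total. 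For the $\Remove$ operations I would use that, while the level of $s$ is held fixed, the count $\bb(s,b)$ only decreases and $\Remove(s,b)$ fires only when $\bb(s,b)$ hits $0$; hence $\Remove(s,b)$ runs at most once per level of $s$, i.e.\ at most $n$ times, at cost $O\!\big(\sum_{a\in\Gamma_1(s)}|\dest(s,a,b)|\big)$ each, again summing to $O(n\cdot|\trans|)$. Adding the two contributions gives the claimed $O(n\cdot|\trans|)$ bound, which is the quadratic improvement over the naive cubic $\mu$-calculus evaluation.

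The main obstacle is not any single step but ensuring the amortization is airtight, in particular that no work is silently incurred outside the $\Preprocessing$/$\Remove$ accounting. The two delicate points I would verify carefully are that the predecessor bookkeeping -- decrementing $\bb(t,b)$ for all $(t,a,b)$ with $s\in\dest(t,a,b)$ and $\ell(t)=\ell(s)$ in the \emph{else} branch -- is itself paid for within the per-level $\Remove$ budget rather than adding an extra factor, and that enumerating the predecessor set $\Pred(s)$ can be done in time $O(|\Pred(s)|)$ using the linked-list data structure noted in the footnote, so that maintaining these counts contributes no additional asymptotic cost. Once these bookkeeping costs are confirmed to fold into the $O(n\cdot|\trans|)$ already charged to $\Preprocessing$ and $\Remove$, the theorem follows by combining the correctness conclusion with this total running time.
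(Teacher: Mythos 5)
Your proposal is correct and follows essentially the same route as the paper: correctness is delegated to Lemma~\ref{lemm:alg} (whose statement already gives $\wt{X}^*=X^*=\Almost_1(\LimInfAvg(1))$), and the time bound is obtained by the same amortized accounting -- at most $2n$ (plus initialization) invocations of $\Preprocessing(s)$ per state (one per level decrease of $s$, one per state reaching level~$0$) and at most $n$ invocations of $\Remove(s,b)$ per pair (one per level of $s$), with the predecessor bookkeeping absorbed via the linked-list $\Pred$ data structure noted in the paper's footnote. No substantive difference from the paper's argument.
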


\section{Positive Winning}
In this section we will present qualitative determinacy for positive
winning and then establish the strategy complexity results.

\subsection{Qualitative determinacy}\label{subsec4}

In this section we will present a polynomial time algorithm to compute the set 
$\Positive_1(\LimInfAvg(\lambda))$ and $\Positive_1(\LimSupAvg(\lambda))$ for $\lambda=1$
in concurrent games with boolean reward functions, and the qualitative determinacy 
will also be a consequence of the algorithm. 
Again, like in Section \ref{sec:almost}, we will first present the algorithm as a $\mu$-calculus expression. The algorithm is \[
\Positive_1(\LimInfAvg(1)) = \Positive_1(\LimSupAvg(1)) =
\mu Y. \nu Z. \ASP(S,Y,Z),
\]
where $\ASP(X,Y,Z)$ is as defined in Section \ref{sec:almost}.
Let $Y^*= \mu Y. \nu Z. \ASP(S,Y,Z)$ be the fixpoint. 

\begin{lemma}\label{lemm:not-positive}
There is a stationary strategy $\sigma_2$ for player~2 with patience at most $m$ that ensures that for all states $s\in (S\setminus Y^*)$, all strategies $\sigma_1$ for player~1, we have that $\Pr_s^{\sigma_1,\sigma_2^*}(\overline{\LimSupAvg}_{\leq}(1-c))=1$, where $c=(\frac{\delta_{\min}}{m})^{n-1}\cdot \frac{1}{m}$
\end{lemma}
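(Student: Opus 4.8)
The key observation is that this statement is, up to notation, \emph{identical} to Remark~\ref{rem:proof}, which was already established as the base case of the proof of Lemma~\ref{lemm:not-almost}. Recall the sequence used there: $X_0 = S$ and $X_i = \mu Y. \nu Z. \ASP(X_{i-1},Y,Z)$. In particular $X_1 = \mu Y. \nu Z. \ASP(S,Y,Z) = Y^*$, so that $S\setminus Y^* = \ov{X}_1$, the patience bound $m$ agrees, and the constant $c=(\frac{\delta_{\min}}{m})^{n-1}\cdot\frac{1}{m}$ coincides exactly. Hence the plan is simply to invoke the base-case construction already carried out in Lemma~\ref{lemm:not-almost} (recorded as Remark~\ref{rem:proof}). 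For self-containedness I would recall that construction and why it works.

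The spoiling strategy $\sigma_2$ plays, at every state $s\in S\setminus Y^*$, all actions of $\Gamma_2(s)\setminus\Bad(s,S,Y^*)$ uniformly at random, and an arbitrary distribution elsewhere; since each state offers at most $m$ actions, this has patience at most $m$. First I would check that $\Gamma_2(s)\setminus\Bad(s,S,Y^*)$ is non-empty for each $s\notin Y^*$: were it empty, then $\Good(s,S,Y^*,Y^*)$ would equal all of $\Allow(s,S)=\Gamma_1(s)$, forcing $s\in\ASP(S,Y^*,Y^*)=Y^*$ (using that $Y^*$ is the greatest fixpoint of $Z\mapsto\ASP(S,Y^*,Z)$), a contradiction.

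The main verification is a level-by-level analysis of the inner greatest-fixpoint computation $Z^0=S\supset Z^1\supset\cdots\supset Z^\ell=Y^*$, where $Z^{i+1}=\ASP(S,Y^*,Z^i)$. Because $\sigma_2$ never plays an action in $\Bad(s,S,Y^*)$, the set $S\setminus Y^*$ is never left, whatever player~1 does. Exploiting that $\Good(s,S,Y^*,Z^{i-1})=\emptyset$ for every $s\in\ov{Z}^i$, I would show: for $s\in\ov{Z}^1$ the reward is $0$ in one step with probability at least $\frac{1}{m}$; and for $s\in\ov{Z}^i$ with $i>1$, either the reward is $0$ in one step with probability at least $\frac{1}{m}$, or the set $\ov{Z}^{i-1}$ is reached in one step with probability at least $\frac{\delta_{\min}}{m}$. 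Chaining these one-step facts produces, from every state of $S\setminus Y^*$, a path of length at most $n$ that remains in $S\setminus Y^*$ and terminates in a reward-$0$ transition, occurring with probability at least $c$.

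The one genuinely delicate step, and the main obstacle, is the passage from these conditional one-step guarantees to the \emph{almost-sure} conclusion $\Pr_s^{\sigma_1,\sigma_2}(\overline{\LimSupAvg}_{\leq}(1-c))=1$. My plan here is to split the play into consecutive windows of $n$ steps; the path bound shows that, conditioned on the entire history up to the start of any window, a reward-$0$ transition occurs within that window with probability at least $c$, uniformly over player~1's strategy. A bounded-increment martingale (L\'evy--Borel--Cantelli) argument then forces a uniformly positive liminf density of reward-$0$ transitions with probability~$1$, so that $\LimSupAvg$ is bounded below $1$ by a positive constant almost surely, yielding the claim. This recurrence-to-density argument, rather than the combinatorial set-up of the action sets, is where the real content lies; the remainder is a direct transcription of Remark~\ref{rem:proof}.
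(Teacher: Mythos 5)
Your core move is exactly the paper's own proof: the paper, too, simply observes that $Y^*=\mu Y.\nu Z.\ASP(S,Y,Z)$ is precisely the set $X_1$ from the proof of Lemma~\ref{lemm:not-almost}, so that the strategy of Remark~\ref{rem:proof} --- uniform over $\Gamma_2(s)\setminus\Bad(s,S,Y^*)$, hence stationary with patience at most $m$ --- is already the required witness. Your recap of that construction (non-emptiness of the non-$\Bad$ sets via $\Good(s,S,Y^*,Y^*)=\Gamma_1(s)\Rightarrow s\in Y^*$, confinement to $S\setminus Y^*$, the level-wise one-step estimates, and the length-at-most-$n$ path staying in $S\setminus Y^*$ and ending in a reward-$0$ transition, occurring with probability at least $c$) is faithful to the paper.

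The one discrepancy sits in the step you yourself single out as delicate, and it is a real one: windows of length $n$, each containing a reward-$0$ transition with conditional probability at least $c$, force a liminf density of reward-$0$ transitions of at least $c/n$, hence $\Pr_s^{\sigma_1,\sigma_2}(\overline{\LimSupAvg}_{\leq}(1-c/n))=1$ --- not the stated bound with $1-c$. Your closing phrase ``bounded below $1$ by a positive constant, yielding the claim'' glosses over this factor of $n$, and the loss is not an artifact of your bookkeeping. Concretely, take two states $s_1,s_2$ where player~1 has a single action and player~2 has $m$ actions at each state; all rewards at $s_1$ are $1$, exactly one action at $s_1$ moves to $s_2$ with probability $q\leq\frac{1}{2}$ (all other actions self-loop), at $s_2$ exactly one action has reward $0$, and every action at $s_2$ returns to $s_1$ deterministically. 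Then $\delta_{\min}=q$, $Y^*=\emptyset$, $n=2$, $c=q/m^2$, all actions are non-$\Bad$, and under the uniform strategy the induced chain satisfies, almost surely, $\LimSupAvg=1-\frac{q}{m^2+m\,q}>1-c$. So the constructed strategy provably does not achieve the constant $c$; only a smaller positive constant (such as $c/n$) can be claimed for it. To be clear, the paper's own write-up has the identical flaw --- after the path fact it simply asserts the bound $1-c$ --- and the slip is harmless downstream, since the qualitative determinacy theorems only need the existence of some positive constant (they take unions over $c>0$). Still, a correct version of your proof (and of the paper's) should weaken the constant accordingly, or phrase the lemma with an unspecified constant $c>0$.
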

\begin{proof}
In the proof of Lemma \ref{lemm:not-almost} (Remark~\ref{rem:proof}), 
we presented a witness stationary strategy $\sigma_2^1$ that ensured that the set $\ov{X}_1=(S \setminus \mu Y. \nu Z. \ASP(S,Y,Z))$ 
was never left; and for all states $s\in \ov{X}_1$ and all strategies $\sigma_1$ for player~1 we have $\Pr_s^{\sigma_1,\sigma_2^*}(\overline{\LimSupAvg}_{\leq}(1-c))=1$. 
But notice that $(S\setminus Y^*)=\ov{X}_1$. 
Note also that $\sigma_2^1$ played uniformly over some subset of actions in $\Gamma_2(s)$ for any $s\in \ov{X}_1$. 
Hence, the patience of $\sigma_2^1$ is at most $m$.
\end{proof}

\begin{lemma}\label{lemm:positive}
There is a Markov strategy $\sigma_1^*$ for player~1 that ensures that for all states $s\in Y^*$ and all strategies $\sigma_2$ for player~2, we have that $\Pr_s^{\sigma_1^*,\sigma_2}(\LimInfAvg(1))>0$.

\end{lemma}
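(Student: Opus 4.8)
The plan is to mirror the two-step construction of the almost-sure case (Lemma~\ref{lemm:almost1} followed by Lemma~\ref{lemm:almost2}), but everywhere replacing the outer fixpoint $X^*$ by the whole state space $S$, and replacing ``probability~$1$'' by ``positive probability''. Write the inner iteration of $Y^*=\mu Y.\nu Z.\ASP(S,Y,Z)$ as $Y_0=\emptyset$, $Y_{i+1}=\nu Z.\ASP(S,Y_i,Z)$, with $Y_\ell=Y^*$; the one structural difference from Section~\ref{subsec1} is that with $X=S$ we have $\Allow(s,S)=\Gamma_1(s)$ for every $s$, so player~1's moves no longer confine the play to $Y^*$.

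First I would prove a positive-probability analogue of Lemma~\ref{lemm:almost1}: for every $\epsilon>0$ there is a stationary strategy $\sigma_1^\epsilon$ (defined exactly as in Lemma~\ref{lemm:almost1}, but using the sets $\Good(s,S,Y_{i-1},Y_i)$ and the same mixing weights $\beta_j$) and a constant $p_\epsilon>0$ such that for all $\sigma_2$ and all $s\in Y^*$ we have $\Pr_s^{\sigma_1^\epsilon,\sigma_2}(\LimInfAvg(1-\epsilon)\cap\Safe(Y^*))\geq p_\epsilon$. I would re-run the induction of Lemma~\ref{lemm:almost1} on the levels $Y_{\ell-j}$, fixing a positional $\sigma_2$ optimal for the induced player-2 MDP. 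The only change is that, since $\Allow(s,S)=\Gamma_1(s)$ no longer confines the play, a descending step toward $Y_{\ell-j}$ now lands back \emph{inside} $Y^*$ only with a probability bounded below by a constant (of order $(\delta_{\min}/m)^{O(n)}$) rather than with probability~$1$; chaining the at most $n$ successive descents together with the reward-$1$ phases at each level then yields a uniform positive lower bound $p_\epsilon$ on the probability of the conjoined event.

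Second, I would combine the family $(\sigma_1^{\epsilon_j})_{j\ge1}$ with $\epsilon_j\downarrow 0$ into a Markov strategy $\sigma_1^*$ as in Lemma~\ref{lemm:almost2}: play $\sigma_1^{\epsilon_j}$ for a precomputed number $J_j$ of rounds in phase $j$, where $J_j$ (obtained by value iteration on the MDP induced by $\sigma_1^{\epsilon_j}$) is large enough that, conditioned on the play not having left $Y^*$, the running average exceeds $1-2\epsilon_j$ and stays above it. On the event that the play remains in $Y^*$ through every phase, the running average then exceeds $1-2\epsilon_j$ from phase $j$ on with $\epsilon_j\to0$, which forces $\LimInfAvg=1$; since Lemma~\ref{lemm:not-positive} shows player~2 wins almost surely outside $Y^*$, it suffices to make this ``never leaves $Y^*$'' event have positive probability.

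The hard part will be the positivity bookkeeping in this last step. Unlike Lemma~\ref{lemm:almost2}, where the safety objective held with probability~$1$ and the phases chained for free, here the play can leak out of $Y^*$, and player~2 can adversarially trade off between forcing leaks and depressing the average. I would therefore have to choose the schedule $(\epsilon_j,J_j)$ so that, simultaneously, (i)~each phase is long enough to pull the running average up to $1-2\epsilon_j$; (ii)~whenever player~2 plays actions in $\Bad$ to block the average, the descending moves reach a lower $Y_i$ inside $Y^*$ with a probability that does not vanish across phases; and (iii)~the aggregate leak probability over all phases is summable, so that $\prod_j(1-\mathrm{leak}_j)>0$ and the play survives in $Y^*$ with positive probability in the limit $\epsilon_j\to0$. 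Balancing (i) against (iii), presumably via a Borel--Cantelli estimate on the per-step leak and descent events, is the crux: the games $G^1$ and matching pennies already exhibit both the cooperative regime, where vanishing mixing alone gives average~$1$, and the forced-descent regime, where a constant-probability descent gives average~$1$, and the schedule must win against every adversarial interpolation between the two.
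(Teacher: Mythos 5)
Your first step is false, and it is contradicted by this paper's own results. You claim that a stationary strategy $\sigma_1^\epsilon$ (Good actions mixed with fixed weights $\beta_j$) achieves a \emph{uniform} bound $\Pr\nolimits_s^{\sigma_1^\epsilon,\sigma_2}(\LimInfAvg(1-\epsilon)\cap\Safe(Y^*))\geq p_\epsilon>0$ against all $\sigma_2$ and all $s\in Y^*$. Test this on the game $\bar{G}$ of Figure~\ref{fig:barG}: there $Y^*=\set{v,v^1}$, the state $v$ lies in $Y_2\setminus Y_1$, and $\Good(v,S,Y_1,Y_2)=\set{a_2}$ while $\Allow(v,S)=\set{a_1,a_2}$. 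Any stationary strategy at $v$ either plays $a_1$ with some fixed probability $q>0$ --- then player~2 answers $b_2$ forever, each round the play falls into the absorbing reward-$0$ state $v^0$ with probability $q$, so $\Safe(Y^*)$ fails almost surely and $\LimInfAvg=0$ almost surely --- or it plays $a_2$ with probability~$1$ --- then player~2 answers $b_1$ and the play falls to $v^0$ in one step. Either way the probability of your event is $0$, not $\geq p_\epsilon$. This is exactly the content of Lemma~\ref{lemm:positive super constant}: positive winning here requires infinite memory, so no stationary (indeed no finite-memory) intermediate lemma of the kind you posit can hold. The root cause is the structural difference you yourself flagged but did not follow through: with $X=S$ we have $\Allow(s,S)=\Gamma_1(s)$, so nothing confines the play to $Y^*$; the $\Safe$ half of Lemma~\ref{lemm:almost1}, which there holds with probability~$1$ and is what makes a uniform bound possible, has no analogue, and any \emph{fixed} per-round mixing probability compounds over infinite time into a leak of probability~$1$.

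The paper's proof consequently uses neither a stationary lemma nor phases. It defines $\sigma_1^*$ so that the mixing weight decays with the \emph{global round index}: in round $k$ it plays the Good actions of the current level with probability $1-\epsilon_k$ and the remaining actions with probability $\epsilon_k$, where $\epsilon_k=2^{-k-1}$, so the probability that player~1 \emph{ever} deviates from Good actions is at most $\sum_k\epsilon_k\leq\tfrac12$. The induction on levels is then a dichotomy: if player~2 never puts positive probability on $\Bad(s,S,Y_{i-1})$, then on the probability-$\geq\tfrac12$ event of no deviation every reward is~$1$ and $Y_i$ is never left, giving $\LimInfAvg=1$ directly (no $1-\epsilon$ detour and no conditioning on survival); otherwise some Bad action meets a witness player-1 action that $\sigma_1^*$ plays with positive probability, so $Y_{i-1}$ is reached with positive probability and one recurses. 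The final bound is a finite product of positive numbers --- no uniform $p_\epsilon$ exists or is needed. Your step-2 ``crux'' also fails concretely with the ingredients you chose: to pull the Cesaro average to $1-2\epsilon_j$ via the Lemma~\ref{lemm:almost1} guarantee, the phase length $J_j$ must dominate the expected descent times, which are of order $1/\beta_j$ or worse, so the per-phase leak $1-(1-\beta_j)^{J_j}$ tends to~$1$ and $\prod_j(1-\mathrm{leak}_j)=0$. The repair is not a Borel--Cantelli balancing act on your schedule but the paper's observation that the per-round leak must shrink with absolute time (e.g., weight $2^{-j}/J_j$ inside phase $j$, or simply per-round decay), combined with the stay-or-descend dichotomy that removes the need for any quantitative recovery guarantee inside $Y^*$.
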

\begin{proof}
Let $Y_0=\emptyset$ and $Y_{i+1}=\nu Z. \ASP(S,Y_i,Z)$. Also let $\ell$ be the smallest number such that $Y_{\ell+1}=Y_{\ell}$ and $Y^*=Y_{\ell}$. 
To construct $\sigma_1^*$ we will first define a strategy $\sigma_1^{\epsilon}$, for all $\epsilon>0$. 
Fix $\epsilon>0$ and we define $\sigma_1^{\epsilon}$ as follows:
For $s\not \in Y^*$ the strategy plays arbitrarily. For $s\in (Y_i\setminus Y_{i-1})$ the strategy is as follows: \[\sigma_1^{\epsilon}(s)(a)=\begin{cases} \frac{1-\epsilon}{\Gd(s)} & \text{for }a\in \Good(s,S,Y_{i-1},Y_i)\text{ and } \Gd(s)\neq \Aw(s)\\
\frac{1}{\Gd(s)} & \text{for }a\in \Good(s,S,Y_{i-1},Y_i) \text{ and } \Gd(s)= \Aw(s)\\
\frac{\epsilon}{\Aw(s)-\Gd(s)} &\text{for }a\not \in \Good(s,S,Y_{i-1},Y_i),\end{cases}\]
where $\Gd(s)=\left|\Good(s,S,Y_{i-1},Y_i)\right|$ and $\Aw(s)=\left|\Allow(s,S)\right|=\left|\Gamma_1(s)\right|$. By definition of $Y_i$, the set $\Good(s,S,Y_{i-1},Y_i)$ is not empty and hence this is well-defined.

The construction of the desired strategy $\sigma_1^*$ is as follows: 
consider the sequence $\epsilon_1,\epsilon_2,\ldots$ such that 
$\epsilon_1=\frac{1}{4}$ and $\epsilon_{i+1}=\frac{\epsilon_i}{2}$.
In round $k$, the strategy $\sigma_1^*$ will play according to $\sigma_1^{\epsilon_k}$. Note that this is a Markov strategy.

Let $s\in (Y_{i}\setminus Y_{i-1})$. We will now show the statement using induction in $i$. More precisely, assume that we are in $s$ in round $j$, we will show that either some state in $Y_{i-1}$ is reached with positive probability or $\Pr\nolimits_s^{\sigma_1^*,\sigma_2}(\LimInfAvg(1))\geq \frac{1}{2}$. 

For the base case, $i=1$, notice that $Y_0=\emptyset$. 
Hence we need to show that $\Pr\nolimits_s^{\sigma_1^*,\sigma_2}(\LimInfAvg(1))\geq \frac{1}{2}$. By construction of $\sigma_1^*$, the probability for player~1 to ever play a action outside $\Good(s,S,Y_0,Y_{1})$ is $\sum_{k=j}^\infty \epsilon_k\leq \frac{1}{2}$. If no action outside $\Good(s,S,Y_0,Y_{1})$ is ever played we have by definition of $\Good(s,S,Y_0,Y_{1})$ that $Y_1$ is never left and we will in each step get a reward of 1.

For $i>1$ there are two cases. Either player~2 plays an action in $\Bad(s,S,Y_{i-1})$ with positive probability at some point or not. If not, the argument is identical to the base case (except that it is $Y_i$ that will not be left with probability greater than a half). Otherwise, $Y_{i-1}$ is reached with positive probability because all actions are played with positive probability by $\sigma_1^*$ and the statement follows by induction.
\end{proof}

\begin{theorem}[Qualitative determinacy and polynomial time computability]
The following assertions hold for all concurrent game structures with boolean reward functions:
\begin{enumerate}

\item We have 
\[
\begin{array}{rcl}
Y^* & = & \Positive_1(\LimInfAvg(1)) = \Positive_1(\LimSupAvg(1)) \\
& = & 
\displaystyle 
\bigcap_{\varepsilon >0} \Positive_1(\LimInfAvg(1-\varepsilon)) =
\bigcap_{\varepsilon >0} \Positive_1(\LimSupAvg(1-\varepsilon));
\end{array}
\] 
and 
\[
\begin{array}{rcl}
(S\setminus Y^*) & = & 
\Almost_2(\ov{\LimInfAvg}(1)) = \Almost_2(\ov{\LimSupAvg}(1)) \\
& = & 
\displaystyle
\bigcup_{c>0} \Almost_2(\ov{\LimInfAvg}(1-c)) 
=\bigcup_{c>0} \Almost_2(\ov{\LimSupAvg}(1-c));
\end{array}
\]
where $Y^*=\mu Y. \nu Z. \ASP(S,Y,Z)$.

\item The set $Y^*$ can be computed in quadratic time 
(in time $O(n \cdot |\trans|)$) where $|\trans|=\sum_{s\in S}\sum_{a\in \Gamma_1(s)}\sum_{b\in \Gamma_2(s)}\left| \dest(s,a,b)\right|$,
by the straight-forward  computation of the $\mu$-calculus formula
$\mu Y. \nu Z. \ASP(S,Y,Z)$.

\end{enumerate}

\end{theorem}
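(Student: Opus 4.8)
The plan is to derive both chains of set equalities by assembling the two preceding lemmas with a handful of monotonicity inclusions arranged into closed cycles, exactly mirroring the proof of Theorem~\ref{thm:almost}; the complexity bound of item~2 then drops out of the fixpoint structure. First I record the trivial (monotone) inclusions that follow from $\LimInfAvg(\pat)\le\LimSupAvg(\pat)$ on every path: the objective containments $\LimInfAvg(\lambda)\subseteq\LimSupAvg(\lambda)$, $\ov{\LimSupAvg}(1)\subseteq\ov{\LimInfAvg}(1)$ and $\ov{\LimSupAvg}(1-c)\subseteq\ov{\LimInfAvg}(1-c)$ all hold (together with the evident threshold-monotonicity of all the objectives in their argument, used for the unions and intersections). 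Each such containment is inherited by $\Positive_1(\cdot)$ and by $\Almost_2(\cdot)$, since both operators are monotone in the objective; these supply every ``easy'' inclusion among the five sets in each chain.

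For the player-1 chain I would close the cycle
\[
Y^*\subseteq\Positive_1(\LimInfAvg(1))\subseteq\Positive_1(\LimSupAvg(1))\subseteq\bigcap_{\varepsilon>0}\Positive_1(\LimSupAvg(1-\varepsilon))\subseteq Y^*,
\]
where the first inclusion is exactly Lemma~\ref{lemm:positive} (the Markov strategy $\sigma_1^*$ guarantees $\Pr(\LimInfAvg(1))>0$ from every $s\in Y^*$), the middle two are monotone, and the last is Lemma~\ref{lemm:not-positive}: for $s\notin Y^*$ player~2's stationary $\sigma_2^*$ forces $\LimSupAvg\le 1-c$ with probability~$1$, so for any $\varepsilon<c$ the event $\LimSupAvg\ge 1-\varepsilon$ has probability~$0$ against every $\sigma_1$, whence $s\notin\bigcap_{\varepsilon>0}\Positive_1(\LimSupAvg(1-\varepsilon))$. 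The two remaining sets $\Positive_1(\LimInfAvg(1))$ and $\bigcap_{\varepsilon>0}\Positive_1(\LimInfAvg(1-\varepsilon))$ are then sandwiched between members already shown equal to $Y^*$, so they too equal $Y^*$.

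For the player-2 chain (the dual statement, which is what makes this a \emph{determinacy} theorem) I would close the symmetric cycle
\[
(S\setminus Y^*)\subseteq\bigcup_{c>0}\Almost_2(\ov{\LimSupAvg}(1-c))\subseteq\Almost_2(\ov{\LimSupAvg}(1))\subseteq\Almost_2(\ov{\LimInfAvg}(1))\subseteq(S\setminus Y^*).
\]
The first inclusion is Lemma~\ref{lemm:not-positive} after noting $\ov{\LimSupAvg}_{\leq}(1-c)\subseteq\ov{\LimSupAvg}(1-c')$ for any $c'<c$; the middle two are monotone; and the last holds because if some $s\in Y^*$ also lay in $\Almost_2(\ov{\LimInfAvg}(1))$, then playing player~1's $\sigma_1^*$ against player~2's witnessing almost-sure spoiling strategy would give both $\Pr(\LimInfAvg<1)=1$ and, by Lemma~\ref{lemm:positive}, $\Pr(\LimInfAvg(1))>0$, a contradiction. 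Once more the two $\LimInfAvg$-indexed sets are squeezed between sets equal to $(S\setminus Y^*)$. Since $\Positive_1(\LimInfAvg(1))=Y^*$ and $\Almost_2(\ov{\LimInfAvg}(1))=S\setminus Y^*$ are complementary, every state is won positively by player~1 or spoiled almost-surely by player~2, which is the asserted qualitative determinacy.

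Item~2 is the easy part: $Y^*=\mu Y.\nu Z.\,\ASP(S,Y,Z)$ is a two-level (one-alternation) nested fixpoint, obtained from the three-level formula $\nu X.\mu Y.\nu Z.\,\ASP(X,Y,Z)$ of Theorem~\ref{thm:almost} by freezing $X$ at $S$ and discarding the outer $\nu X$. Counting work by the same convention as Theorem~\ref{thm:almost} — each $\ASP$ evaluation costs $O(|\trans|)$ and each fixpoint variable traverses a chain of length at most $n$ — removing one alternation removes one factor of $n$ from the cubic bound, giving $O(n\cdot|\trans|)$. I do not expect a genuine obstacle here: all the substantive difficulty, namely the strategy constructions, already lives inside Lemmas~\ref{lemm:positive} and~\ref{lemm:not-positive}, so the theorem itself is an assembly argument. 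The only things to get right are purely bookkeeping, and hence where I would be most careful: reconciling the non-strict threshold $\ov{\LimSupAvg}_{\leq}(1-c)$ produced by Lemma~\ref{lemm:not-positive} with the strict thresholds in the statement (handled by shrinking $c$), and verifying that each cycle really threads through all five sets in its chain.
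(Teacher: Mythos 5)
Your proposal follows exactly the paper's route: the paper proves this theorem in a single line, stating that it is "analogous to Theorem~\ref{thm:almost}, and uses Lemma~\ref{lemm:positive} and Lemma~\ref{lemm:not-positive}," which is precisely the assembly you spell out — the two lemmas supplying the nontrivial inclusions, monotone inclusions (from $\LimInfAvg \leq \LimSupAvg$ and threshold monotonicity) closing the cycles, and the non-strict-versus-strict threshold reconciliation handled by shrinking $c$. Your treatment of item~2 likewise matches the paper's own convention from Theorem~\ref{thm:almost} (one fewer fixpoint alternation, one fewer factor of $n$), so the proposal is correct and essentially identical in approach, merely more explicit than the paper's one-sentence proof.
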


\begin{proof}
The proof of the theorem is analogous to Theorem \ref{thm:almost}, and uses
Lemma~\ref{lemm:positive} and Lemma~\ref{lemm:not-positive}.
\end{proof}

\subsection{Strategy complexity}\label{subsec5}
In this section we will establish the complexities of the witness 
positive and almost-sure winning strategies for player~1 and player~2,
from their respective winning sets.

Let $0\leq \epsilon<1$ be given. We will show that there exists games with a state $s$ such that there exists $\sigma_1^*$ such that
for all $\sigma_2$ we have $\Pr_s^{\sigma_1^*,\sigma_2}(\LimInfAvg(1-\epsilon))>0$, 
but where no strategy with finite memory for player~1 ensures so.

\smallskip\noindent{\em Game with no finite-memory positive winning strategies.}
Let $\bar{G}$ be the following repeated game with absorbing states. The game $\bar{G}$ has $3$ states, $v^0$, $v^1$ and $v$. For $i\in \{0,1\}$, state $v^i$ is absorbing and has only one action for either player, $a^i$ and $b^i$ respectively and where $\cost(v^i,a^i,b^i)=i$. The state $v$  has two actions for either player. The actions are $a_1$ and $a_2$ for player~1 and $b_1$ and $b_2$ for player~2. Also $\cost(v,a_i,b_j)=0$ except for $i=j=2$, for which  $\cost(v,a_2,b_2)=1$. Furthermore $\delta(v,a_i,b_j)=v^0$ for $i\neq j$, $\delta(v,a_1,b_1)=v^{1}$ and $\delta(v,a_2,b_2)=v$. There is an illustration of $\bar G$ in Figure \ref{fig:barG}. Clearly state $v^1$ and state $v$  are in $Y^*$, because
a Markov strategy which in state $v$ in round $j$ plays action $a_1$ with probability $\frac{1}{2^{j+1}}$ and action $a_2$ with the remaining probability ensures 
$\LimInfAvg(1)$ with positive probability.

\begin{lemma}\label{lemm:positive super constant}
No finite-memory strategy $\sigma_1^*$ for player~1 in $\bar{G}$ ensures that for all $\sigma_2$ and for all $0\leq \epsilon<1$ we have $\Pr_v^{\sigma_1^*,\sigma_2}(\LimInfAvg(1-\epsilon))>0$.  
\end{lemma}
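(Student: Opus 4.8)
The plan is to show that for every finite-memory strategy $\sigma_1^*$ (with finite memory set $\Mem$) there is a player-2 strategy $\sigma_2$ that forces the play from $v$ into the absorbing reward-$0$ state $v^0$ with probability~$1$; then $\LimInfAvg=0$ almost surely, so $\Pr_v^{\sigma_1^*,\sigma_2}(\LimInfAvg(1-\epsilon))=0$ for every $\epsilon<1$, and in particular $\sigma_1^*$ fails the stated positive-winning requirement. Since $\bar G$ is a repeated game with absorbing states, the memory of $\sigma_1^*$ is updated only in $v$, and the only action pair keeping the play in $v$ is $(a_2,b_2)$ (each of $(a_1,b_1),(a_1,b_2),(a_2,b_1)$ leaves $v$). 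The crucial bookkeeping fact I would record first is this: if the play is still in $v$ after $k$ rounds, then the first $k$ action pairs were all $(a_2,b_2)$, so player~1's memory is the deterministic iterate $\mem_k$ obtained by applying the memory-update function of $\sigma_1^*$ on the pair $(a_2,b_2)$ exactly $k$ times to the initial memory. In particular player~2 can locate $\mem_k$ simply by counting rounds.

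Next I would analyse the deterministic trajectory $\mem_0,\mem_1,\dots$. As $\Mem$ is finite this trajectory is eventually periodic; let $C\subseteq\Mem$ be the set of recurrent memory states, and write $p_\mem$ for the probability that $\sigma_1^*$ plays $a_1$ in $v$ when its memory is $\mem$. The argument splits on the behaviour of $\sigma_1^*$ along $C$. \emph{Case~1: some $\mem\in C$ has $p_\mem>0$.} Here I would let $\sigma_2$ play $b_2$ forever. Then staying in $v$ requires player~1 to play $a_2$, while any round in which player~1 plays $a_1$ yields $(a_1,b_2)$ and is absorbed in $v^0$. Since the trajectory revisits, in infinitely many rounds, a state of $C$ where player~1 plays $a_1$ with probability at least $p^*:=\min\{p_\mem : \mem\in C,\ p_\mem>0\}>0$, the probability of never being absorbed is at most $\lim_{M\to\infty}(1-p^*)^M=0$, so $v^0$ is reached almost surely. \emph{Case~2: every $\mem\in C$ has $p_\mem=0$.} Here I would fix $N=|\Mem|$ (so $\mem_N\in C$) and let $\sigma_2$ play $b_2$ for the first $N$ rounds and $b_1$ thereafter (a Markov strategy). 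If the play leaves $v$ within the first $N$ rounds it can do so only via $(a_1,b_2)$ into $v^0$; and if it is still in $v$ at round $N$ its memory lies in $C$, whence player~1 plays $a_2$ with probability~$1$ and the pair $(a_2,b_1)$ drives the play into $v^0$. The reward-$1$ state $v^1$, reachable only through $(a_1,b_1)$, is never reached, because player~2 plays $b_1$ only while player~1 plays $a_2$ with probability~$1$. Thus $v^0$ is again reached almost surely.

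In both cases the play is absorbed in $v^0$ with probability~$1$, giving $\LimInfAvg=0$ almost surely and hence the desired conclusion. The main obstacle is the dichotomy that drives the two cases: finiteness of memory pins down player~1's behaviour on the recurrent states $C$, and one must see that whichever way player~1 resolves the Big-Match-style tension there---occasionally \emph{betting} by playing $a_1$ (punished by $b_2$, which sends $(a_1,b_2)$ to $v^0$) or eventually \emph{giving up} and playing $a_2$ forever (punished by $b_1$, which sends $(a_2,b_1)$ to $v^0$)---player~2 has a matching punishment into $v^0$. The remaining steps are routine: justifying via round-counting that player~2 can detect entry into $C$ in Case~2, and making the vanishing-product estimate precise in Case~1.
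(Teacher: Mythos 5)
Your proof is correct, and its core coincides with the paper's: both exploit the dichotomy between memory states in which $\sigma_1^*$ plays $a_2$ with probability~$1$ (punishable by $b_1$, since $(a_2,b_1)$ is absorbed in $v^0$) and memory states in which $a_1$ gets positive probability (punishable by $b_2$, since $(a_1,b_2)$ is absorbed in $v^0$), and both force absorption in $v^0$ almost surely, whence $\LimInfAvg=0$ with probability~$1$ and the positive-winning requirement fails. The difference is in how the punishment is organized. The paper uses a single adaptive spoiling strategy: player~2, conditioning on the history, plays $b_1$ exactly when $\sigma_1^*$'s current memory state plays $a_2$ deterministically and $b_2$ otherwise; writing $p$ for the minimum nonzero probability of $a_1$ over all memory states, every round then has absorption probability at least $p$, and one uniform estimate finishes the proof with no case analysis. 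You instead make explicit the structural fact the paper leaves implicit in the phrase ``conditioned on the history so far'': staying in $v$ forces the history to be all-$(a_2,b_2)$, so player~1's memory follows a precomputable deterministic trajectory $\mem_0,\mem_1,\dots$, and eventual periodicity of that trajectory yields your two cases, each handled by an oblivious strategy (constant $b_2$; or $b_2$ for $|\Mem|$ rounds followed by $b_1$ forever). What your route buys: it rigorously justifies that player~2 can track player~1's memory (by round counting alone), and it exhibits spoiling strategies that depend only on the round number, i.e., genuinely Markov -- and in Case~1 even stationary -- strategies, a slightly sharper conclusion than the paper's history-dependent formulation. What the paper's route buys: brevity, since the per-round classification into the two types makes the periodicity analysis and the case split unnecessary. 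Both arguments are complete; the steps you defer (the vanishing product $(1-p^*)^M\to 0$ and the claim $\mem_{|\Mem|}\in C$) are indeed routine.
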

\begin{proof}
The proof will be by contradiction. 
Consider $0\leq \epsilon<1$. Assume towards contradiction that a strategy $\sigma_1$ using finite memory for player~1 exists such that for all $\sigma_2$ we have 
$\Pr_v^{\sigma_1,\sigma_2}(\LimInfAvg(1-\epsilon))>0$. We will show that there exists $\sigma_2$ such that $\Pr_v^{\sigma_1,\sigma_2}(\LimInfAvg(1-\epsilon))=0$
to establish the contradiction.

We will divide the memory states of player~1 into two types. The two types are memory states of type 1, where $\sigma_1$ plays $a_2$ with probability 1 and memory states of type 2, where $\sigma_1$ plays $a_2$ with probability less than 1.
The strategy $\sigma_2$ is then to play $b_1$, if, conditioned on the history so far, $\sigma_1$ is in a memory state of type 1, otherwise play $b_2$. 
Let $p$ be the smallest non-zero probability with which $\sigma_1$ plays $a_1$. We see that in each round, if player~1 follows $\sigma_1$ and player~2 follows $\sigma_2$, there is a probability of at least $p$ to reach $v^0$ and otherwise the plays stays in $v$. Clearly, we must therefore reach $v^0$ after some number of steps with probability 1, which will ensure that all remaining rewards are~0. 
Hence $\Pr_v^{\sigma_1,\sigma_2}(\LimInfAvg(1-\epsilon))=0$. This is a contradiction and the desired result follows.
\end{proof}

\begin{figure}
\centering

\begin{tikzpicture}[node distance=3cm]
\matrix (v1) [label=left:$v^1$:,minimum height=1.5em,minimum width=1.5em,matrix of math nodes,nodes in empty cells, left delimiter={.},right delimiter={.}]
{
\\
};
\draw[black] (v1-1-1.north west) -- (v1-1-1.north east);
\draw[black] (v1-1-1.south west) -- (v1-1-1.south east);
\draw[black] (v1-1-1.north west) -- (v1-1-1.south west);
\draw[black] (v1-1-1.north east) -- (v1-1-1.south east);
\matrix (v) [label=left:$v$:,right of=v1,minimum height=1.5em,minimum width=1.5em,matrix of math nodes,nodes in empty cells, left delimiter={.},right delimiter={.}]
{
&\\
&\\
};
\draw[black] (v-1-1.north west) -- (v-1-2.north east);
\draw[black] (v-1-1.south west) -- (v-1-2.south east);
\draw[black] (v-2-1.south west) -- (v-2-2.south east);
\draw[black] (v-1-1.north west) -- (v-2-1.south west);
\draw[black] (v-1-2.north west) -- (v-2-2.south west);
\draw[black] (v-1-2.north east) -- (v-2-2.south east);
\matrix (v0) [label=left:$v^0$:,right of=v,minimum height=1.5em,minimum width=1.5em,matrix of math nodes,nodes in empty cells, left delimiter={.},right delimiter={.}]
{
\\
};
\draw[black] (v0-1-1.north west) -- (v0-1-1.north east);
\draw[black] (v0-1-1.south west) -- (v0-1-1.south east);
\draw[black] (v0-1-1.north west) -- (v0-1-1.south west);
\draw[black] (v0-1-1.north east) -- (v0-1-1.south east);
\draw[->](v1-1-1.center) .. controls +(90:1) and +(45:1) ..  node[midway,above,right] (x) {1} (v1);
\draw[->](v-1-1.center) to  (v1);
\draw[->](v-1-2.center) to[bend left=75]  (v0);
\draw[->](v-2-1.center) to[bend right=75]  (v0);
\draw[->](v-2-2.center) .. controls +(0:1) and +(-45:1.5) ..  node[midway,below,right] (x) {1} (v);
\draw[->](v0-1-1.center) .. controls +(90:1) and +(45:1) ..   (v0);
\end{tikzpicture}

\caption{The figure shows $\bar G$ that will be used to show infinite-memory requirement for positive winning strategies.}\label{fig:barG}
\end{figure}
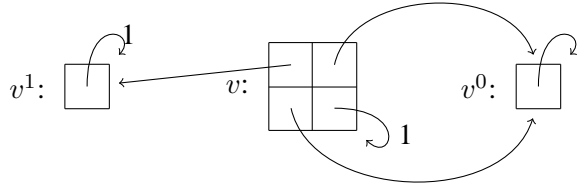

For completeness we will now show that there exists games with states $s$ such that there exists $\sigma_2$ such that for all $\sigma_1$
we have $\Pr_s^{\sigma_1,\sigma_2}(\overline{\LimInfAvg}(1))=1$, but where no stationary strategy $\sigma_2$ with patience less than $m$ exists.

Let $m$ be some fixed number. The game $\underline{G^m}$ has $1$ state, $v$. The state $v$  has $m$ actions for both players. The actions are $a_1,a_2,\dots,a_m$ for player~1 and $b_1,b_2,\dots,b_m$ for player~2. Also $\cost(v,a_i,b_j)=1$ for $i\neq j$ and $\cost(v,a_i,b_i)=0$. Furthermore $\delta(v,a_i,b_j)=v$. There is an illustration of $\underline {G^2}$ in Figure \ref{fig:underG}. Clearly state $v$ is in $(S\setminus Y^*)$.

\begin{lemma}\label{lemm:positive patience lower bound}
For all $m>0$, no stationary strategy $\sigma_2$ for player~2 with patience less than $m$ in $\underline{G^m}$ ensures that for all $\sigma_1$ we have $\Pr_v^{\sigma_1,\sigma_2}(\overline{\LimInfAvg}(1))=1$.
\end{lemma}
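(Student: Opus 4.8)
The plan is to exploit a simple pigeonhole argument on the support of player~2's stationary strategy, using the fact that $\underline{G^m}$ has a single non-absorbing state. Since $\underline{G^m}$ consists of the single state $v$, a stationary strategy $\sigma_2$ for player~2 is nothing but one probability distribution over $\{b_1,\dots,b_m\}$, and having patience strictly less than $m$ means, by the definition of patience, that every move played with positive probability is assigned probability strictly greater than $\frac{1}{m}$. The proof is by contradiction: I assume such a low-patience $\sigma_2$ ensures $\Pr_v^{\sigma_1,\sigma_2}(\overline{\LimInfAvg}(1))=1$ against all $\sigma_1$, and exhibit a player-1 reply that forces every reward to be~$1$.

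First I would observe that the support of $\sigma_2$ cannot contain all $m$ actions. If each of $b_1,\dots,b_m$ were played with probability strictly greater than $\frac{1}{m}$, then the total probability mass would exceed $1$, which is impossible. Hence there is some index $k$ with $\sigma_2(v)(b_k)=0$, i.e.\ at least one of player~2's moves is never played.

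Next I would let player~1 respond with the positional strategy that always plays the matching action $a_k$. Since $b_k$ has probability~$0$ under $\sigma_2$, in every round the realized action pair is $(a_k,b_j)$ for some $j\neq k$ with probability~$1$, and by the definition of the reward function in $\underline{G^m}$ we have $\cost(v,a_k,b_j)=1$ whenever $j\neq k$. Consequently every realized reward is~$1$ almost surely, so $\LimInfAvg(\pat)=1$ for the play, and therefore $\Pr_v^{\sigma_1,\sigma_2}(\overline{\LimInfAvg}(1))=0$. This contradicts the assumption that $\sigma_2$ forces $\overline{\LimInfAvg}(1)$ with probability~$1$ against all player-1 strategies, establishing the lemma.

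There is essentially no hard step in this argument; the only point requiring care is the precise reading of ``patience less than $m$'' as ``every positive move-probability exceeds $\frac{1}{m}$,'' since that is exactly what forces a missing action in player~2's support and hence a perfectly matching, always-rewarding reply for player~1. This also dovetails with the companion result (Remark~\ref{rem:proof} and Lemma~\ref{lemm:not-positive}), where patience exactly $m$ suffices for player~2 to win, showing the bound $m$ is tight.
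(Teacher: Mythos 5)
Your proof is correct and follows essentially the same route as the paper's: argue by contradiction, note that patience strictly less than $m$ forces some action $b_k$ to have probability~$0$ (you supply the pigeonhole justification that the paper dismisses with ``clearly''), and then let player~1 play $a_k$ forever to obtain reward~$1$ in every round, contradicting the assumed guarantee. No gaps; your version is just slightly more explicit.
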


\begin{proof}
The proof will be by contradiction. Assume that such a strategy $\sigma_2$ for player~2 exists. Clearly it must play some action $b_i$ with probability 0. Hence, if $\sigma_1$ plays $a_i$ with probability~1, 
we have $\Pr_v^{\sigma_1,\sigma_2}(\LimInfAvg(1))=1$. That is a contradiction.
\end{proof}

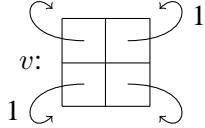
\begin{figure}
\centering
\begin{tikzpicture}[node distance=3cm]
\matrix (v) [label=left:$v$:,minimum height=1.5em,minimum width=1.5em,matrix of math nodes,nodes in empty cells, left delimiter={.},right delimiter={.}]
{
&\\
&\\
};
\draw[black] (v-1-1.north west) -- (v-1-2.north east);
\draw[black] (v-1-1.south west) -- (v-1-2.south east);
\draw[black] (v-2-1.south west) -- (v-2-2.south east);
\draw[black] (v-1-1.north west) -- (v-2-1.south west);
\draw[black] (v-1-2.north west) -- (v-2-2.south west);
\draw[black] (v-1-2.north east) -- (v-2-2.south east);
\draw[->](v-1-1.center) .. controls +(180:1) and +(135:1.5) ..   (v);
\draw[->](v-1-2.center) .. controls +(0:1) and +(45:1.5) ..  node[midway,above,right] (x) {1} (v);
\draw[->](v-2-1.center) .. controls +(180:1) and +(-135:1.5) ..  node[midway,below,left] (x) {1} (v);
\draw[->](v-2-2.center) .. controls +(0:1) and +(-45:1.5) ..   (v);
\end{tikzpicture}

\caption{The example shows $\underline G^2$ of the game family $\underline G^n$
that will be used to show that patience $m$ is required by player~2.\label{fig:underG}}
\end{figure}

\begin{theorem}[Strategy complexity]
The following assertions hold for concurrent games with boolean reward functions:
\begin{enumerate}

\item Positive winning strategies for player~1 for objectives 
$\LimInfAvg(1)$ and $\LimInfAvg(1-\epsilon)$, for $\epsilon>0$, 
(also $\LimSupAvg(1)$ and $\LimSupAvg(1-\epsilon)$, for $\epsilon>0$)
require infinite-memory in general; whenever such positive winning strategies exist, 
Markov strategies are sufficient and the optimal bound for time-dependent memory is $T$, 
for all rounds $T>0$.

\item Stationary almost-sure winning strategies for player~2 exist for objectives 
$\ov{\LimInfAvg}(1)$ and $\ov{\LimInfAvg}(1-c)$, for some constant $c>0$, 
(also $\ov{\LimSupAvg}(1)$ and $\ov{\LimSupAvg}(1-c)$, for some constant $c>0$)
 and the optimal bound for patience of such stationary almost-sure winning 
strategies is the size of the action space.

\end{enumerate}
\end{theorem}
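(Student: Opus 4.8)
The plan is to assemble both assertions directly from the lemmas already proved in this subsection, handling within each assertion the existence/sufficiency claim, the matching upper bound, and the matching lower bound, and then closing the gap between the specific objectives appearing in the lemmas and the four objectives named in the statement.

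For the first assertion I would start from the witness game $\bar G$. The state $v$ lies in $Y^*$, so by Lemma~\ref{lemm:positive} a (Markov) positive winning strategy exists from $v$; yet Lemma~\ref{lemm:positive super constant} shows that no finite-memory strategy positively wins there, for every $0\le\epsilon<1$. Taking $\epsilon=0$ handles $\LimInfAvg(1)$ and the fixed-$\epsilon$ case handles $\LimInfAvg(1-\epsilon)$, which establishes the infinite-memory requirement. The $\LimSupAvg$ variants follow by the very same spoiling construction: the player~2 strategy of Lemma~\ref{lemm:positive super constant} drives the play into the zero-reward absorbing state $v^0$ almost surely, so on almost every play $\LimSupAvg=\LimInfAvg=0$ and hence $\Pr_v^{\stra_1,\stra_2}(\LimSupAvg(1-\epsilon))=0$ as well. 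Sufficiency of Markov strategies is exactly Lemma~\ref{lemm:positive}. For the time-dependent memory bound, the upper bound $\timedep_{\stra}(T)=T$ is immediate from the definition of a Markov strategy, and the matching lower bound follows because $\bar G$ is a repeated game with absorbing states (only $v$ is non-absorbing), so Lemma~\ref{lemm:log T} applies to the infinite-memory strategy forced by Lemma~\ref{lemm:positive super constant} and gives $\timedep_{\stra}(T)\ge T$.

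For the second assertion, the existence of a stationary almost-sure winning strategy for player~2 together with the patience upper bound $m$ both come from Lemma~\ref{lemm:not-positive}, which produces a stationary $\stra_2$ of patience at most $m$ guaranteeing $\Pr_s^{\stra_1,\stra_2}(\ov{\LimSupAvg}_{\le}(1-c))=1$ for all $s\in(S\setminus Y^*)$ and all $\stra_1$, with $c=(\frac{\delta_{\min}}{m})^{n-1}\cdot\frac{1}{m}$. To cover all four objectives I would argue by containment: since $\LimInfAvg(\pat)\le\LimSupAvg(\pat)$ for every path, the event $\{\LimSupAvg\le 1-c\}$ is contained in $\ov{\LimInfAvg}(1)$ and $\ov{\LimSupAvg}(1)$ (because $1-c<1$), and after replacing $c$ by any $c'<c$ it is also contained in $\ov{\LimInfAvg}(1-c')$ and $\ov{\LimSupAvg}(1-c')$; hence the same stationary strategy is almost-sure winning for each of the four objectives with the adjusted constant. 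The matching patience lower bound is Lemma~\ref{lemm:positive patience lower bound}: in $\underline{G^m}$ no stationary strategy of patience strictly below $m$ can almost-surely enforce $\ov{\LimInfAvg}(1)$, so patience exactly $m$ is optimal (and since $m>1$ forces some action to carry probability in $(0,1)$, randomization is in particular necessary).

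The individual steps are short invocations, so the only places requiring genuine care are the bookkeeping between objectives and the applicability of the memory lower bound. Concretely, the main point to get right is matching the particular $\ov{\LimSupAvg}_{\le}(1-c)$ guarantee of Lemma~\ref{lemm:not-positive} to the four nominally distinct objectives of the theorem via the $\LimInfAvg\le\LimSupAvg$ containment and a shrink of the constant; and, in the first assertion, verifying that $\bar G$ genuinely satisfies the repeated-game-with-absorbing-states hypothesis of Lemma~\ref{lemm:log T}, which is what transfers the qualitative infinite-memory statement into the quantitative lower bound $\timedep_{\stra}(T)\ge T$.
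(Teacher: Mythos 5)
Your proposal is correct and follows essentially the same route as the paper's own proof: assertion~1 from Lemma~\ref{lemm:positive super constant} (infinite memory), Lemma~\ref{lemm:positive} (Markov sufficiency), and Lemma~\ref{lemm:log T} (time-dependent memory lower bound), and assertion~2 from Lemma~\ref{lemm:not-positive} (patience $m$ upper bound) and Lemma~\ref{lemm:positive patience lower bound} (lower bound). The only difference is that you explicitly carry out the bookkeeping the paper leaves implicit --- transferring the $\LimInfAvg$ witness to the $\LimSupAvg$ objectives via the absorption in $v^0$, shrinking the constant $c$ to pass from $\ov{\LimSupAvg}_{\leq}(1-c)$ to the four stated objectives, and checking that $\bar G$ satisfies the hypothesis of Lemma~\ref{lemm:log T} --- all of which is correct.
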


\begin{proof}
The proofs are as follows:
\begin{enumerate}

\item Lemma~\ref{lemm:positive super constant} shows that infinite-memory is required, and 
Lemma~\ref{lemm:positive}  shows that Markov strategies are sufficient for positive
winning.
The sufficiency of Markov strategies establishes the $T$ upper bound 
for time-dependent memory; and Lemma~\ref{lemm:positive super constant} and Lemma~\ref{lemm:log T} establish
the $T$ lower bound for time-dependent memory.

\item The existence of stationary positive winning strategies with $m$ patience  follows from Lemma~\ref{lemm:not-positive}.
The lower bound of $m$ for patience follows from Lemma~\ref{lemm:positive patience lower bound}.
\end{enumerate}
The desired result follows.
\end{proof}

\section{Almost and Positive Winning for Quantitative Path Constraints}\label{Sec:quan}

In this section our goal is to establish hardness results for polynomial-time 
computability of $\Almost_1(\LimInfAvg(\lambda))$ and $\Positive_1(\LimInfAvg(\lambda))$,
given $\lambda$ is a rational number in the interval $(0,1)$, 
for turn-based stochastic games with boolean reward functions.
We first mention several related polynomial-time computability results: 
(1)~turn-based deterministic games with boolean reward functions can be 
solved in polynomial time (follows from~\cite{ZP96} as the pseudo-polynomial
time algorithm is polynomial for boolean rewards);
(2)~turn-based stochastic reachability games can be solved in polynomial
time for almost-sure and positive winning (follows from the results of~\cite{CJH03}
that show a polynomial reduction to turn-based deterministic B\"uchi games for almost-sure
and positive winning);
and
(3)~turn-based stochastic and concurrent stochastic games can be solved
in polynomial time if $\lambda=1$ as established in the 
previous sections for almost-sure and positive winning.
Hence our hardness result for almost-sure and positive winning for 
turn-based stochastic boolean reward games with $\lambda\neq 1$ is tight in
the sense that relaxation to deterministic games, or reachability objectives,
or $\lambda=1$ ensures polynomial-time computability.
Our hardness result will be a reduction from the problem of deciding if $\val(s)\geq c$, 
given a constant $c\geq 0$ and a state $s$ in a turn-based deterministic mean-payoff game 
with \emph{arbitrary} rewards to the problem of deciding whether $t\in \Almost_1(\LimInfAvg(\lambda))$ 
in turn-based stochastic games with \emph{boolean} rewards, for $\lambda \in (0,1)$.
Our reduction will also ensure that in the game obtained we have 
$\Almost_1(\LimInfAvg(\lambda))=\Positive_1(\LimInfAvg(\lambda))$. 
Hence the hardness also follows for the problem of deciding whether $t\in \Positive_1(\LimInfAvg(\lambda))$ 
in turn-based stochastic games with boolean rewards.
The polynomial-time computability of optimal values in turn-based deterministic mean-payoff games 
with arbitrary rewards is a long-standing open problem (the decision problem is in 
NP $\cap$ coNP, but no deterministic sub-exponential time algorithm is known).
To present the reduction we first present an equivalent and convenient notation 
for turn-based deterministic and turn-based stochastic games.

\smallskip\noindent{\bf Equivalent convenient notation for turn-based games.}
An equivalent formulation for turn-based stochastic games is as follows:
a turn-based stochastic game $G=((S,E),(S_1,S_2,S_P),\trans)$ 
consists of a finite set $S$ of states, $E$ of edges, a partition of the 
state space into player~1, player~2 and probabilistic states,
($S_1,S_2,S_P$, respectively) and a probabilistic transition function $\trans:S_P \to \distr(S)$ 
such that for all $s \in S_P$ and $t \in S$ we have $(s,t) \in E$ iff
$\trans(s)(t)>0$. 
In a turn-based stochastic game, in player~1 states the successor state is chosen
by player~1 and likewise for player~2 states. In probabilistic states
the successor state is chosen according to the probabilistic transition
function $\trans$. 
For a turn-based deterministic game we have $S_P=\emptyset$, and hence 
we do not need the transition function $\trans$, and simply represent them 
as $G=((S,E),(S_1,S_2))$.

\smallskip\noindent{\bf Optimal values in DMPGs.}
A DMPG (deterministic mean-payoff game) consists of a turn-based deterministic game 
$G=((S,E),(S_1,S_2))$ with a reward function $\cost:E \to \set{0,1,\ldots,M}$,
(note that the rewards are non-negative integers and not necessarily boolean).
The \emph{optimal value} for a state $s$, denoted as $\val(s)$, is the maximal limit-inf-average 
value that player~1 can ensure with a positional strategy against all positional 
strategies of the opponent.
Formally, given two positional strategies $\sigma_1$ and $\sigma_2$, and a starting 
state $s$, an unique cycle $C$ is executed infinitely often, and the mean-payoff 
value for $\sigma_1$ and $\sigma_2$ from $s$, denoted $\val(s,\sigma_1,\sigma_2)$,
is $\frac{\sum_{e\in C} \cost(e)}{|C|}$, the sum of the rewards in $C$, divided by
the length of $C$.
Then $\val(s)$ is the value of state $s\in S$, that is, $\val(s)=\max_{\sigma_1} \min_{\sigma_2} \val(s,\sigma_1,\sigma_2)$,
where $\sigma_1$ and $\sigma_2$ ranges over positional strategies 
of player~1 and player~2, respectively.
Given a rational number $\lambda$, the decision problem of whether 
$\val(s) \geq \lambda$ lies in NP $\cap$ coNP and can be computed in pseudo-polynomial 
time (in time $O(|S| \cdot |E| \cdot M)$) for DMPGs~\cite{ZP96,Brim11}. Finding an algorithm that runs in polynomial time and solves that decision problem is a long-standing open problem. 
We will reduce the computation of the value problem for DMPGs to almost-sure winning 
in turn-based stochastic games with boolean rewards but quantitative path constraints, 
i.e., $\Almost_1(\LimInfAvg(\lambda))$, for $\lambda \in (0,1)$.

\newcommand{\red}{\mathsf{Red}}

\smallskip\noindent{\bf Reduction.} 
Given the DMPG $G=((S,E),(S_1,S_2))$ with non-negative integral rewards, with largest reward $M$, the construction 
of a turn-based stochastic game $G'=((S',E'),(S_1',S_2',S_P),\trans')$ is as follows:
$S_1'=S_1$ and $S_2'=S_2$; and for every edge $e=(s,t)$ in $G$ with reward $\cost(e)$, we replace the edge between $s$ and $t$ 
 with a gadget with four additional states (namely, $v^1(e),v^2(e),v^3(e)$ and $v^4(e)$) 
along with $s$ and $t$ and eight edges with boolean rewards. 
The gadget is as follows: each state $v^i(e)$ is a probabilistic state for $i\in \{1,2,3,4\}$, and
from state $s$ there is an edge corresponding to the edge $e$ that goes to $v^1(e)$.
The other transitions from the states in the gadget are specified below:
(i)~from $v^1(e)$ the next state is state $v^2(e)$ with probability $\frac{\cost(e)}{M}$ and state $v^3(e)$ with probability $\frac{M-\cost(e)}{M}$, and 
the edges have reward~0;
(ii)~the next state from either state $v^2(e)$ or state $v^3(e)$ is $v^4(e)$ with probability $1-\frac{1}{M}$ and state $t$ with probability $\frac{1}{M}$, and the edges from $v^2(e)$ have reward~1 and
the edges from $v^3(e)$ have reward~0; and 
(iii)~the next state from state $v^4(e)$ is $v^1(e)$ with edge reward~0. 
There is a illustration of the gadget in Figure \ref{fig:gadget}.
We refer to the boolean reward turn-based stochastic game obtained by the above
reduction from a DMPG $G$ as $G'=\red(G)$.
Also note that the reduction is polynomial as all the probabilities can 
be expressed in polynomial size given the input DMPG $G$.

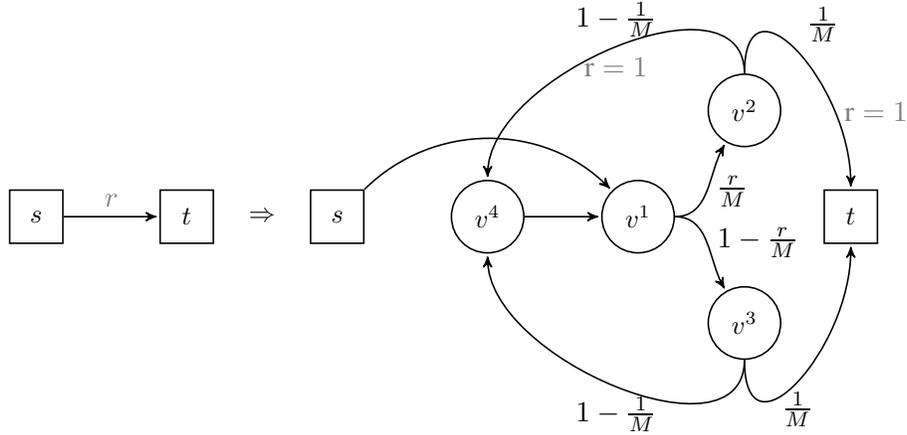
\begin{figure}
\begin{centering}

\begin{tikzpicture}[->,>=stealth',shorten >=1pt,auto,node distance=2cm,
                    semithick ]
      
\tikzstyle{every state}=[fill=white,draw=black,text=black,font=\small , inner sep=-.05 cm]
\tikzstyle{square}=[state,regular polygon,regular polygon sides=4,minimum height=1cm,minimum width=1cm]

    \node[square] (s){$s$};
    \node[square,right of=s] (t){$t$};
    \node[state,draw=white,right of=t,node distance=1cm] (ra){$\Rightarrow$};
    \node[square,right of=ra,node distance=1cm] (s2){$s$};
    \node[state,right of=s2] (v4){$v^4$};
    \node[state,right of=v4] (v1){$v^1$};
    \node[state,above right  of=v1] (v2){$v^2$};
    \node[state,below right  of=v1] (v3){$v^3$};
    \node[square,below right  of=v2] (t2){$t$};
    \path (s) edge node[gray] {$r$} (t)
    (s2) edge[bend left,out=45,in=135] (v1)
    (v4) edge (v1)
    (v1) edge[out=0,in=-120] node [midway, right] {$\frac{r}{M}$} (v2)
    (v1) edge[out=0,in=120] node [midway, right] {$1-\frac{r}{M}$} (v3)
    (v2) edge[out=90,in=90,distance=40] node[below,midway,gray] {$\cost=1$} node[midway,above] {$1-\frac{1}{M}$} (v4)
    (v2) edge[out=90,in=90,distance=40] node[right,near end,gray] {$\cost=1$} node[midway] {$\frac{1}{M}$} (t2)
    (v3) edge[out=-90,in=-90,distance=40] node[midway,below ] {$1-\frac{1}{M}$} (v4)
    (v3) edge[out=-90,in=-90,distance=40] node[midway,below] {$\frac{1}{M}$} (t2)    
    ;

\end{tikzpicture}

\end{centering}
\caption{Our gadget for reducing a DMPG to a turn-based stochastic boolean reward game: The edges that go to more than one state have the probability annotated on them in black; and 
any non-zero reward is annotated on the corresponding edge in gray.}
\label{fig:gadget}
\end{figure}

\smallskip\noindent{\em Property of the reduction.}
Let $e=(s,t)$ be some edge in $G$ with reward $\cost(e)$. It is easy to verify that in the gadget, if the edge to $v^1(e)$ is taken from $s$ in $G'$, 
then we eventually reach $t$ with probability~1, while we expect to get $\cost(e)$ rewards of value~1 
and $3\cdot M -\cost(e)$ rewards of value~0, before reaching $t$. The expected number of steps to reach $t$ from $s$ is thus always $3\cdot M$ and is independent
of the reward value $\cost(e)$.
Hence the total expected reward is $\cost(e)$ and one step of the game $G$ is simulated by 
$3\cdot M$ steps in $G'$.
We will show that if a state $s$ in $G$ has optimal value $\val(s)$, then the corresponding state in $G'$ is in 
$\Almost_1(\LimInfAvg(\frac{\val(s)}{3M}))$. 
Also, we will show that if a state in $G'$ is in $\Almost_1(\LimInfAvg(\lambda))$, then the corresponding state in $G$ has optimal value 
of at least $3\cdot M\cdot \lambda$. 
We present the results in the following two lemmas.

\smallskip\noindent{\em One basic property of Markov chains.}
In both the lemmas we will use the following \emph{basic property} of a Markov chain.
Consider a Markov chain with arbitrary rewards, and closed recurrent set $C$ of
the Markov chain. 
Let $\alpha$ be the expected mean-payoff value from a starting state $s$ in $C$ 
(the expected mean-payoff value is independent of the start state since $C$ is a 
closed recurrent set).
Then for all $s \in C$, we have $s \in \Almost_1(\LimInfAvg(\alpha))$ and for all
$\alpha'>\alpha$ we have $s \not\in \Positive_1(\LimSupAvg(\alpha'))$, i.e.,
almost-surely the mean-payoff value is at least $\alpha$, and for every $\alpha'>\alpha$ 
the mean-payoff is at least $\alpha'$ with probability~0.
The above basic property result follows by the almost-sure convergence to the invariant 
distribution (or Cesaro limit) for a closed recurrent set of a Markov chain.

\begin{lemma}
Given a DMPG $G$ with largest reward $M$ for a state $s$ in $G$ we have that 
the corresponding state in $G'$ belongs to $\Almost_1(\LimInfAvg(\frac{\val(s)}{3M}))$ 
and $\Positive_1(\LimInfAvg(\frac{\val(s)}{3M}))$, where $G'=\red(G)$.
\end{lemma}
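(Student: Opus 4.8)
The plan is to transport an optimal \emph{positional} strategy for player~1 from the DMPG $G$ into the stochastic game $G'$, and then to read off the limit-average guarantee from the stationary behaviour of the induced Markov chains, using the renewal structure of the gadget together with the \emph{basic property of Markov chains} stated above. First I would invoke positional determinacy of deterministic mean-payoff games to fix an optimal positional strategy $\sigma_1$ for player~1 in $G$, so that against every player~2 strategy the limit-average of the resulting play in $G$ is at least $\val(s)$. I then lift $\sigma_1$ to a positional strategy $\sigma_1'$ in $G'$: in each original player~1 state $\sigma_1'$ enters the gadget corresponding to the edge chosen by $\sigma_1$, while the gadget states are probabilistic and offer no choice. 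Fixing $\sigma_1'$ turns $G'$ into a player~2 MDP, and the goal becomes showing $\Pr_s^{\sigma_1',\sigma_2'}(\LimInfAvg(\frac{\val(s)}{3M}))=1$ for every player~2 strategy $\sigma_2'$.

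Second, I would reduce everything to the analysis of recurrent behaviour. Since $\sigma_1'$ is stationary, by the end-component theorem for MDPs every play almost surely settles into an end-component $E$ reachable from $s$, so it suffices to bound the limit-average inside each such $E$; equivalently, by positional optimality of the minimizer for mean-payoff in MDPs (the result behind \cite{LigLip69}), it suffices to treat positional $\sigma_2'$, for which $G'$ collapses to a finite Markov chain whose closed recurrent sets I analyse directly. A closed recurrent set $R$ corresponds to a simple cycle $C$ of $G$ that $\sigma_1$ together with the projection of $\sigma_2'$ executes, and by optimality of $\sigma_1$ the $G$-value of $C$, namely $\frac{\sum_{e\in C}\cost(e)}{|C|}$, is at least $\val(s)$. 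Using the \emph{Property of the reduction} (each edge $e$ of $C$ is simulated by an excursion through its gadget of expected length $3M$ and expected reward $\cost(e)$), the renewal-reward theorem gives that the stationary, hence expected, mean-payoff of $R$ equals $\frac{\sum_{e\in C}\cost(e)}{3M\cdot|C|}\geq \frac{\val(s)}{3M}$. The basic property then yields $R\subseteq \Almost_1(\LimInfAvg(\frac{\val(s)}{3M}))$, and since the limit-average depends only on the tail of a play and the play reaches some such $R$ with probability~1, the starting state $s$ lies in $\Almost_1(\LimInfAvg(\frac{\val(s)}{3M}))$; as almost-sure winning implies positive winning, membership in $\Positive_1(\LimInfAvg(\frac{\val(s)}{3M}))$ is immediate.

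The main obstacle is the passage from positional player~2 strategies (or single recurrent sets) to \emph{arbitrary} player~2 strategies: a priori player~2 could use memory to interleave several cycles, or to dwell in the reward-$0$ parts of a gadget, and one must rule out that such behaviour drives the limit-inferior average strictly below $\frac{\val(s)}{3M}$ on a set of positive probability. I would resolve this through the end-component characterisation: every end-component $E$ reachable under $\sigma_1'$ uses only edges available against $\sigma_1$, so the minimal mean-payoff player~2 can realise inside $E$ still corresponds to $G$-cycles of value at least $\val(s)$; since any long-run frequency mixture of cycle values that are all $\geq \val(s)$ is again $\geq \val(s)$, and since the empirical averages inside $E$ converge almost surely to the stationary value (exactly the content of the basic property), no strategy of player~2 can beat the positional bound. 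The remaining work is the purely routine renewal-reward bookkeeping for the gadget, which is supplied by the stated Property of the reduction.
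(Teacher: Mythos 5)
Your proposal is correct and follows essentially the same route as the paper's proof: fix an optimal positional strategy in $G$, lift it to a positional strategy in $G'$, reduce to positional player-2 counter-strategies via positional optimality for mean-payoff in MDPs, analyse the closed recurrent sets of the resulting Markov chain using the gadget's expected-length ($3M$) and expected-reward ($\cost(e)$) accounting, and conclude with the basic property of Markov chains together with the fact that almost-sure winning implies positive winning. The only difference is presentational: where the paper simply fixes a single positional best response $\sigma_2'$ and examines the unique closed recurrent set it induces, you spell out the end-component argument for why arbitrary (history-dependent) player-2 strategies cannot do better --- a step the paper leaves implicit by appeal to the MDP literature.
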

\begin{proof}
Consider an optimal positional (pure and stationary) strategy $\sigma_1$ for player 1 in $G$ 
(such an optimal strategy exists in DMPGs~\cite{EM79}). 
The strategy ensures that $\LimInfAvg$ is at least $\val(s)$ if the play starts in $s$ against any 
strategy for player~2. 
Consider the corresponding strategy $\sigma'_1$ of $\sigma_1$ in $G'$. 
Consider a positional best response strategy $\sigma'_2$ for player~2 in $G'$ to $\sigma'_1$, if the play starts 
in the state that corresponds to state $s$. 
The play in $G'$ given $\sigma_1'$ and $\sigma_2'$ reaches an unique closed recurrent set $C'$ with probability~1 
(i.e., the set $C'$ corresponds to the unique cycle $C$ reachable from $s$ given strategies $\sigma_1$ and
the corresponding strategy $\sigma_2$ of $\sigma_2'$, and the states introduced by the gadget).
We have the following desired properties.
First, in the closed recurrent set $C'$ of $G'$ the expected limit-average payoff is at least $\frac{\val(s)}{3\cdot M}$,
since the average reward of the cycle $C$ in $G$ is at least $\val(s)$, and in $G'$ every step of $G$ is 
simulated by $3\cdot M$ steps with the same total reward value in expectation.
Second, since we have a Markov chain, the expectation and almost-sure satisfaction coincide for 
closed recurrent set (the basic property of Markov chains).
Finally, in $G'$ the closed recurrent set $C'$ is reached with probability~1 given 
the strategies $\sigma_1'$ and $\sigma_2'$, from the starting state corresponding to $s$.
This shows that the corresponding state to $s$ in $G'$ belongs to 
$\Almost_1(\LimInfAvg(\frac{\val(s)}{3M}))$ and $\Positive_1(\LimInfAvg(\frac{\val(s)}{3M}))$.
\end{proof}

\begin{lemma}
Given a DMPG $G$ with largest reward $M$ for a state $s$ in $G$, if the corresponding state 
in $G'$ belongs to either $\Almost_1(\LimInfAvg(\lambda))$ or $\Positive_1(\LimInfAvg(\lambda))$, 
then $\val(s) \geq 3\cdot M \cdot \lambda$, where $G'=\red(G)$.
\end{lemma}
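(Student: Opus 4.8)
The plan is to argue the contrapositive and to collapse both cases into one: since $\Almost_1(\LimInfAvg(\lambda))\subseteq \Positive_1(\LimInfAvg(\lambda))$, it suffices to prove that membership of the copy of $s$ in $\Positive_1(\LimInfAvg(\lambda))$ forces $\val(s)\geq 3\cdot M\cdot \lambda$. So I assume $\val(s)<3\cdot M\cdot \lambda$ and exhibit a single player-2 strategy in $G'$ that caps the probability of $\LimInfAvg(\lambda)$ at~$0$ against every player-1 strategy. First I fix the opponent using positional determinacy of DMPGs: player~2 has an optimal positional strategy $\sigma_2$ in $G$ with $\max_{\sigma_1}\val(s,\sigma_1,\sigma_2)=\val(s)$ over positional $\sigma_1$. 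Let $\sigma_2'$ be the image of $\sigma_2$ in $G'$ (the player-2 states of $G'$ are exactly those of $G$, all gadget states being probabilistic); fixing $\sigma_2'$ turns $G'$ into a player-1 MDP.

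I first handle positional responses of player~1. A positional strategy $\sigma_1'$ in this MDP corresponds to a positional $\sigma_1$ in $G$, under which $(\sigma_1,\sigma_2)$ traces a unique cycle $C$ reachable from $s$; correspondingly the induced Markov chain has a unique closed recurrent set $C'$, reached from $s$ with probability~1, obtained from the states of $C$ by splicing in the gadgets. By the simulation property already recorded, one pass through the gadget of an edge $e$ contributes expected reward $\cost(e)$ over an expected $3\cdot M$ steps, so the renewal-reward theorem gives the expected mean-payoff of $C'$ equal to $\val(s,\sigma_1,\sigma_2)/(3\cdot M)\leq \val(s)/(3\cdot M)<\lambda$. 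By the basic property of Markov chains recalled above, from every state of $C'$ we have $\LimInfAvg=\LimSupAvg=\val(s,\sigma_1,\sigma_2)/(3\cdot M)<\lambda$ almost surely, and since $C'$ is reached with probability~1 we conclude $\Pr_s^{\sigma_1',\sigma_2'}(\LimInfAvg(\lambda))=0$.

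It remains to lift this bound from positional to arbitrary player-1 strategies, which I expect to be the main obstacle. The clean tool is the end-component theory of finite MDPs: under any strategy the play almost surely settles in an end-component, and $\Pr_s^{\sigma_1',\sigma_2'}(\LimInfAvg(\lambda))>0$ for some $\sigma_1'$ would require a reachable end-component of optimal (stationary-achievable) mean-payoff at least $\lambda$. Every end-component of the induced MDP projects onto a strongly connected closed subgraph of $G$ with $\sigma_2$ fixed, because each gadget, once entered at $v^1(e)$, forces both $v^2(e),v^3(e)$ and then its target state $t$ into the component; hence the best mean-payoff achievable inside it is the maximum mean-payoff of a cycle of $G$ lying in that subgraph, which is at most $\val(s)$. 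Therefore every reachable end-component has value at most $\val(s)/(3\cdot M)<\lambda$, so $\Pr_s^{\sigma_1',\sigma_2'}(\LimInfAvg(\lambda))=0$ for \emph{every} player-1 strategy $\sigma_1'$, contradicting membership in $\Positive_1(\LimInfAvg(\lambda))$. The delicate points are the end-component/cycle correspondence under the spliced gadgets and the MDP fact that a positive-probability $\LimInfAvg$-threshold forces a good reachable end-component; once these are in place, the quantitative estimate $\val(s)/(3\cdot M)<\lambda$ together with the basic Markov-chain property closes the argument.
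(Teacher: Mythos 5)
Your proof is correct, but it takes a genuinely different route from the paper's. The paper argues in the forward direction: it invokes the fact that in turn-based stochastic games positional strategies suffice for almost-sure/positive winning of $\LimInfAvg(\lambda)$ (asserted in one line, without proof or citation), takes such a positional witness for player~1 in $G'$, pulls it back to $G$, lets player~2 play a positional best response $\sigma_2$ there, and then analyzes the \emph{single} Markov chain induced by the corresponding pair in $G'$: its unique closed recurrent class $C'$ is the cycle $C$ with gadgets spliced in, the basic Markov-chain property forces the expected mean payoff of $C'$ to be at least $\lambda$, and the simulation property transfers this to mean weight at least $3\cdot M\cdot\lambda$ for $C$, whence $\val(s)\geq 3\cdot M\cdot\lambda$. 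You instead argue the contrapositive and fix player~2 first: positional determinacy of DMPGs (the same result the paper cites, due to Ehrenfeucht and Mycielski, in the companion lemma) gives an optimal positional $\sigma_2$, and after freezing its image $\sigma_2'$ you must handle \emph{arbitrary} player-1 strategies in the resulting MDP, which you do with end-component theory: almost-sure settling into an end component, the fact that within an end component the limsup mean payoff is almost surely bounded by the component's stationary value, and the observation that every end component absorbing a gadget entry $v^1(e)$ must absorb the whole gadget and the edge's target, hence projects to a strongly connected closed subgraph of $G$ with $\sigma_2$ fixed. What the paper's route buys is brevity, at the price of leaning on positional sufficiency for player~1's threshold-probability objective in stochastic games; what your route buys is that you never need that fact (replacing it with standard, citable MDP machinery), and you obtain a slightly stronger conclusion: a single positional player-2 strategy in $G'$ that caps the probability of $\LimInfAvg(\lambda)$ at~$0$ uniformly against all player-1 strategies. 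Both routes share the gadget simulation property and the basic Markov-chain property. One small step you should write out explicitly: every cycle of $G$ with $\sigma_2$ fixed that is reachable from $s$ has mean at most $\val(s)$; this holds because a simple path to such a cycle followed by looping on it can be realized by a positional player-1 strategy, so a cycle of larger mean would contradict the optimality of $\sigma_2$.
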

\begin{proof}
Consider a positional strategy for player~1 in $G'$ (such a strategy exists
since we consider turn-based stochastic games) to ensure $\LimInfAvg(\lambda)$ 
with probability~1 from the state corresponding to $s$.
Consider the corresponding strategy $\sigma_1$ in $G$ and a positional best response strategy $\sigma_2$ 
of player~2 in $G$, and consider the corresponding strategy $\sigma_2'$ of $\sigma_2$ in $G'$.
Let the unique cycle executed in $G$ given $\sigma_1$ and $\sigma_2$ from $s$ be $C$.
The unique closed recurrent set reached with probability~1 in $G'$ from $s$ given 
$\sigma_1'$ and $\sigma_2'$ is $C'$. Hence, the set $C'$ consists of the states in $C$ along 
with the gadget states of $C$ the reduction.
Since the state $s$ belongs to  $\Almost_1(\LimInfAvg(\lambda))$ or $\Positive_1(\LimInfAvg(\lambda))$ in $G'$,
it follows from the basic property of Markov chains that the expected average reward of the closed recurrent 
set $C'$ is at least $\lambda$.
Since every step of $G$ is simulated by $3\cdot M$ steps in $G'$ it follows that   
the average reward of the cycle $C$ must be at least $3\cdot M \cdot \lambda$.
This completes the proof.
\end{proof}

The following theorem follows from the two previous lemmas and establishes  the 
desired hardness result.

\begin{theorem}[Hardness for quantitative constraints]
Given a DMPG $G$, a state $s$ and a rational value $\lambda$, 
we have $\val(s) \geq \lambda$ in $G$ iff 
$s \in \Almost_1(\LimInfAvg(\frac{\lambda}{3\cdot M})) =\Positive_1(\LimInfAvg(\frac{\lambda}{3\cdot M}))$
in $G'=\red(G)$. 
\end{theorem}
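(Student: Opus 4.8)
The plan is to derive the theorem as an immediate consequence of the two preceding lemmas, together with a trivial monotonicity observation. First I would record that the winning sets are monotone in the threshold: if $\mu \geq \mu'$ then $\LimInfAvg(\mu) \subseteq \LimInfAvg(\mu')$ as sets of paths, and hence $\Almost_1(\LimInfAvg(\mu)) \subseteq \Almost_1(\LimInfAvg(\mu'))$ and $\Positive_1(\LimInfAvg(\mu)) \subseteq \Positive_1(\LimInfAvg(\mu'))$, since any strategy that ensures payoff at least $\mu$ a fortiori ensures payoff at least $\mu'$. I would also use the trivial inclusion $\Almost_1(\Phi) \subseteq \Positive_1(\Phi)$ for every objective $\Phi$.

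For the forward direction, I would assume $\val(s) \geq \lambda$ in $G$. The first of the two preceding lemmas places the corresponding state of $G'$ in $\Almost_1(\LimInfAvg(\frac{\val(s)}{3M}))$. Since $\frac{\val(s)}{3M} \geq \frac{\lambda}{3M}$, the monotonicity observation gives membership in $\Almost_1(\LimInfAvg(\frac{\lambda}{3M}))$, and the trivial inclusion then also gives membership in $\Positive_1(\LimInfAvg(\frac{\lambda}{3M}))$. For the backward direction it suffices to start from the weaker hypothesis $s \in \Positive_1(\LimInfAvg(\frac{\lambda}{3M}))$, as this is implied by almost-sure membership; the second preceding lemma, applied with threshold $\frac{\lambda}{3M}$, then yields $\val(s) \geq 3M \cdot \frac{\lambda}{3M} = \lambda$, as required.

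Finally, the collapse $\Almost_1(\LimInfAvg(\frac{\lambda}{3M})) = \Positive_1(\LimInfAvg(\frac{\lambda}{3M}))$ follows from the same two lemmas: one inclusion is the trivial $\Almost_1 \subseteq \Positive_1$, and for the reverse, if $s \in \Positive_1(\LimInfAvg(\frac{\lambda}{3M}))$ then the second lemma gives $\val(s) \geq \lambda$, whereupon the first lemma together with monotonicity returns $s$ to $\Almost_1(\LimInfAvg(\frac{\lambda}{3M}))$.

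I do not expect a genuine obstacle here, as the theorem is purely a repackaging of the two lemmas; the only points requiring care are chaining the threshold inequalities in the correct direction and observing that the positive-to-almost-sure collapse is exactly what the two lemmas jointly encode. In particular, the turn-based stochastic structure of $G'$ already guarantees positional pure optimal responses used inside those lemmas, so no separate strategy argument beyond invoking them is needed at this stage.
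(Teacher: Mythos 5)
Your proposal is correct and matches the paper's approach exactly: the paper's own proof consists of the single remark that the theorem ``follows from the two previous lemmas,'' and your argument is precisely the careful chaining of those lemmas (forward via the first lemma plus threshold monotonicity, backward via the second lemma applied with threshold $\frac{\lambda}{3\cdot M}$, and the collapse $\Almost_1=\Positive_1$ obtained by cycling through both). The monotonicity of the winning sets in the threshold and the inclusion $\Almost_1(\Phi)\subseteq\Positive_1(\Phi)$ are exactly the glue the paper leaves implicit.
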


\section{Discussion and Conclusion}
We first discuss two aspects of our results and then conclude.
We first remark how general rational-valued reward functions 
can be reduced to boolean reward functions for qualitative analysis.
We then remark about the optimality of our algorithm for positive 
winning.

\begin{remark}\label{rem:rew}
For all the results for almost-sure and positive winning with exact and 
limit qualitative constraints, we considered boolean reward functions.
For general rational-valued reward functions without loss of generality we
can consider that the rewards are in the interval $[0,1]$ by shifting 
and scaling of the rewards.
Formally, given a reward function $\cost$, and rational values $x$ and $y$, 
consider the modified reward function $\wh{\cost}=x\cdot (\cost + y)$ that 
assigns to every transition $e$ the reward value $x\cdot(\cost(e) +y)$.
The limit-average value of every path under the modified reward function 
is obtained by first adding $y$ to the limit-average value for the original 
reward function and then multiplying the result by $x$.
Hence given a rational-valued reward function $\cost$, let $[y_\ell,y_u]$ be 
the domain of the function.
The modified reward function $x\cdot(\cost+y)$ with $y=-y_{\ell}$ and 
$x=\frac{1}{(y_u-y_{\ell})}$ is a reward function with domain $[0,1]$.
Observe that all our results for boolean reward functions with 
$\LimInfAvg(1)$ and $\LimInfAvg(1-\epsilon)$, for all $\epsilon>0$ 
(and also for $\LimSupAvg$) also hold for reward functions with rewards 
in the interval $[0,1]$, since in our proof we can replace reward~0 by the 
maximal reward that is strictly less than~1.
Hence our results also extend to rational-valued reward functions where 
the objective is to ensure the maximal reward value.
\end{remark}

\begin{remark}
Note that both for positive and almost-sure winning with exact and limit 
qualitative constraints we have presented quadratic time algorithms.
For almost-sure winning the bound of our algorithm matches the best known
bound for the special case of concurrent reachability games.
For positive winning, concurrent reachability games can be solved in 
linear time.
However for the special case of turn-based deterministic mean-payoff games 
with boolean rewards, the almost-sure and the positive winning sets for the exact 
and the limit qualitative constraints coincide with the winning set for coB\"uchi games, 
where the goal is to ensure that eventually a set $C$ is reached and never left 
(the set $C$ is the set of states with reward~1). 
The current best known algorithms for turn-based deterministic coB\"uchi games
are quadratic~\cite{CH12}, and our algorithm matches the quadratic bound known for the special
case of turn-based deterministic games.
\end{remark}

\smallskip\noindent{\bf Concluding remarks.}
In this work we considered qualitative analysis of concurrent mean-payoff games.
For qualitative constraints, we established the qualitative determinacy results;
presented quadratic algorithms to compute almost-sure and positive winning sets
(matching the best known bounds for the simpler case of reachability objectives 
or turn-based deterministic games); and presented a complete characterization of 
the strategy complexity.
We established a hardness result for qualitative analysis with quantitative path constraints.

\bibliographystyle{plain}
\bibliography{diss}

\end{document}